\documentclass{article}

\usepackage{makeidx}
\usepackage{latexsym}
\usepackage{amsfonts}
\usepackage{amssymb}
\usepackage{amsmath}
\usepackage{amstext}
\usepackage{amsthm}
\usepackage{mathrsfs}
\usepackage{geometry}
\usepackage{graphicx}
\usepackage{mathabx}

\usepackage{color}


\addtolength{\textwidth}{1cm}
\addtolength{\hoffset}{-0.5cm}
\addtolength{\textheight}{3cm}
\addtolength{\voffset}{-1.5cm}
\newtheorem*{thmA}{Theorem A1}
\newtheorem*{thmB}{Theorem A2}
\newtheorem*{thmC}{Theorem B1}
\newtheorem*{thmD}{Theorem B2}
\newtheorem{theorem}{Theorem}

\newtheorem{lemma}[theorem]{Lemma}

\newtheorem{definition}[theorem]{Definition}

\newtheorem{proposition}[theorem]{Proposition}

\begin{document}

\title{A generalization to networks of Young's \\ characterization of the Borda rule
}

\author{\textbf{Daniela Bubboloni} \\
{\small {Dipartimento di Matematica e Informatica U.Dini} }\\
\vspace{-6mm}\\
{\small {Universit\`{a} degli Studi di Firenze} }\\
\vspace{-6mm}\\
{\small {viale Morgagni 67/a, 50134 Firenze, Italy}}\\
\vspace{-6mm}\\
{\small {e-mail: daniela.bubboloni@unifi.it}}\\
\vspace{-6mm}\\
{\small tel: +39 055 2759667}\\
\vspace{-6mm}\\
{\small https://orcid.org/0000-0002-1639-9525} \and \textbf{Michele Gori}
 \\
{\small {Dipartimento di Scienze per l'Economia e  l'Impresa} }\\
\vspace{-6mm}\\
{\small {Universit\`{a} degli Studi di Firenze} }\\
\vspace{-6mm}\\
{\small {via delle Pandette 9, 50127, Firenze, Italy}}\\
\vspace{-6mm}\\
{\small {e-mail: michele.gori@unifi.it}}\\
\vspace{-6mm}\\
{\small tel: +39 055 2759707}\\
\vspace{-6mm}\\
{\small https://orcid.org/0000-0003-3274-041X}}

\maketitle

\begin{abstract} 
\noindent We prove that, for any given set of networks satisfying suitable conditions, the net-oudegree network solution, the net-indegree network solution, and the total network solution are the unique network solutions on that set satisfying neutrality, consistency and cancellation. The generality of the result obtained allows to get an analogous result for social choice correspondences: for any given set of preference profiles satisfying suitable conditions, the net-oudegree social choice correspondence, the net-indegree social choice correspondence and the total social choice correspondence are the unique social choice correspondences on that set satisfying neutrality, consistency and cancellation. Using the notable fact that several well-known voting rules coincide with the restriction of net-oudegree social choice correspondence to appropriate sets of preference profiles, we are able to deduce a variety of new and known characterization theorems for the Borda rule, the Partial Borda rule, the Averaged Borda rule, the Approval Voting, the Plurality rule and the anti-Plurality rule, among which Young's characterization of the Borda rule and Fishburn's characterization of the Approval Voting.
\end{abstract}

\vspace{4mm}

\noindent \textbf{Keywords:} Network; Net-outdegree; Social choice theory; Voting; Borda rule; Approval voting.

\vspace{2mm}

\noindent \textbf{JEL classification:} D71.

\noindent \textbf{MSC classification:} 91B14, 05C20.

\section{Introduction}

A network on a set $A$ is a triple $N=(A,A^2_*,c)$, where $A$ is called the set of vertices of $N$, $A^2_*=\{(x,y)\in A^2:x\neq y\}$ is called the set of arcs of $N$  and $c$ is a function from $A^2_*$ to $\mathbb{Q}$ called the capacity associated with $N$. Networks can be used to represent a variety of situations. For instance, they can be used to mathematically represent competitions. Indeed, assume to have a set of teams which played a certain number of matches among each other and to know, for every team, the number of matches it won against any other team. Then we can represent that competition by a network $N=(A,A^2_*,c)$, where $A$ is the set of teams and, for every $(x, y) \in A^2_*$, $c(x, y)$ counts the number of matches in which $x$ beat $y$, assuming that ties are not allowed. As another example assume that $A$ is a set of alternatives. A network $N=(A,A^2_*,c)$ can represent a special type of information about the alternatives. Indeed, for every $(x, y) \in A^2_*$, $c(x, y)$ can be thought as a numerically evaluation of the strength of the statement ``$x$ is at least as good as $y$''. That strength might stem by considering several criteria to assess alternatives, assigning a suitable numerical weight to each criteria, and summing the weights of the criteria for which $x$ is at least as good as $y$. For instance, that strength can be computed by counting the number of voters/experts in a given panel declaring that $x$ is at least as good as $y$. 

Given a subset $\mathcal{D}$ of the set $\mathcal{N}(A)$ of all the networks on $A$, a network solution on $\mathcal{D}$ is a function from $\mathcal{D}$ to the set of nonempty subsets of $A$, a network method on $\mathcal{D}$ is instead a function from $\mathcal{D}$ to the set of binary relations of $A$. 
Typically, network solutions [methods] are defined by associating a score to each vertex of a network and considering the vertices maximizing the score [the complete and transitive relation on the vertices consistent with the scores]. 
If now $A$ is a set of teams and $\mathcal{D}$ is the set of networks that represent a complete description of wins and losses in a potential competition among the teams in $A$, we can see a network solution [method] on $\mathcal{D}$ as a procedure that allows to determine, for any conceivable competition involving teams in $A$, the set of winners [the team rankings]. If instead $A$ represents a set of alternatives and $\mathcal{D}$ is the set of networks that can be thought as a possible description of the numerical evaluations of the strength of all the statements of the type ``$x$ is at least as good as $y$'' where  $(x, y) \in A^2_*$, then we can see a network solution [method] as a procedure to determine, for any conceivable pairwise evaluation of the elements of $A$, the best alternatives [the social preference]. 

Many network solutions and network methods are considered and studied in the literature.
Langville and Meyer (2012) describe lots of ranking procedures used in different contexts (like social choice, voting, sport competitions, web search engines, psychology and statistics) that can be formalized as network methods.  Laslier (1997, Chapters 3 and 10) presents an extensive survey of network solutions defined on the set of tournaments\footnote{A tournament $R$ on $A$ is an asymmetric relation on $A$ such that, for every $(x,y)\in A^2_*$, $(x,y)\in R$ or $(y,x)\in R$. A relation $R$ on a set $A$ can be identified with a 0-1-network $N=(A,A^2_*,c)$ where, for every $(x,y)\in A^2_*$, 
$c(x,y)=1$ if $(x,y)\in R$ and $c(x,y)=0$ if $(x,y)\notin R$.} and on the set of balanced networks\footnote{A network $N=(A,A^2_*,c)$ is balanced if there exists a constant $k\in\mathbb{Q}$ such that, for every $(x,y)\in A^2_*$, $c(x,y)+c(y,x)=k$.}.

One of the simplest network solution [method], considered by Brans et al. (1986) and Bouyssou (1992), is obtained by associating with each vertex of a network a score given by its net-outdegree, namely the difference between its outdegree and its indegree, and then selecting the vertices maximizing the score value [ranking the vertices according to the scores]. More precisely, given $N=(A,A^2_*,c)\in \mathcal{N}(A)$, the score of $x\in A$ is defined as
\[
\delta^N(x)=\sum_{y\in A\setminus\{x\}}c(x,y)-\sum_{y\in A\setminus\{x\}}c(y,x).
\]
Such a network solution [method], which is defined on $\mathcal{N}(A)$, is called here net-outdegree network solution [method].\footnote{Brans et al. (1986) and Bouyssou (1992) refer to such network method as net-flow network method, but we think that the name net-outdegree better adheres to the rational behind the definition.} It is worth noting that the net-outdegree network solution [method] coincides with the classic Copeland solution [method] when applied to the $0$-$1$ networks corresponding to complete relations and then, in particular, corresponding to tournaments; with the flow network solution [method] proposed by Gvozdik (1987) and deepened by Belkin and Gvozdik (1989) and Bubboloni and Gori (2018) when applied to balanced networks with capacity whose values are nonnegative integers; with the outflow network solution [method] by van den Brink and Gilles (2009) when applied to balanced networks. 
Another simple network solution [method] is obtained by associating with each vertex of a network a score given by its net-indegree, namely the difference between its indegree and its outdegree, and then selecting the vertices maximizing the score value [ranking the vertices according to the scores]. Such a network solution [method], which is defined on $\mathcal{N}(A)$, is called here net-indegree network solution [method]. In the paper we are mainly interested in the net-outdegree network solution, denoted by $\mathscr{O}$, and the net-indegree network solution, denoted by $\mathscr{I}$. Of course, we have that, for every $N\in \mathcal{N}(A)$,
\[
\mathscr{O}(N)=\underset{x\in A}{\mathrm{argmax}}\,\delta^N(x)\quad\mbox{ and }\quad \mathscr{I}(N)=\underset{x\in A}{\mathrm{argmax}}\,\left(-\delta^N(x)\right)=\underset{x\in A}{\mathrm{argmin}}\,\delta^N(x).
\]
In order to better explain our findings, it is  also important to consider the total network solution, denoted by $\mathscr{T}$, which associates the whole set of alternatives with any element of $\mathcal{N}(A)$.

In order to evaluate the quality of a network solution is certainly important to understand whether it satisfies desirable properties. 
Among the possible properties one can consider there are are neutrality, consistency and cancellation. A network solution $\mathscr{F}$ satisfies neutrality if the names of the elements of $A$ are immaterial to the selection process; consistency if, every time the outcomes of $\mathscr{F}$ at two networks have nonempty intersection, then that intersection must be the outcome of $\mathscr{F}$ at the sum of the two networks; cancellation if $\mathscr{F}$ selects all the vertices when applied to reversal symmetric networks, namely those networks whose capacity of every arc $(x,y)$ equals the capacity of the arc $(y,x)$.
Interpreting $A$ as a set of teams, networks\footnote{We are referring in this example to networks with capacity having values on the set of nonnegative integers.} as a complete description of the wins and losses in a competition (where ties are not allowed), and a network solution as a procedure for selecting the winners of each possible competition, the above described properties sound very natural. Indeed, neutrality implies that no team or group of teams has an exogenous advantage on the others; consistency means that if there are teams that are the winners of a competition $N$ and also of a competition $N'$, then those teams are exactly the winners of the competition that considers the outcomes of all the matches played in $N$ and $N'$; cancellation means that if every team won and lost the same number of times against any other team, then all the teams are the winners.

The first main result of this paper is the following theorem.\footnote{Theorem A1 corresponds to Theorem \ref{characterization-net}. In the paper, given $A$, $B$ and $C$ sets with $B\subseteq A$ and $f:A\to C$, we denote with $f_{|B}$ the restriction of $f$ to $B$, that is, $f_{|B}:B\to C$ is defined, for every $b\in B$, by $f_{|B}(b)=f(b)$.}

\begin{thmA}
Let $\mathcal{D}$ be a regular subset of $\mathcal{N}(A)$. Then $\mathscr{O}_{|\mathcal{D}}$, $\mathscr{I}_{|\mathcal{D}}$ and $\mathscr{T}_{|\mathcal{D}}$ are the unique network solutions on $\mathcal{D}$ satisfying neutrality, consistency and cancellation.
\end{thmA}

A set of networks is regular if it satisfies the technical but mild conditions described in Definition \ref{regular-net}. A variety of interesting sets of networks particularly useful for applications are regular: the set of networks [balanced networks], the set of networks [balanced networks] whose capacity has nonnegative values, and the set of networks [balanced networks] whose capacity has values in the set of nonnegative integers (Theorem \ref{reg-scc}). Note that, in particular, Theorem A1 immediately implies the following theorem that provides a characterization of the restriction of the net-outdegree network solution to any regular set of networks.\footnote{Theorem A2 corresponds to Theorem \ref{crucial-net}.}

\begin{thmB}
Let $\mathcal{D}$ be a regular subset of $\mathcal{N}(A)$. Then $\mathscr{O}_{|\mathcal{D}}$ is the unique network solution on $\mathcal{D}$ satisfying neutrality, consistency, cancellation and $\mathscr{O}$-coherence.
\end{thmB}

A network solution $\mathscr{F}$ on $\mathcal{D}$ satisfies $\mathscr{O}$-coherence if there exists a network $N$ in $\mathcal{D}$ such that $\mathscr{F}(N)=\mathscr{O}(N)$, $\mathscr{F}(N)\neq \mathscr{I}(N)$ and $\mathscr{F}(N)\neq A$. In other words, for at least an element of $\mathcal{D}$, $\mathscr{F}$ coincides with $\mathscr{O}$ and is different from $\mathscr{I}$  and $\mathscr{T}$.

Some characterization results for suitable restrictions of the net-outdegree network solution [method] are available in the literature. Indeed, the net-outdegree network method restricted to tournaments is characterized by Rubenstein (1980); the net-oudegree network method [solution] restricted to complete relations is characterized by Henriet (1985); the net-oudegree network method restricted to the set of networks whose capacity has values in $[0,1]$ is characterized by Bouyssou (1992); the net-outdegree network method on the set of all the networks whose capacity has values in the set of nonnegative integers is characterized by Barber\`a and Bossert (2023).
It is worth noticing that, differently from our result, all those characterizations involve a monotonicity property, while none of them considers consistency. Thus, Theorem A2 is qualitatively not comparable with those results.

The proof of Theorem A1 is mainly inspired by the proof of the characterization theorem of the Borda rule by Young (1974). Over the years, some scholars tried to simplify Young's arguments by avoiding linear algebra and graph theory and deduced weaker results following a different approach (see, for instance, Hansson and Sahlquist, 1976; see also the comments about the proof of Young's result on the Borda rule in Al\'os-Ferrer, 2006). Differently from those authors and in line with Young's approach, in this paper we exploit linear algebra, graph theory, and also network theory to generalize Young's findings to the framework of network solutions.\footnote{The idea of associating a certain weighted graph with a preference profile, as well as the use of linear algebra, appears also in Zwicker (1991), Duddy et al. (2016) and Brandl and Peters (2019).}   Ultimately, such a research approach turned out fruitful, since it not only led to Theorems A1 and A2 but also allowed, as a byproduct, to determine a common root to a variety of characterization theorems in social choice theory, as described below.

Consider a set $A$ of alternatives and a countable set $V$ of potential voters whose preferences are modelled as binary relations on $A$.
A preference profile $p$ is a function from a nonempty and finite subset of $V$ to the set $\mathbf{R}(A)$ of the binary relations on $A$. Considering, for every pair of distinct alternatives $x$ and $y$, the number $c_p(x,y)$ of voters in the domain of $p$ who think that $x$ is at least as good as $y$, we construct the network $N(p)=(A, A^2_*,c_p)$ associated with $p$. 

Given a subset $\mathbf{D}$ of the set $\mathbf{R}(A)^*$ of all the possible preference profiles, a social choice correspondence ({\sc scc}) on $\mathbf{D}$ is a function from $\mathbf{D}$ to the set of nonempty subsets of $A$. A {\sc scc} on $\mathbf{D}$ is then a procedure that allows to select some alternatives in $A$ on the basis of the preferences of some voters in $V$. The set $\mathbf{D}$ is usually determined by imposing that voters' preferences satisfy some properties like, for instance, being orders, linear orders or dichotomous orders on $A$.  Consider now the three {\sc scc} on $\mathbf{R}(A)^*$, called net-outdegree {\sc scc}, net-indegree {\sc scc} and total {\sc scc} and respectively denoted by $O$, $I$ and $T$, defined, for every preference profile $p$, by $O(p)=\mathscr{O}(N(p))$, $I(p)=\mathscr{I}(N(p))$ and $T(p)=A$. 

After finding conditions that guarantee that a social choice correspondence on a set $\mathbf{D}$ of preference profiles induces a network solution on the set of the networks of type $N(p)$ for $p\in \mathbf{D}$ (Proposition \ref{passaggio-reti}), we can exploit Theorem A1 to prove the second main result of the paper.\footnote{Theorem B1 corresponds to Theorem \ref{characterization-profiles}.}

\begin{thmC}
Let $\mathbf{D}$ be a regular subset of $\mathbf{R}(A)^*$. Then $O_{|\mathbf{D}}$, $I_{|\mathbf{D}}$ and $T_{|\mathbf{D}}$ are the unique social choice correspondences on $\mathbf{D}$ satisfying neutrality, consistency  and cancellation.
\end{thmC} 

A set of preference profiles is regular if it satisfies the technical but mild conditions described in Definition \ref{regular}.
Most of the classic domains of social choice correspondences are actually regular (Theorem \ref{reg-scc}).
A {\sc scc} $F$ satisfies neutrality if alternative names are immaterial; consistency\footnote{Some authors use the term ``reinforcement'' instead of ``consistency''  (see, for instance, Moulin, 1988; Young, 1988; Myerson, 1995).} if, given two preference profiles $p_1$ and $p_2$ referring to disjoint groups of voters and for which $F$ selects two subsets of $A$ with nonempty intersection, then $F$ selects that intersection when computed at the preference profile obtained by combining $p_1$ and $p_2$; cancellation if $F$ selects the whole set $A$ of alternatives on those preference profiles for which, for every pair of distinct alternatives $x$ and $y$, the number of voters who think that $x$ is at least as good as $y$ is the same as the number of voters who think that $y$ is at least as good as $x$. 

As an immediate consequence of Theorem B1 we can prove the following characterization theorem.\footnote{Theorem B2 corresponds to Theorem \ref{crucial}.}

\begin{thmD}
Let $\mathbf{D}$ be a regular subset of $\mathbf{R}(A)^*$. Then $O_{|\mathbf{D}}$ is the unique social choice correspondence on $\mathbf{D}$ satisfying neutrality, consistency, cancellation and $O$-coherence.
\end{thmD} 

A {\sc scc} $F$ on $\mathbf{D}$ satisfies $O$-coherence if  there exists a preference profile $p$ in $\mathbf{D}$ such that $F(p)=O(p)$, $F(p)\neq I(p)$ and $F(p)\neq A$. $O$-coherence is a very weak property and it is implied by the classic property of faithfulness on those domains of preference profiles where individual preferences are assumed to be orders (see Section \ref{faoc}). Theorem B2 has important consequences, once some further properties of the net-outdegree {\sc scc} are explained.

First of all, note that, for every $p\in \mathbf{R}(A)^*$, $O(p)$ corresponds to the set of  of elements of $A$ maximizing the integer valued function defined, for every $x\in A$, by
\[
\sum_{y\in A\setminus\{x\}}c_p(x,y)-\sum_{y\in A\setminus\{x\}}c_p(y,x).
\]
That function appears in Young (1974) as an alternative way to define the Borda scores of alternatives when voters' preferences are linear orders on $A$. In the same paper, the author also suggests the possibility to use that function to define the Borda scores of the alternatives when voter's preferences are orders or partial orders and of course that idea can be naturally extended to get a {\sc scc} on $\mathbf{R}(A)^*$. Indeed, $O$ turns out to be an unifying object since it coincides with classic and well-known {\sc scc}s when its domain is restricted to special subsets of $\mathbf{R}(A)^*$. To be more precise, $O$ coincides 
with the Borda rule when voters' preferences are linear orders; 
with the generalized Borda rule 
when voters' preferences are orders; 
with the Partial Borda rule 
when voters' preferences are partial orders;
with the Averaged Borda rule 
when voters' preferences are top truncated orders;
with the Approval Voting when voters' preferences are  dichotomous orders;
with the Plurality rule when voters' preferences are  dichotomous orders with a single alternative better than all the others; 
with the anti-Plurality rule when voters' preferences are  dichotomous orders with a single alternative worse than all the others (see Section \ref{Ouguale}).

Properly qualifying the set $\mathbf{D}$ of preference profiles, Theorem B2 allows to obtain many characterization theorems for classic {\sc scc}s (Theorem \ref{lista}). Those theorems imply in turn many known characterization theorems. In particular, we deduce from our results the characterizations of the Borda rule by Young (1974, Theorem 1), of the Partial Borda rule by Cullinan (2014, Theorem 2), of the Averaged Borda rule by Terzopoulou and Endriss (2021, Theorem 6), of the Approval Voting by Fishburn (1979, Theorem 4), of the Plurality rule by Sekiguchi (2012, Theorem 1), of the anti-Plurality rule by Kurihara (2018, Theorem 1). Those characterizations appear in the paper as Theorems \ref{young-theorem},  \ref{culli-theorem},  \ref{Fishburn}, \ref{terzo-endriss}, \ref{Seki} and \ref{kuri}, respectively. We also prove in Theorem \ref{borda-young} that neutrality, consistency, cancellation and faithfulness characterize the generalized Borda rule, a result stated without proof by Young (1974).

\section{Networks and their properties}\label{sec-network}

\subsection{Preliminary notation}

We denote by $\mathbb{N}$ the set of positive integers and we set $\mathbb{N}_0\coloneq \mathbb{N}\cup\{0\}$ and $\mathbb{Q}_+\coloneq \{q\in\mathbb{Q}:q\ge 0\}$. For $k\in \mathbb{N}$ we set $[k]\coloneq \{x\in \mathbb{N}: x\leq k\}$ and $[k]_0\coloneq \{x\in \mathbb{N}_0: x\leq k\}$.   
For $A, B, C$ sets and functions $f:A\rightarrow B$, $g:B\rightarrow C$ we denote by $gf:A\rightarrow B$ the composition of $f$ and $g$, that is the
function defined, for every $a\in A$, by $gf(a)=g(f(a)).$

Given $X$  a finite set, we denote by $|X|$ the size of $X$,  by $X^2_d$ the set $\{(x,y)\in X^2:x=y\}$, by $X^2_*$ the set $X^2\setminus X^2_d$, by $P(X)$ the set of subsets of $X$, by $P_*(X)$ the set of nonempty subsets of $X$, by $P_k(X)$ the set of subsets of $X$ of size $k$ with $k\in \mathbb{N}$.

We denote by $\mathrm{Sym}(X)$ the set of the permutations of $X$. 
Recall that $\mathrm{Sym}(X)$ is a group under composition of functions. If $|X|=n\geq 2$, given distinct $x_1,\dots,x_k\in X$ with $2\leq k\leq n$, we denote by $(x_1\cdots x_k)$ the permutation in $\mathrm{Sym}(X)$ defined by $(x_1\cdots x_k)(x_i)=x_{i+1}$ for all $i\in [k-1]$, $(x_1\cdots x_k)(x_k)=x_{1}$ and $(x_1\cdots x_k)(x)=x$ for all $x\in X\setminus\{x_1,\dots, x_k\}$. The identity function on $X$ is denoted by $id$. Thus, $id\in\mathrm{Sym}(X)$ and $id$ is defined, 
for every $x\in X$, by $id(x)=x$.

Throughout the paper vector spaces are meant to be defined on the field $\mathbb{Q}$. Let $V$ be a vector space. If $W\subseteq V$ is a subspace of $V$, we write $W\le V$. Consider now $\mathbb{X}\subseteq \mathbb{Q}$ and $Z\subseteq V$. We denote by $\mathbb{X}Z$ the set of the finite linear combinations with coefficients in $\mathbb{X}$ of elements in $Z$.  As usual, we include among those linear combinations the empty linear combination and assume that it equals the null vector $0$ of $V$. In particular, $\mathbb{X}\varnothing=\varnothing Z=\{0\}$. Note also that $\mathbb{Q}Z$ is the subspace of $V$ generated by $Z$.

\subsection{Networks}

Let $A$ be a finite set with $|A|=m\ge 2$. 
A  network on $A$ is a triple $N=(A,A^2_*,c)$, where $c$ is a function from $A^2_*$ to $\mathbb{Q}$.
We say that $A$ is the set of vertices of $N$, $A^2_*$ is the set of  arcs of $N$ and $c$ is the capacity associated with $N$.
The set of networks on $A$ is denoted by $\mathcal{N}(A)$.

Let $N=(A,A^2_*,c)\in\mathcal{N}(A)$. The reversal of $N$ is the network $N^r=(A,A^2_*,c^r)\in\mathcal{N}(A)$ where, 
for every $(x,y)\in A^2_*$, $c^r(x,y)=c(y,x)$. If $N=N^r$ we say that $N$ is reversal symmetric.
Given $\psi\in \mathrm{Sym}(A)$, we set $N^{\psi}=(A,A^2_*,c^{\psi}) \in\mathcal{N}(A)$ where $c^{\psi}$ is defined, for every $(x,y)\in A^2_*$, by
$c^{\psi}(x,y)=c(\psi^{-1}(x), \psi^{-1}(y))$. Given $k\in \mathbb{Q}$, $N$ is  
$k$-constant if, for every $(x,y)\in A^2_*$, $c(x,y)=k$;
$k$-balanced if, for every $(x,y)\in A^2_*$, $c(x,y)+c(y,x)=k$.
The unique $k$-constant network is denoted by $N(k)$. 
$N$ is balanced if $N$ is $k$-balanced for some $k\in\mathbb{Q}$; constant if $N$ is $k$-constant for some $k\in\mathbb{Q}$.

Given $N=(A,A^2_*,c^N)\in\mathcal{N}(A)$, $M=(A,A^2_*,c^M)\in\mathcal{N}(A)$ and $k\in\mathbb{Q}$, we set $k N=(A,A^2_*,kc^N)\in \mathcal{N}(A)$ 
and $N+M=(A,A^2_*,c^N+c^M)$. That defines a vector space structure on $\mathcal{N}(A)$ having as zero vector $N(0).$
Moreover,  for every $N,M\in \mathcal{N}(A)$, $h,k\in \mathbb{Q}$ and $\psi,\sigma \in \mathrm{Sym}(A)$, the following properties hold true\footnote{In particular,  $\mathcal{N}(A)$ is a $\mathbb{Q} \mathrm{Sym}(A)$-module.}
\begin{equation*}\label{somma}
(hN+kM)^\psi=h N^\psi+ kM^\psi,\quad (N^\psi)^\sigma=N^{\sigma\psi},
\end{equation*}
\begin{equation*}\label{sigma}
(hN)^r=hN^r,\quad (N+M)^r=N^r+M^r,\quad (N^\psi)^r=(N^r)^\psi.
\end{equation*}

For every  $(x,y)\in A^2_*$, we denote by $N_{xy}$ the network on $A$
whose capacity is $1$ on the arc $(x,y)$ and $0$ elsewhere. Note that, for every $N=(A,A^2_*,c^N)\in\mathcal{N}(A)$, we have
\[
N=\sum_{(x,y)\in A^2_*}c^N(x,y)N_{xy}.
\]
As a consequence,  the set $\{N_{xy}: (x,y)\in A^2_*\}$ is a basis for the vector space $\mathcal{N}(A)$ and thus $\mathrm{dim}\, \mathcal{N}(A)=m(m-1)$. For every $x\in A$, we call the network
\[
N_{x}\coloneq \sum_{y\in A\setminus\{x\}}N_{xy}
\]
the outstar network on the vertex $x$. It is easy to prove that the networks $N_x$ for $x\in A$ are linearly independent in $\mathcal{N}(A)$.
Observe that, for every $(x,y)\in A^2_*$ and $\psi\in \mathrm{Sym}(A)$, we have $N_{xy}^\psi=N_{\psi(x)\psi(y)}$ and $N_{xy}^r=N_{yx}$. As a consequence,
for every $x\in A$ and $\psi\in \mathrm{Sym}(A)$, we have $N_{x}^\psi=N_{\psi(x)}$.

For every $B\subseteq A$, we set 
\begin{equation}\label{Bcompleto}
K_B\coloneq \sum_{(x,y)\in B_*^2}( N_{xy}+N_{yx})
\end{equation}
and call it the complete network on $B$. Note that $K_B$ is reversal symmetric. 

We finally introduce the concept of cycle network. Let $\gamma=x_1\,x_2\dots x_k\,x_1$  be a $k$-cycle of the complete directed graph $(A, A^2_*)$, for some $2\leq k\leq m$.\footnote{That means that $\gamma$ is the subgraph of $(A, A^2_*)$ having as vertex set the set of $k$ distinct  vertices  $\{x_1,x_2,\dots, x_k\}\subseteq A$ and arcs $(x_i,x_{i+1})$, for $1\leq i\leq k-1$, and $(x_k,x_1)$. }  The 
$k$-cycle network  associated with $\gamma$ is defined by
\[
C_{x_1\,x_2\dots x_k\,x_1}\coloneq \left(\sum_{i=1}^{k-1}N_{x_i\,x_{i+1}}\right)+N_{x_k\,x_1}.
\]
A network $C\in \mathcal{N}(A)$ is called a cycle network if it is a $k$-cycle network for some $2\leq k\leq m$.

\subsection{The net-outdegree}\label{net-sec}

Let $N=(A,A^2_*,c)\in\mathcal{N}(A)$. The net-outdegree associated with $N$ is the function $\delta^N:A\rightarrow\mathbb{Q}$ defined, for every $x\in A$, by
\begin{equation}\label{deltadef}
\delta^N(x)\coloneq \sum_{y\in A\setminus\{x\}}c(x,y)-\sum_{y\in A\setminus\{x\}}c(y,x).
\end{equation}
The number $\delta^N(x)$ is called net-outdegree of $x$ in $N$.
$N$ is called pseudo-symmetric if, for every $x\in A$, $\delta^N(x)=0$.

Let $\mathbb{Q}^A$ be the set of functions from $A$ to $\mathbb{Q}$. Of course, $\mathbb{Q}^A$ is a vector space of dimension $m$. We denote by $0$ its zero, that is, the function from $A$ to $\mathbb{Q}$ that associates with every $x\in A$ the value $0$.
Consider now the function  
\begin{equation*}\label{delta-lin}
\delta:\mathcal{N}(A)\rightarrow \mathbb{Q}^A,\qquad \delta(N)\coloneq \delta^N.
\end{equation*}
Thus, for every $N\in \mathcal{N}(A)$ and $x\in A$, we have $\delta(N)(x)=\delta^N(x)$. We call $\delta$ the net-outdegree. Recalling the definition \eqref{deltadef} and the definition of sum of networks,
it is immediately shown that, for every $N,M\in \mathcal{N}(A)$, $h,k\in \mathbb{Q}$, we have 
\[
\delta^{hN+kM}=h\delta^{N}+k\delta^{M}.
\]
In other words $\delta$ is a linear function between vector spaces. Thus, $\mathrm{Ker}(\delta)$ is the set of pseudo-symmetric networks. Moreover, recalling the definition of $N^{\psi}$, it is easily checked that, for every $N\in \mathcal{N}(A)$ and $\psi\in \mathrm{Sym}(A),$ we have
\begin{equation}\label{delta-psi}
\delta^{N^{\psi}}=\delta^{N}\psi^{-1}.
\end{equation}
Some basic computations involving $\delta$ are easy. For instance, for every $x,y\in A$ with $x\neq y$, we have that
\begin{equation*}\label{delta-base}
\delta(N_{xy})=\chi_{\{x\}}-\chi_{\{y\}}
\end{equation*}
and hence
\begin{equation}\label{delta-spruzzo}
\delta(N_{x})=(m-1) \chi_{\{x\}}-\chi_{A\setminus \{x\}}
\end{equation}
where, for every  $S\subseteq A$, $\chi_{S}\in \mathbb{Q}^A$ denotes the characteristic function of $S$.

We now study the image of $\delta$ and the dimension of the kernel of $\delta$.
In particular, by Proposition \ref{im-delta-1}, we understand that the image of $\delta$ is generated by the image through $\delta$ of the outstar networks.

\begin{proposition}\label{im-delta-1} 
$\delta(\mathcal{N}(A))=\mathbb{Q}\{\delta(N_x):x\in A\}$ and $\mathrm{dim\, } \delta(\mathcal{N}(A))= m-1$.
\end{proposition}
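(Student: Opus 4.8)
The plan is to exploit the fact, recalled just above in the excerpt, that $\{N_{xy}:(x,y)\in A^2_*\}$ is a basis of $\mathcal{N}(A)$. Since $\delta$ is linear, its image is the $\mathbb{Q}$-span of the images of any spanning set, so
\[
\delta(\mathcal{N}(A))=\mathbb{Q}\{\delta(N_{xy}):(x,y)\in A^2_*\}=\mathbb{Q}\{\chi_{\{x\}}-\chi_{\{y\}}:(x,y)\in A^2_*\},
\]
where the second equality uses $\delta(N_{xy})=\chi_{\{x\}}-\chi_{\{y\}}$. The whole argument then reduces to comparing this span with $\mathbb{Q}\{\delta(N_x):x\in A\}$.

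The key computation I would carry out is the identity
\[
\delta(N_x)-\delta(N_y)=m\,\delta(N_{xy}),\qquad (x,y)\in A^2_*.
\]
Indeed, from \eqref{delta-spruzzo} one gets $\delta(N_x)-\delta(N_y)=(m-1)(\chi_{\{x\}}-\chi_{\{y\}})-(\chi_{A\setminus\{x\}}-\chi_{A\setminus\{y\}})$, and since $\chi_{A\setminus\{y\}}-\chi_{A\setminus\{x\}}=\chi_{\{x\}}-\chi_{\{y\}}$ (both sides agree at $x$, at $y$, and at every other vertex), this equals $m(\chi_{\{x\}}-\chi_{\{y\}})=m\,\delta(N_{xy})$. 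Consequently $\delta(N_{xy})=\tfrac1m\bigl(\delta(N_x)-\delta(N_y)\bigr)$, which shows $\delta(\mathcal{N}(A))\subseteq\mathbb{Q}\{\delta(N_x):x\in A\}$. The reverse inclusion is immediate because each $N_x\in\mathcal{N}(A)$, so equality holds and the first assertion is proved.

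For the dimension, I would identify the common span with the zero-sum hyperplane $W:=\{f\in\mathbb{Q}^A:\sum_{x\in A}f(x)=0\}$. Each generator $\chi_{\{x\}}-\chi_{\{y\}}$ lies in $W$, hence $\delta(\mathcal{N}(A))\subseteq W$; conversely, fixing $x_0\in A$, the $m-1$ functions $\chi_{\{x\}}-\chi_{\{x_0\}}$ with $x\neq x_0$ are linearly independent and all lie in $\delta(\mathcal{N}(A))$, so they form a basis of the $(m-1)$-dimensional space $W$. Therefore $\delta(\mathcal{N}(A))=W$ and $\mathrm{dim}\,\delta(\mathcal{N}(A))=m-1$.

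I do not expect a genuine obstacle here: everything rests on the explicit formula \eqref{delta-spruzzo} and elementary bookkeeping with characteristic functions. The only point requiring a little care is the dimension count, where one must check both that every generator is zero-sum and that $m-1$ of them are already independent, so that the span is exactly the hyperplane $W$ and not something strictly smaller.
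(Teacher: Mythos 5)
Your proof is correct, but it takes a genuinely different route from the paper's for the first assertion. The paper never writes down an explicit relation between $\delta(N_{xy})$ and the $\delta(N_x)$'s; instead it squeezes dimensions: it first notes $\sum_{x\in A}\delta^N(x)=0$ for every $N$, so $\delta(\mathcal{N}(A))$ sits inside the zero-sum hyperplane and has dimension at most $m-1$, and then it proves by a direct computation with \eqref{delta-spruzzo} (evaluate at $x_m$ to get $\sum_i q_i=0$, then at each $x_j$ to get $(m-1)q_j=-q_j$) that $\delta(N_{x_1}),\dots,\delta(N_{x_{m-1}})$ are linearly independent, so the sub-span $\mathbb{Q}\{\delta(N_x):x\in A\}$ already has dimension $m-1$; equality of the two spaces then follows from equality of dimensions. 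You instead prove the span equality constructively via the identity $\delta(N_x)-\delta(N_y)=m\,\delta(N_{xy})$, which expresses every basis image $\delta(N_{xy})$ as $\tfrac1m\bigl(\delta(N_x)-\delta(N_y)\bigr)$, and you then get the dimension by identifying the image with the whole zero-sum hyperplane $W$, using the much easier independence of $\{\chi_{\{x\}}-\chi_{\{x_0\}}:x\neq x_0\}$. What your route buys: the inclusion $\delta(\mathcal{N}(A))\subseteq\mathbb{Q}\{\delta(N_x):x\in A\}$ becomes an explicit algebraic fact rather than a consequence of a dimension count, the independence check is trivial, and you obtain the sharper statement $\delta(\mathcal{N}(A))=W$ explicitly (the paper only gets it implicitly). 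What the paper's route buys: the linear independence of the $\delta(N_{x_i})$, $i\in[m-1]$, is established as a byproduct, and that computation foreshadows the identity $\delta^N(x_i)-\delta^N(x_j)=m(\lambda_i-\lambda_j)$ used later in the proof of Theorem \ref{subspace} — which is, in fact, the same identity underlying your argument.
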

\begin{proof} First of all note that, for every $N\in \mathcal{N}(A)$, we have
\[
\sum_{x\in A}\delta^N(x)=\sum_{x\in A}\left(\sum_{y\in A\setminus\{x\}}c(x,y)-\sum_{y\in A\setminus\{x\}}c(y,x)\right)
=\sum_{(x,y)\in A_*^2}c(x,y)-\sum_{(x,y)\in A_*^2}c(y,x)= 0.
\]
Thus
\[
\delta\left(\mathcal{N}(A)\right)\subseteq \left\{f\in \mathbb{Q}^A: \sum_{x\in A}f(x)=0\right\}\subsetneq\mathbb{Q}^A.
\]
As a consequence, $\delta\left(\mathcal{N}(A)\right)$ is a proper subspace of $\mathbb{Q}^A$ and hence
\begin{equation}\label{dim-delta-ineq}
\mathrm{dim}\, \delta\left(\mathcal{N}(A)\right)\leq m-1.
\end{equation}

Consider now the vectors $\delta(N_x)\in \mathbb{Q}^A$ for $x\in A$. Since 
\[
\sum_{x\in A}\delta(N_x)=\delta\left(\sum_{x\in A}N_x\right)=\delta(N(1))=0,
\]
those vectors are dependent. Choose now $x_1,\dots, x_{m-1}$ distinct elements of $A=\{x_1,\dots, x_{m-1},x_m\}$. We show that the vectors $\delta(N_{x_i})$ for $i\in[m-1]$ are independent.
Assume that $\sum_{i=1}^{m-1}q_i\delta(N_{x_i})=0$ for certain $q_i\in \mathbb{Q}$, $i\in[m-1]$. Let $f$ and $g$ be the functions in $\mathbb{Q}^A$ respectively defined by
\[
f=(m-1)\sum_{i=1}^{m-1}q_i\chi_{\{x_i\}}\quad\mbox{ and }\quad g=\sum_{i=1}^{m-1}q_i\chi_{A\setminus\{x_i\}}.
\]
Using \eqref{delta-spruzzo}, we have that $f=g$. Since $f(x_m)=g(x_m)$, we get $\sum_{i=1}^{m-1}q_i=0$. Thus, for every $j\in[m-1]$, from $f(x_j)=g(x_j)$ we get $(m-1)q_j= \sum_{i=1, i\neq j}^{m-1}q_i=-q_j$ and that implies $q_j=0$.

Thus, $\mathrm{dim\,} \delta\left(\mathcal{N}(A)\right)\geq m-1$. 
By \eqref{dim-delta-ineq}, we finally deduce that  
\[
\mathrm{dim\, } \delta\left(\mathcal{N}(A)\right)= m-1=
\mathrm{dim\,}\mathbb{Q}\{\delta(N_x):x\in A\},
\] 
and hence also $\delta\left(\mathcal{N}(A)\right)=\mathbb{Q}\{\delta(N_x):x\in A\}$.
\end{proof}

\begin{proposition}\label{ker-delta} 
$\mathrm{dim\, Ker}(\delta)=(m-1)^2.$
\end{proposition}
\begin{proof} 
 From Proposition \ref{im-delta-1}, we deduce 
 $$m(m-1)=\mathrm{dim}\, \mathcal{N}(A)=\mathrm{dim}\, \delta(\mathcal{N}(A))+\mathrm{dim\, Ker} (\delta)=m-1 +\mathrm{dim\, Ker} (\delta)$$ and therefore $\mathrm{dim\, Ker} (\delta)=(m-1)^2$.
\end{proof}

\subsection{Special sets of networks}

From now on
\begin{itemize}
\item $\mathcal{R}(A)$ denotes the set of reversal symmetric networks; 
\item $\mathcal{C}(A)$ denotes the set of constant networks; 
\item $\mathcal{B}_k(A)$ denotes the set of $k$-balanced networks, where $k\in \mathbb{Q}$; 
\item $\mathcal{B}(A)$ denotes the set of balanced networks; 
\item $\mathcal{PS}(A)$ denotes the set of pseudo-symmetric networks; 
\item $\mathcal{O}(A)$ denotes the set $\mathbb{Q}\{N_x:x\in A\}$;
\item for every $\mathbb{X}\subseteq \mathbb{Q}$, $\mathcal{N}_\mathbb{X}(A)$ denotes the set of networks whose capacity has values in $\mathbb{X}$.
\end{itemize}
Note that $\mathcal{PS}(A)=\mathrm{Ker}(\delta)$. We propose  now some propositions stating properties of the aforementioned sets of networks. Those propositions will turn out crucial in proving the main results of the paper. We remark that Proposition \ref{ps-3} is inspired by a reasoning proposed in Young (1974). The proof of Proposition \ref{generale-reti} is in the appendix.

\begin{proposition} \label{generale-reti} The following facts hold true.
\begin{itemize}
\item[$(i)$]
$\mathcal{C}(A)$, $\mathcal{R}(A)$, $\mathcal{B}_0(A)$, $\mathcal{B}(A)$, $\mathcal{PS}(A)$ and $\mathcal{O}(A)$ are subspaces\footnote{To be more precise, they are $\mathbb{Q} \mathrm{Sym}(A)$-submodules of the $\mathbb{Q} \mathrm{Sym}(A)$-module $\mathcal{N}(A)$. Note also that $\mathcal{B}_k(A)$ is a subspace of $\mathcal{N}(A)$ if and only if $k=0$.} of $\mathcal{N}(A)$. 
\item[$(ii)$]
\begin{itemize}
\item[$(a)$] $\mathrm{dim}\,\mathcal{C}(A)=1$;
\item[$(b)$] $\mathrm{dim}\,\mathcal{R}(A)=\frac{m(m-1)}{2}$;
\item[$(c)$] $\mathrm{dim}\,\mathcal{B}_0(A)=\frac{m(m-1)}{2}$;
\item[$(d)$] $\mathrm{dim}\,\mathcal{B}(A)=\frac{m(m-1)}{2}+1$;
\item[$(e)$] $\mathrm{dim\,}\mathcal{PS}(A)=(m-1)^2$;
\item[$(f)$] $\mathrm{dim}\,\mathcal{O}(A)=m$;
\item[$(g)$] $\mathrm{dim}\, \left(\mathcal{PS}(A)\cap \mathcal{B}(A)\right)=\frac{(m-1)(m-2)}{2}+1$.
\end{itemize}
\item[$(iii)$]
\begin{itemize}
\item[$(a)$] $\mathcal{R}(A)\subseteq \mathcal{PS}(A)$;
\item[$(b)$] $\mathcal{C}(A)= \mathcal{R}(A)\cap \mathcal{B}(A)$;
\item[$(c)$] $\mathcal{B}(A)= \mathcal{C}(A)\oplus \mathcal{B}_0(A)$;
\item[$(d)$] $\mathcal{N}(A)=\mathcal{R}(A)\oplus\mathcal{B}_0(A)=\mathcal{R}(A)+\mathcal{B}(A)$;
\item[$(e)$] $\mathcal{C}(A)=\mathcal{O}(A)\cap\mathcal{PS}(A)$;
\item[$(f)$] $\mathcal{N}(A)=\mathcal{PS}(A)+\mathcal{O}(A)$;
\item[$(g)$] $\mathcal{R}(A)=\left(\mathcal{O}(A)+\mathcal{R}(A)\right)\cap \mathcal{PS}(A)$.
\end{itemize}
\item[$(iv)$] For every $\mathcal{X}\in \{\mathcal{C}(A), \mathcal{R}(A), \mathcal{B}(A), \mathcal{PS}(A)\}$, $\mathcal{X}^r=\mathcal{X}$.

\end{itemize}
\end{proposition}

\begin{proposition}\label{costanti-sim} 
Let $N\in \mathcal{N}(A)$. Then $N\in \mathcal{C}(A)$ if and only if $N^{\psi}=N$ for all $\psi\in \mathrm{Sym}(A).$
\end{proposition}

\begin{proof} Assume that $N\in \mathcal{C}(A)$ is $k$-constant and let $\psi\in \mathrm{Sym}(A)$. Then, for every $(x,y)\in A^2_*$, we have $c^{\psi}(x,y)=c(\psi^{-1}(x), \psi^{-1}(y))=k=c(x,y)$, which means $N^{\psi}=N$. Conversely assume that $N^{\psi}=N$ for all $\psi\in \mathrm{Sym}(A)$. Then we have $c=c^{\psi}$ for  all $\psi\in \mathrm{Sym}(A)$. Let $(x,y)\in A^2_*$ and $(u,v)\in A^2_*$. Since $x\neq y$ and $u\neq v$, there exists $\psi\in \mathrm{Sym}(A)$ such that $\psi(u)=x$ and $\psi(v)=y$ such that $c(x,y)=c^{\psi}(x,y)=c(\psi^{-1}(x), \psi^{-1}(y))=c(u,v)$, and thus the function $c$ is constant.
\end{proof}

\begin{proposition}\label{ps-cycles}
Let $N\in\mathcal{PS}(A)\cap \mathcal{N}_{\mathbb{N}_0}(A)$. Then $N=N(0)$ or
there exist $s\in \mathbb{N}$ and a sequence $(C_i)_{i=1}^s$ of cycle networks such that $N=\sum_{i=1}^s C_i$.\footnote{Proposition \ref{ps-cycles} can be deduced by classic results on graph homology. However, since we could not find a statement perfectly in line with ours, we preferred to produce an original proof.}
\end{proposition}

\begin{proof}
Given $N=(A,A^2_*,c)\in \mathcal{N}_{\mathbb{N}_0}(A)$, let us define
\[
G^N\coloneq \sum_{(x,y)\in A^2_*} c(x,y)\in\mathbb{N}_0.
\]
For every $r\in \mathbb{N}_0$, let 
$\mathcal{PS}^r(A)\coloneq \{N\in \mathcal{PS}(A)\cap \mathcal{N}_{\mathbb{N}_0}(A): G^N= r\}$.
We complete the proof showing that, for every $r\in \mathbb{N}_0$, the following statement holds true:
\begin{quote}
Stat$(r)$: if $N\in \mathcal{PS}^r(A)$, then $N=N(0)$ or
there exist $s\in \mathbb{N}$ and a sequence $(C_i)_{i=1}^s$ of cycle networks such that $N=\sum_{i=1}^s C_i$.
\end{quote}
We work by induction on $r$.
If $r=0$, $\mathcal{PS}^0(A)=\{N(0)\}$ and Stat$(r)$ is true.
Let $r\ge 0$ and assume that Stat$(t)$ is true for every $t\le r$. We have to prove that  Stat$(r+1)$ is true.
Let $N=(A,A^2_*,c)\in  \mathcal{PS}^{r+1}(A)$. For every $x\in A$, define
\[
I(x)\coloneq \{y\in A\setminus\{x\}: c(x,y)\ge 1\},\quad O(x)\coloneq \{y\in A\setminus\{x\}: c(y,x)\ge 1\},
\]
and consider the set $A'=\{x\in A: O(x)\neq\varnothing\}$. 
First of all, note that, since $G^N=r+1\geq 1$, there exists $x^*\in A'$. Moreover, for every $x\in A'$, $O(x)\neq\varnothing$ and, since in particular $N\in \mathcal{PS}(A)$, we have that $I(x)\neq\varnothing$. Finally, for every $x\in A'$, $I(x)\subseteq A'$ since if $y\in I(x)$, then $x\in O(y)\neq\varnothing$. 

For every $x\in A'$, pick an element of $I(x)$ and denote it by $d(x)$. Thus, $d$ defines a function from $A'$ to $A'$ such that, for every $x\in A'$, $d(x)\neq x$.
Consider now the sequence $(x_j)_{j=1}^\infty$ in $A'$ recursively defined as follows: $x_1=x^*$ and, for every $j\in\mathbb{N}$, 
$x_{j+1}=d(x_{j})$. Thus, for every $j\in\mathbb{N}$, $x_{j+1}\in I(x_j)$ and thus $c(x_j,x_{j+1})\ge 1$.
Let 
\[
J\coloneq \{j\in \mathbb{N}: \exists k\in \mathbb{N}, k< j, \hbox{ such that } x_j=x_k\}.
\]
We show that $J\neq \varnothing$. Assume, by contradiction, that $J=\varnothing$. Then, for every $j,k\in \mathbb{N}$ with $k< j$, we have that $x_{j}\neq x_k$. 
As a consequence, the vertices in the sequence $(x_j)_{j=1}^\infty$ are distinct, against the fact that $A$ is finite. 
We can then consider $j^*\coloneq \min J$. Then there exists $k^*\in \mathbb{N}$ with $k^*<j^*$ such that $x_{j^*}=x_{k^*}$,
and, for every $j,k\in\mathbb{N}$ with $k<j<j^*$, $x_{j}\neq x_k$, that is, the vertices $x_1,\ldots, x_{j^*-1}$ are distinct. Since, for every $j\in\mathbb{N}$, 
$x_j\neq x_{j+1}$, we also have that $k^*\neq j^*-1$. Then $x_{k^*}\dots x_{j^*-1}x_{j^*}$ is a cycle of the complete directed graph $(A, A^2_*)$ because the vertices $x_{k^*},\dots, x_{j^*-1} $ are at least two and distinct and $x_{j^*}=x_{k^*}$.

Consider then the cycle network $C\coloneq \sum_{j=k^*}^{j^*-1} N_{(x_j,x_{j+1})}$ and note that $N-C\in \mathcal{N}_{\mathbb{N}_0}(A)$, because, for every $j\in \{k^*,\dots, j^*-1\}$, we have that $c(x_j,x_{j+1})\geq 1$. Moreover, we have that $N-C\in \mathcal{PS}(A)$ and $G^{N-C}=t$ for some $t\leq r$. It follows that 
$N=(N-C)+C,$ with $N-C\in \mathcal{PS}^t(A)$ for some $t\leq r$. By the inductive assumption, we have then that $N-C=N(0)$ or 
there exist $s\in \mathbb{N}$ and a sequence $(C_i)_{i=1}^s$ of cycle networks such that $N-C=\sum_{i=1}^s C_i$. Thus, $N=C$ or $N=C+\sum_{i=1}^s C_i$ and that proves that  Stat$(r+1)$ is true, as desired.
\end{proof}

\begin{proposition}\label{ps-2}
$\mathcal{PS}(A)=\mathbb{Q}\{C\in \mathcal{N}(A):C \hbox{ is a } k\hbox{-cycle, for some } 2\leq k\leq m \}$.
\end{proposition}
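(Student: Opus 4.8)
Write $\mathcal{K}:=\{C\in\mathcal{N}(A):C\text{ is a }k\text{-cycle for some }2\le k\le m\}$, so that the claim reads $\mathcal{PS}(A)=\mathbb{Q}\mathcal{K}$. The plan is to prove the two inclusions separately, the reverse one being immediate and the forward one being the substantive part.

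For $\mathbb{Q}\mathcal{K}\subseteq\mathcal{PS}(A)$, I would first check that every cycle network lies in $\mathcal{PS}(A)$. If $C=C_{x_1\,x_2\cdots x_k\,x_1}$, then each vertex $x_i$ on the cycle has exactly one outgoing arc of capacity $1$ (towards $x_{i+1}$, indices read cyclically) and one incoming arc of capacity $1$ (from $x_{i-1}$), while every vertex off the cycle is isolated; hence $\delta^{C}(x)=0$ for all $x\in A$, i.e.\ $C\in\mathcal{PS}(A)$. Since $\mathcal{PS}(A)$ is a vector subspace of $\mathcal{N}(A)$ (being the kernel of the linear map $N\mapsto\delta^N$, as recalled above), it contains every rational linear combination of elements of $\mathcal{K}$, giving $\mathbb{Q}\mathcal{K}\subseteq\mathcal{PS}(A)$.

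The substantive inclusion is $\mathcal{PS}(A)\subseteq\mathbb{Q}\mathcal{K}$, and the idea is to reduce an arbitrary pseudo-symmetric network to the nonnegative-integer case already handled by Proposition \ref{ps-cycles}. Given $N\in\mathcal{PS}(A)$, its capacity takes finitely many rational values, so there is $D\in\mathbb{N}$ with $DN\in\mathcal{N}_{\mathbb{Z}}(A)$, and $DN\in\mathcal{PS}(A)$ because $\mathcal{PS}(A)$ is a subspace. To remove the negative capacities I would add a large multiple of the complete network $K_A$. Recall $K_A\in\mathcal{R}(A)\subseteq\mathcal{PS}(A)$ and that $K_A$ has capacity $2$ on every arc; hence for $M\in\mathbb{N}$ large enough (namely $2M$ at least the absolute value of the most negative capacity of $DN$) the network $DN+MK_A$ has capacity in $\mathbb{N}_0$ on every arc. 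Thus $DN+MK_A\in\mathcal{PS}(A)\cap\mathcal{N}_{\mathbb{N}_0}(A)$, and Proposition \ref{ps-cycles} gives that it equals $N(0)$ or a finite sum of cycle networks; in either case $DN+MK_A\in\mathbb{Q}\mathcal{K}$.

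It then remains to observe that $K_A$ itself lies in $\mathbb{Q}\mathcal{K}$: directly, $K_A=\sum_{(x,y)\in A^2_*}(N_{xy}+N_{yx})$ is a sum of the $2$-cycle networks $C_{x\,y\,x}=N_{xy}+N_{yx}$, so $MK_A\in\mathbb{Q}\mathcal{K}$ as well. Since $\mathbb{Q}\mathcal{K}$ is a vector subspace, from $DN=(DN+MK_A)-MK_A$ we obtain $DN\in\mathbb{Q}\mathcal{K}$, and finally $N=\tfrac1D(DN)\in\mathbb{Q}\mathcal{K}$. The only delicate point is precisely this passage to the nonnegative-integer setting: Proposition \ref{ps-cycles} applies only to networks in $\mathcal{N}_{\mathbb{N}_0}(A)$, and the scaling-and-shifting device above --- clearing denominators with $D$ and correcting the sign with $MK_A$, both operations preserving pseudo-symmetry --- is exactly what bridges that gap. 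Everything else is routine, as both $\mathcal{PS}(A)$ and $\mathbb{Q}\mathcal{K}$ are subspaces closed under the linear operations used.
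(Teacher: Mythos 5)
Your proof is correct and follows essentially the same route as the paper's: both clear denominators with a positive integer and add a reversal-symmetric network with constant capacity to land in $\mathcal{PS}(A)\cap\mathcal{N}_{\mathbb{N}_0}(A)$, apply Proposition \ref{ps-cycles}, and then subtract the shift inside the span of cycle networks. The only cosmetic difference is that your shift $MK_A$ equals the constant network $N(2M)$ used by the paper (as $N(h)$), and you dispose of it by writing $K_A$ directly as a sum of $2$-cycle networks, whereas the paper applies Proposition \ref{ps-cycles} a second time to $N(h)$ --- the two observations are interchangeable.
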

\begin{proof}
If $C$ is a $k$-cycle then trivially $C\in \mathcal{PS}(A)$. Thus, $\mathbb{Q}\{C\in \mathcal{N}(A):C \hbox{ is a } k\hbox{-cycle, for some } 2\leq k\leq m \}\subseteq\mathcal{PS}(A)$. Let us prove the opposite inclusion. Let $N\in \mathcal{PS}(A)$. 
It is immediate to observe that there are $l,h\in \mathbb{N}$ such that $\widetilde N\coloneq lN+N(h)\in  \mathcal{N}_{\mathbb{N}_0}(A)$.
Since $N\in \mathcal{PS}(A)$, $N(h)\in \mathcal{C}(A)\le \mathcal{PS}(A)$ and $\mathcal{PS}(A)$ is a subspace of $\mathcal{N}(A)$, we deduce that 
$\widetilde N\coloneq lN+N(h)\in  \mathcal{PS}(A)\cap  \mathcal{N}_{\mathbb{N}_0}(A)$.
By Proposition \ref{ps-cycles}, we then have $\widetilde N=N(0)$ or $\widetilde N=\sum_{i=1}^s C_i$ for some cycle networks $C_i$. It follows that $N=-\frac{1}{l}N(h)$ or $N=\frac{1}{l}\widetilde N-\frac{1}{l}N(h)=\frac{1}{l}\sum_{i=1}^s C_i-\frac{1}{l}N(h)$. Since $N(h)\in \mathcal{C}(A)\subseteq \mathcal{PS}(A)\cap \mathcal{N}_{\mathbb{N}_0}(A)$, using again Proposition \ref{ps-cycles}, we have that $N(h)$ is a sum of cycle networks and thus $N$ is a $\mathbb{Q}$-linear combination of cycle networks.
\end{proof}

\begin{proposition}\label{ps-3}  $\mathcal{PS}(A)=\mathbb{Q}\{C\in \mathcal{N}(A):C \hbox{ is a } m\hbox{-cycle } \}+\mathcal{R}(A).$
\end{proposition}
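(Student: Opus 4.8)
The plan is to split the claimed equality into its two inclusions, putting essentially all the work into proving $\mathcal{PS}(A)\subseteq \mathbb{Q}\{C:C\text{ is an }m\text{-cycle}\}+\mathcal{R}(A)$. The inclusion $\supseteq$ is immediate: every cycle network lies in $\mathcal{PS}(A)$ (as already observed in the proof of Proposition \ref{ps-2}), we have $\mathcal{R}(A)\subseteq\mathcal{PS}(A)$, and $\mathcal{PS}(A)$ is a vector subspace of $\mathcal{N}(A)$; hence any $\mathbb{Q}$-linear combination of $m$-cycle networks added to a reversal symmetric network stays in $\mathcal{PS}(A)$.

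For the reverse inclusion I would first apply Proposition \ref{ps-2}, which yields $\mathcal{PS}(A)=\mathbb{Q}\{C:C\text{ is a }k\text{-cycle, }2\le k\le m\}$. Thus it suffices to show that each individual $k$-cycle network belongs to $\mathbb{Q}\{m\text{-cycles}\}+\mathcal{R}(A)$, and I would establish this by downward induction on $k$, starting from $k=m$ (where the assertion is trivial) down to $k=2$.

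The engine of the induction is an insertion identity. Fix a $k$-cycle $\gamma=x_1x_2\cdots x_k x_1$ with $k<m$, choose a vertex $z\in A\setminus\{x_1,\dots,x_k\}$ (possible exactly because $k<m$), and for each $j\in[k]$ let $\gamma_j$ be the $(k+1)$-cycle obtained by inserting $z$ between $x_j$ and $x_{j+1}$ (cyclically, with $x_{k+1}=x_1$). Then $C_{\gamma_j}=C_{\gamma}-N_{x_jx_{j+1}}+N_{x_jz}+N_{zx_{j+1}}$, and summing over $j$, using $\sum_{j=1}^{k}N_{x_jx_{j+1}}=C_{\gamma}$ and the reindexing $\sum_{j}N_{zx_{j+1}}=\sum_{i}N_{zx_i}$, gives
\[
\sum_{j=1}^{k}C_{\gamma_j}=(k-1)\,C_{\gamma}+\sum_{i=1}^{k}(N_{x_iz}+N_{zx_i}).
\]
Each summand $N_{x_iz}+N_{zx_i}$ lies in $\mathcal{R}(A)$, so the last term is some $R\in\mathcal{R}(A)$, and dividing by $k-1\ge 1$ expresses $C_\gamma=\frac{1}{k-1}\sum_{j=1}^{k}C_{\gamma_j}-\frac{1}{k-1}R$ as a $\mathbb{Q}$-linear combination of $(k+1)$-cycle networks plus a reversal symmetric network. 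Applying the inductive hypothesis to each $C_{\gamma_j}$, and using that $\mathbb{Q}\{m\text{-cycles}\}+\mathcal{R}(A)$ is closed under $\mathbb{Q}$-linear combinations and under adding elements of $\mathcal{R}(A)$, I conclude $C_\gamma\in\mathbb{Q}\{m\text{-cycles}\}+\mathcal{R}(A)$, completing the induction.

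The only delicate point is the cancellation producing the reversal symmetric remainder: one must track that after reindexing every $z$-incident arc appears paired as $N_{x_iz}+N_{zx_i}$, i.e.\ symmetrically, which is precisely what places the remainder in $\mathcal{R}(A)$. This averaging-by-insertion device is the reasoning of Young (1974) alluded to just before the statement; everything else is bookkeeping.
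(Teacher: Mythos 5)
Your proof is correct and takes essentially the same approach as the paper: the same reduction via Proposition \ref{ps-2}, followed by the same vertex-insertion identity $\sum_{j=1}^{k}C_{\gamma_j}=(k-1)C_{\gamma}+\sum_{i=1}^{k}\bigl(N_{x_iz}+N_{zx_i}\bigr)$, which is exactly the paper's equality $\widetilde N+(k-1)C_{x_1x_2\dots x_kx_1}=C_{x_{k+1}x_1x_2\dots x_kx_{k+1}}+\cdots+C_{x_{k+1}x_kx_1\dots x_{k-1}x_{k+1}}$ written with a different indexing of the inserted vertex. The only difference is presentational: you make the downward induction on $k$ explicit, whereas the paper states that it suffices to express each $k$-cycle ($2\le k\le m-1$) as a $\mathbb{Q}$-combination of $(k+1)$-cycles plus a reversal symmetric network and leaves the chaining up to $m$-cycles implicit.
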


\begin{proof} By Proposition \ref{ps-2} and recalling that $\mathcal{PS}(A)\supseteq \mathcal{R}(A)$
and that $\mathcal{PS}(A)$ is a subspace of $\mathcal{N}(A)$, we have that
\[
\mathcal{PS}(A)\supseteq\mathbb{Q}\{C\in \mathcal{N}(A):C \hbox{ is a } m\hbox{-cycle} \}+\mathcal{R}(A).
\]
For the other inclusion, by Proposition \ref{ps-2},
it is enough to show that, for every $k$ with  $2\leq k\leq m-1$, every $k$-cycle network is a $\mathbb{Q}$-linear combination of $(k+1)$-cycle networks and of a reversal symmetric network. Consider then $2\leq k\leq m-1$. Let $C_{x_1\,x_2\dots x_k\,x_1}$ be the $k$-cycle network associated with the $k$-cycle
 $x_1\,x_2\dots x_k\,x_1$. Then we have that 
\begin{equation}\label{forma-ciclo}
C_{x_1\,x_2\dots x_k\,x_1}=N_{x_k\,x_1}+\sum_{i=1}^{k-1}N_{x_i\,x_{i+1}}.
\end{equation}
Recall that $x_1,\ldots,x_k$ are distinct. Let $x_{k+1}\in A\setminus\{x_1, \dots, x_{k}\}$ and let  $\widetilde N\coloneq \sum_{i=1}^k(N_{x_{k+1}\, x_i}+N_{x_i\,x_{k+1}})\in \mathcal{R}(A).$ 
We show that the following equality holds
\[
\widetilde N+(k-1)C_{x_1\,x_2\dots x_k\,x_1}=C_{x_{k+1}\,x_1\,x_2\dots x_k\,x_{k+1}}+C_{x_{k+1}\,x_2\dots x_k\,x_{1}\,x_{k+1}}+\cdots+C_{x_{k+1}x_k\,x_1\,x_2\dots\,x_{k-1}\,x_{k+1}}.
\]
Indeed, taking into account \eqref{forma-ciclo} and the definition of $\widetilde N$, we have that
\[
C_{x_{k+1}\,x_1\,x_2\dots x_k\,x_{k+1}}+C_{x_{k+1}\,x_2\dots x_k\,x_{1}\,x_{k+1}}+\cdots+C_{x_{k+1}x_k\,x_1\,x_2\dots\,x_{k-1}\,x_{k+1}}
\]
\[=\sum_{i=1}^k(N_{x_{k+1}\, x_i}+N_{x_i\,x_{k+1}})+(C_{x_1\,x_2\dots x_k\,x_1}-N_{x_k\,x_1})+(C_{x_1\,x_2\dots x_k\,x_1}-N_{x_1\,x_2})+\dots+(C_{x_1\,x_2\dots x_k\,x_1}-N_{x_{k-1}\,x_k})
\]
\[
=\widetilde N+(k-1)C_{x_1\,x_2\dots x_k\,x_1}.
\]
It follows that $C_{x_1\,x_2\dots x_k\,x_1}$ is a $\mathbb{Q}$-linear combination of $(k+1)$-cycle networks and of a reversal symmetric network, as desired.
\end{proof}

We conclude the section by defining two important properties a set of networks may meet.

\begin{definition}
Let $\mathcal{D}$ be a subset of $\mathcal{N}(A)$.
\begin{itemize}
	\item $\mathcal{D}$ is {\rm closed under permutation of vertices} {\rm (CPV)} if, 
for every $N\in \mathcal{D}$ and $\psi\in\mathrm{Sym}(A)$, $N^\psi\in \mathcal{D}$;
	\item $\mathcal{D}$ is {\rm closed under addition} {\rm (CA)} if, for every $N,M\in \mathcal{D}$, 
$N+M\in \mathcal{D}$.
\end{itemize}
\end{definition}
By Proposition \ref{generale-reti} the sets $\mathcal{C}(A)$, $\mathcal{R}(A)$, $\mathcal{B}_0(A)$, $\mathcal{B}(A)$, $\mathcal{PS}(A)$, $\mathcal{O}(A)$ and $\mathcal{N}(A)$ are subspaces and hence CA. Moreover, it is easily checked that they are CVP.

\section{Network solutions}\label{network solutions}
Let $\mathcal{D}\subseteq \mathcal{N}(A)$. A network solution ({\sc ns}) on $\mathcal{D}$  is a function from  $\mathcal{D}$  to $P_*(A)$. Thus, a network solution on $\mathcal{D}$  is a procedure that associates a nonempty set of vertices with any network belonging to  $\mathcal{D}$. A network solution on $\mathcal{N}(A)$ is simply called a network solution.

We propose three remarkable examples of network solutions: the net-outdegree {\sc ns}, denoted by $\mathscr{O}$, that associates with any $N\in\mathcal{N}(A)$ the set
\[
\mathscr{O}(N)\coloneq  \underset{x\in A}{\mathrm{argmax}}\,\delta^N(x);
\]
the net-indegree {\sc ns}, denoted by $\mathscr{I}$, that associates with any $N\in\mathcal{N}(A)$ the set
\[
\mathscr{I}(N)\coloneq  \underset{x\in A}{\mathrm{argmin}}\,\delta^N(x);
\]
the total {\sc ns}, denoted by $\mathscr{T}$, that associates with any $N\in\mathcal{N}(A)$ the set $\mathscr{T}(N)\coloneq A$.
Thus, the net-outdegree {\sc ns} selects the alternatives maximizing the net-outdegree;  the net-indegree {\sc ns} selects the alternatives minimizing the net-outdegree or, equivalently, maximizing the net-indegree (defined as the opposite of the net-outdegree); the total {\sc ns} always selects the whole set of the alternatives.

The next proposition highlights the importance of the set of pseudo-symmetric networks when we are dealing with the three {\sc ns}s just defined.

\begin{proposition}\label{flat-o} Let $N\in \mathcal{N}(A)$. Then the following facts hold true.
\begin{itemize}
\item[$(i)$] If $N\in \mathcal{PS}(A)$, then $\mathscr{O}(N)=\mathscr{I}(N)=A$.
\item[$(ii)$] If $N\not\in \mathcal{PS}(A)$, then $\mathscr{O}(N)\cap \mathscr{I}(N)=\varnothing$ and $|\{\mathscr{O}(N),\mathscr{I}(N),A\}|=3$.
\end{itemize}
\end{proposition}

\begin{proof} $(i)$ Let $N\in \mathcal{PS}(A)$. Then $\delta^N=0\in \mathbb{Q}^A$ and all the elements in $A$ are both maxima and minima for $\delta^N$. Thus, we get $\mathscr{O}(N)=\mathscr{I}(N)=A$. 

$(ii)$ Let $N\not \in \mathcal{PS}(A)$. We prove first that $\delta^N$ cannot be a constant function. Assume, by contradiction, that there exists $k\in \mathbb{Q}$ such that, for every $x\in A$, $\delta^N(x)=k$. Then, we have
\[
k|A|=\sum_{x\in A}\delta^N(x)=\sum_{x\in A}\left(\sum_{y\in A\setminus \{x\}}c(x,y)-\sum_{y\in A\setminus \{x\}}c(y,x)\right)
\]
\[
=\sum_{x\in A}\sum_{y\in A\setminus \{x\}}c(x,y)-\sum_{x\in A}\sum_{y\in A\setminus \{x\}}c(y,x)
=\sum_{(x,y)\in A^2_*}c(x,y)-\sum_{(x,y)\in A^2_*}c(y,x)=0, 
\]
which implies $k=0$. Hence we have that $N\in \mathcal{PS}(A)$, a contradiction. 

Since $\delta^N$ is not constant, we immediately deduce that $\mathscr{O}(N)\cap \mathscr{I}(N)=\varnothing$ and $|\{\mathscr{O}(N),\mathscr{I}(N),A\}|=3$.
\end{proof}

Let us introduce now some of the properties a {\sc ns} may meet. Those properties are crucial for our purposes as they are the ones involved in any characterization theorem of the paper.

\begin{definition}\label{ocoerenza}
Let $\mathcal{D}\subseteq\mathcal{N}(A)$ and $\mathscr{F}$ be a network solution on $\mathcal{D}$. 
\begin{itemize}
	\item $\mathscr{F}$ satisfies {\rm neutrality} if $\mathcal{D}$ is {\rm CPV} and, for every $N\in \mathcal{D}$ and $\psi\in\mathrm{Sym}(A)$,  $\mathscr{F}(N^\psi)=\psi(\mathscr{F}(N))$;
	\item $\mathscr{F}$ satisfies {\rm consistency} if $\mathcal{D}$ is {\rm CA} and, for every $N,M\in \mathcal{D}$ with $\mathscr{F}(N)\cap \mathscr{F}(M)\neq\varnothing$, $\mathscr{F}(N+M)=\mathscr{F}(N)\cap \mathscr{F}(M)$;
	\item $\mathscr{F}$ satisfies {\rm cancellation} if, for every $N\in \mathcal{D}\cap \mathcal{R}(A)$, $\mathscr{F}(N)=A$.
\end{itemize}
\end{definition}

We stress that stating that a network solution satisfies neutrality (resp. consistency) implies that the domain of that network solution is CPV (resp. CA). The following proposition shows that the net-outdegree {\sc ns}, the net-indegree {\sc ns} and the total {\sc ns} satisfy the three aforementioned properties when restricted to suitable domains.

\begin{proposition}\label{Osoddisfa}
Let $\mathcal{D}\subseteq\mathcal{N}(A)$. The following facts hold true.
\begin{itemize}
\item[$(i)$] If $\mathcal{D}$ is {\rm CPV}, then $\mathscr{O}_{|\mathcal{D}}$, $\mathscr{I}_{|\mathcal{D}}$ and $\mathscr{T}_{|\mathcal{D}}$ satisfy neutrality.
\item[$(ii)$] If $\mathcal{D}$ is {\rm CA}, then $\mathscr{O}_{|\mathcal{D}}$, $\mathscr{I}_{|\mathcal{D}}$ and $\mathscr{T}_{|\mathcal{D}}$ satisfy consistency.
\item[$(iii)$] $\mathscr{O}_{|\mathcal{D}}$, $\mathscr{I}_{|\mathcal{D}}$ and $\mathscr{T}_{|\mathcal{D}}$ satisfy cancellation.
\end{itemize}
\end{proposition}

\begin{proof} 
We prove $(i)$, $(ii)$ and $(iii)$ for $\mathscr{O}_{|\mathcal{D}}$ only. The proofs for $\mathscr{I}_{|\mathcal{D}}$ are analogous, while the ones for $\mathscr{T}_{|\mathcal{D}}$ are  straightforward.

$(i)$ Let $\mathcal{D}$ be {\rm CPV}. Let $N\in \mathcal{D}$ and $\psi\in \mathrm{Sym}(A)$. We have to prove that $\mathscr{O}(N^{\psi})= \psi( \mathscr{O}(N))$. For every $x\in A$, by \eqref{delta-psi}, we have that $\delta^{N^{\psi}}(x)=\delta^{N}(\psi^{-1}(x))$. As a consequence, $x^*\in \underset{x\in A}{\mathrm{argmax}}\,\delta^{N^{\psi}}(x)=\mathscr{O}(N^{\psi})$ if and only if $\psi^{-1}(x^*)\in \underset{x\in A}{\mathrm{argmax}}\,\delta^N(x)=\mathscr{O}(N).$ Thus, $\mathscr{O}(N^{\psi})= \psi (\mathscr{O}(N))$.

$(ii)$ Let $\mathcal{D}$ be {\rm CA}.  Let $N,M\in \mathcal{D}$ with $\mathscr{O}(N)\cap \mathscr{O}(M)\neq\varnothing$. We have to prove that $\mathscr{O}(N+M)=\mathscr{O}(N)\cap \mathscr{O}(M)$.
We first claim that 
\begin{equation}\label{mxsomma}
 \underset{x\in A}{\mathrm{argmax}}\,\delta^N(x)\cap \underset{x\in A}{\mathrm{argmax}}\,\delta^M(x)= \underset{x\in A}{\mathrm{argmax}}\,(\delta^N(x)+\delta^M(x)).
\end{equation}
Let $x^*\in \underset{x\in A}{\mathrm{argmax}}\,\delta^N(x)\cap \underset{x\in A}{\mathrm{argmax}}\,\delta^M(x)=\mathscr{O}(N)\cap \mathscr{O}(M)\neq\varnothing.$ Then we have $\delta^N(x^*)\geq \delta^N(x)$ and $\delta^M(x^*)\geq \delta^M(x)$, for all $x\in A.$ It follows that $\delta^N(x^*)+\delta^M(x^*)\geq \delta^N(x)+\delta^M(x)$, for all $x\in A.$ Hence  the maximum of the function $\delta^N+\delta^M\in \mathbb{Q}^A$ is attained at $x^*$ and $x^*\in  \underset{x\in A}{\mathrm{argmax}}\,(\delta^N(x)+\delta^M(x))$. In particular, $\underset{x\in A}{\mathrm{argmax}}\,\delta^N(x)\cap \underset{x\in A}{\mathrm{argmax}}\,\delta^M(x)\subseteq  \underset{x\in A}{\mathrm{argmax}}\,(\delta^N(x)+\delta^M(x))$. 
Let next $y^*\in  \underset{x\in A}{\mathrm{argmax}}\,(\delta^N(x)+\delta^M(x))$. Then we have 
\begin{equation}\label{intermedia}
\delta^N(y^*)+\delta^M(y^*)=\delta^N(x^*)+\delta^M(x^*).
\end{equation}
By the definition of $x^*$, we have $\delta^N(y^*)\leq \delta^N(x^*)$ and $\delta^M(y^*)\leq \delta^M(x^*)$. Hence \eqref{intermedia} implies $\delta^N(y^*)=\delta^N(x^*)$ and $\delta^M(y^*)=\delta^M(x^*)$. Thus $y^*\in \underset{x\in A}{\mathrm{argmax}}\,\delta^N(x)\cap \underset{x\in A}{\mathrm{argmax}}\,\delta^M(x)$ and \eqref{mxsomma} is proved. 

Now, by \eqref{mxsomma} and the linearity of $\delta$, we finally obtain 
\[
\mathscr{O}(N+M)=\underset{x\in A}{\mathrm{argmax}}\,(\delta^{N+M}(x))=\underset{x\in A}{\mathrm{argmax}}\,(\delta^N(x)+\delta^M(x))
\]
\[
=\underset{x\in A}{\mathrm{argmax}}\,\delta^N(x)\cap \underset{x\in A}{\mathrm{argmax}}\,\delta^M(x)=\mathscr{O}(N)\cap \mathscr{O}(M).
\]

$(iii)$ Let $N\in \mathcal{D}\cap \mathcal{R}(A)$.
Since $N\in \mathcal{PS}(A)=\mathrm{Ker}(\delta)$, we have that, for every $x\in A$, $\delta^N(x)=0$. Thus, $\mathscr{O}(N)=A$.
\end{proof}

In the rest of the section, we propose some useful propositions involving the properties in Definition \ref{ocoerenza} that will be used throughout the paper.

\begin{proposition}\label{sum-R}
Let $\mathcal{D}\subseteq \mathcal{N}(A)$ and $\mathscr{F}$ be  a {\sc ns} on $\mathcal{D}$ satisfying  consistency and cancellation.  Let $N\in \mathcal{D}$ and $R\in \mathcal{D}\cap \mathcal{R}(A)$. Then $\mathscr{F}(N+R)=\mathscr{F}(N)$.
\end{proposition}

\begin{proof} By cancellation,  we have $\mathscr{F}(R)=A$ and thus $\mathscr{F}(N)\cap \mathscr{F}(R)\neq\varnothing$. Thus, by consistency, we have that
$\mathscr{F}(N+R)=\mathscr{F}(N)\cap \mathscr{F}(R)=\mathscr{F}(N)\cap A=\mathscr{F}(N)$.
\end{proof}

\begin{proposition}\label{k-sum}
Let $\mathcal{D}\subseteq \mathcal{N}(A)$ and $\mathscr{F}$ be  a {\sc ns} on $\mathcal{D}$ satisfying consistency. Let $k\in\mathbb{N}$ with $k\ge 2$ and $N_1,\ldots, N_k\in \mathcal{D}$ be such that $\mathscr{F}(N_1)\cap \ldots \cap \mathscr{F}(N_k)\neq\varnothing$. Then 
\begin{equation}\label{ind-1}
\mathscr{F}(N_1+...+N_k)=\mathscr{F}(N_1)\cap \ldots\cap \mathscr{F}(N_k).
\end{equation}
\end{proposition}

\begin{proof}
Let us work by induction on $k$. If $k=2$, then \eqref{ind-1} is true by consistency of $\mathscr{F}$. Consider now $k\in\mathbb{N}$, assume that \eqref{ind-1} is true for $k$ and prove it for $k+1$. Let $N_1,\ldots, N_{k+1}\in \mathcal{D}$ be such that $\mathscr{F}(N_1)\cap \ldots \cap \mathscr{F}(N_{k+1})\neq\varnothing$. In particular, $\mathscr{F}(N_1)\cap \ldots \cap \mathscr{F}(N_{k})\neq\varnothing$ and 
so, by the inductive assumption, we get $\mathscr{F}(N_1+...+N_k)=\mathscr{F}(N_1)\cap \ldots\cap \mathscr{F}(N_k)$. Once defined $N=N_1+...+N_k$, we have that
\[
\mathscr{F}(N)\cap\mathscr{F}(N_{k+1})=\left(\mathscr{F}(N_1)\cap \ldots \cap \mathscr{F}(N_{k})\right)\cap \mathscr{F}(N_{k+1})\neq\varnothing.
\]
Thus, by consistency, 
\[
\mathscr{F}(N+N_{k+1})=\mathscr{F}(N)\cap\mathscr{F}(N_{k+1})=\mathscr{F}(N_1)\cap \ldots \cap \mathscr{F}(N_{k+1}).
\] 
Since $N+N_{k+1}=N_1+...+N_k$, we get \eqref{ind-1} for $k+1$.
\end{proof}

\begin{proposition}\label{scalar-vector-q}
Let $\mathcal{D}\subseteq\mathcal{N}(A)$ and $\mathscr{F}$ be a {\sc ns} on $\mathcal{D}$ satisfying consistency. Let $N\in \mathcal{D}$ and $q\in\mathbb{Q}$ be such that $q>0$ and $qN\in \mathcal{D}$. Then $\mathscr{F}(qN)=\mathscr{F}(N)$.
\end{proposition}

\begin{proof} We first show the thesis for $q=k\in\mathbb{N}$, working  by induction on $k$. If $k=1$ there is nothing to prove.
Consider now $k\in\mathbb{N}$, assume that $kN\in\mathcal{D}$, $\mathscr{F}(kN)=\mathscr{F}(N)$ and prove that 
$(k+1)N\in\mathcal{D}$ and $\mathscr{F}((k+1)N)=\mathscr{F}(N)$. 
Since $(k+1)N=kN+N$ and $\mathcal{D}$ is CA  we have that $(k+1)N\in\mathcal{D}$.
Since $\mathscr{F}(kN)\cap \mathscr{F}(N)=\mathscr{F}(N)\cap \mathscr{F}(N)=\mathscr{F}(N)\neq\varnothing$,
using consistency we have that
$\mathscr{F}((k+1)N)=\mathscr{F}(kN+N)=\mathscr{F}(kN)\cap \mathscr{F}(N)=\mathscr{F}(N)$, as desired.

Let us next consider $q\in\mathbb{Q}$ with $q>0$. Then there exists
$k\in\mathbb{N}$ such that $kq\in\mathbb{N}$. Since $k, kq\in\mathbb{N}$, by what shown before, we have that $\mathscr{F}((kq)N)=\mathscr{F}(N)$ and $\mathscr{F}((kq)N)=\mathscr{F}(k(qN))=\mathscr{F}(qN)$. Then we conclude that $\mathscr{F}(qN)=\mathscr{F}(N)$.
\end{proof}

\section{Main characterization result for {\sc ns}s}\label{cinque}

This long section is devoted to the proof of the main result of the paper, namely Theorem \ref{characterization-net} stated in Section \ref{sec7}. Its proof requires some preliminary work, proposed in Sections \ref{sub-1} and \ref{sub-2}. Finally, Section \ref{sub-4} 
is devoted to describe some specific and remarkable situations where Theorem \ref{characterization-net} can be applied.

\subsection{First preparatory theorem}\label{sub-1}

In the following proposition we identify the possible values a neutral network solution can get on outstar networks, provided its domain contains at least one of them.

\begin{proposition}\label{NUOVO}
Let $\mathcal{D}$ be such that $\mathcal{D}\cap\{N_x:x\in A\}\neq\varnothing$, and $\mathscr{F}$ be a {\sc ns} on $\mathcal{D}$ satisfying neutrality. Then $\{N_x:x\in A\}\subseteq \mathcal{D}$. Moreover, one and only one of the following facts holds true:
\begin{itemize}
\item[$(a)$] for every $x\in A$, $\mathscr{F}(N_x)=\{x\}$;
\item[$(b)$] for every $x\in A$, $\mathscr{F}(N_x)=A\setminus \{x\}$;
\item[$(c)$] for every $x\in A$, $\mathscr{F}(N_x)=A$.
\end{itemize}
\end{proposition}

\begin{proof}
Let us first prove $\{N_x:x\in A\}\subseteq \mathcal{D}$. 
Let $y\in A$. We have to prove that $N_{y}\in\mathcal{D}$. 
Since $\mathcal{D}\cap \{N_x:x\in A\}\neq\varnothing$, we know that there exists 
$z\in A$ such that $N_z\in\mathcal{D}$. Since $\mathscr{F}$ satisfies neutrality, we also know that $\mathcal{D}$ is CPV. Consider then $\psi=(yz)\in \mathrm{Sym}(A)$. Thus, $N_{y}=N_{\psi(z)}=(N_z)^\psi\in \mathcal{D}$.

Let us prove now that, for every $x\in A$, $\mathscr{F}(N_x)=\{x\}$ or $\mathscr{F}(N_x)=A\setminus \{x\}$ or  $\mathscr{F}(N_x)=A$.
Consider then $x\in A$ and assume by contradiction that $\mathscr{F}(N_x)=X$, where $X\neq \{x\}$, $X\neq A\setminus \{x\}$ and  $X\neq A$. 
Then, there exist $y,z\in A\setminus \{x\}$ with $y\neq z$ such that $y\in X$ and $z\not\in X$.
Considering $\psi=(yz)$, we have that $(N_x)^\psi=N_{\psi(x)}=N_x$ and, using neutrality, we get
\[
X=\mathscr{F}(N_x)=\mathscr{F}((N_x)^\psi)=\psi(\mathscr{F}(N_x))=\psi(X).
\]
Thus we have that $X=\psi(X)$. However, $z=\psi(y)\in \psi(X)$ and $z\not\in X$, a contradiction.

Consider now $x\in A$. We know that $\mathscr{F}(N_x)=\{x\}$ or $\mathscr{F}(N_x)=A\setminus \{x\}$ or  $\mathscr{F}(N_x)=A$. 

Assume first that $\mathscr{F}(N_x)=\{x\}$. We  prove that, for every $y\in A$, we have $\mathscr{F}(N_y)=\{y\}$. Indeed, considering $\psi=(xy)$, we have $N_y=(N_x)^\psi$. Neutrality of $\mathscr{F}$ implies then that $\mathscr{F}(N_y)=\mathscr{F}((N_x)^\psi)=\psi(\mathscr{F}(N_x))=\{\psi(x)\}=\{y\}$.

Assume now that $\mathscr{F}(N_x)=A\setminus \{x\}$ and prove that, for every $y\in A$, $\mathscr{F}(N_y)=A\setminus \{y\}$. Indeed, considering $\psi=(xy)$, we have that $N_y=(N_x)^\psi$. Neutrality of $\mathscr{F}$ implies then that $\mathscr{F}(N_y)=\mathscr{F}((N_x)^\psi)=\psi(\mathscr{F}(N_x))=A\setminus \{\psi(x)\}=A\setminus \{y\}$.
 
Assume finally that $\mathscr{F}(N_x)=A$ and prove that, for every $y\in A$, $\mathscr{F}(N_y)=A$. Indeed, considering $\psi=(xy)$, we have that $N_y=(N_x)^\psi$. Neutrality of $\mathscr{F}$ implies then that $\mathscr{F}(N_y)=\mathscr{F}((N_x)^\psi)=\psi(\mathscr{F}(N_x))=\psi(A)=A$.
\end{proof}

The next proposition shows that the properties of neutrality, consistency and cancellation, if jointly satisfied, force a network solution to always associate the whole set of vertices with each pseudo-symmetric network, provided that the domain of the network solution is a vector space containing all the pseudo-symmetric networks.

\begin{proposition}\label{const-su-PS}
Let $\mathcal{V}$ be a subspace of $\mathcal{N}(A)$ such that 
$\mathcal{PS}(A)\subseteq \mathcal{V}$ and
$\mathscr{F}$ be a {\sc ns} on $\mathcal{V}$ satisfying neutrality, consistency and cancellation.
Then, for every $N\in \mathcal{PS}(A)$, $\mathscr{F}(N)=A$.
\end{proposition}

\begin{proof}
By Proposition \ref{ps-3}, using $\mathcal{PS}(A)\subseteq \mathcal{V}$, we have that any $m$-cycle network belongs to $\mathcal{V}$.
Let $C$ be a $m$-cycle network with associated cycle $\gamma=x_1\cdots x_m x_1$.  Note that $A=\{x_1,\dots,x_m\}$. Let $q\in \mathbb{Q}$ and consider $\psi\in \mathrm{Sym}(A)$ given by $\psi=(x_1\,x_2\,\dots\, x_m)$. Then, for every $n\in[m]$, we have $(qC)^{\psi^n}=qC$ and thus, since $\mathscr{F}$ satisfies neutrality, we also have 
$
\mathscr{F}(qC)=\mathscr{F}((qC)^{\psi^n})=\psi^n(\mathscr{F}(qC)).
$
This means that $\mathscr{F}(qC)$ is a nonempty subset of $A$ which is transformed into itself by the action of the group $\langle \psi\rangle$. 
Since $\langle \psi\rangle$ has only one orbit on $A$, the only possibility is that 
\begin{equation}\label{F-cicli}
\mathscr{F}(qC)=A.
\end{equation}
Consider now $N\in \mathcal{PS}(A)$. By Proposition \ref{ps-3}, we know that $\mathcal{PS}(A)=\mathbb{Q}\{C\in \mathcal{N}(A):C \hbox{ is a } m\hbox{-cycle} \}+\mathcal{R}(A)$. Thus, we can find $s\in\mathbb{N}$, rational numbers $q_1,\ldots, q_s$, $m$-cycle networks $C_1,\ldots, C_s$ and $R\in \mathcal{R}(A)$ such that $N=R+\sum_{i=1}^s q_i C_i$. By \eqref{F-cicli}, for every $i\in [s]$, we have that $\mathscr{F}(q_i C_i)=A$. Moreover, since $R\in \mathcal{V}\cap\mathcal{R}(A)$,
by cancellation  of $\mathscr{F}$, we also have $\mathscr{F}(R)=A$. Thus, since $\mathscr{F}$ satisfies consistency and cancellation, we can apply Propositions \ref{sum-R} and \ref{k-sum} and finally obtain
\[
\mathscr{F}(N)=\mathscr{F}\left(R+\sum_{i=1}^s q_i C_i\right)=\mathscr{F}\left(\sum_{i=1}^s q_i C_i\right)=\bigcap_{i=1}^s\mathscr{F}\left(q_i C_i\right)=A.
\]
\end{proof}

The following characterization theorem constitutes the conceptual core of the paper. Indeed, Theorem \ref{characterization-net}, the main result of the paper, is basically a corollary of it once some further technical facts will be established (Section \ref{sub-2}).

\begin{theorem}\label{subspace}
Let $\mathcal{V}$ be a subspace of $\mathcal{N}(A)$ such that $\mathcal{V}\cap \mathcal{PS}(A)\in\{\mathcal{R}(A),\mathcal{PS}(A)\}$ and $\mathcal{V}\cap \{N_x:x\in A\}\neq\varnothing$. Let  $\mathscr{F}$ be a {\sc ns} on $\mathcal{V}$ satisfying neutrality, consistency and cancellation. Then $\mathscr{F}\in \{\mathscr{O}_{|\mathcal{V}},\mathscr{I}_{|\mathcal{V}},\mathscr{T}_{|\mathcal{V}}\}$.
\end{theorem}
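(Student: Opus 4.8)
The plan is to reduce everything to the linear map $\delta$ and to the behaviour of $\mathscr{F}$ on the outstar networks $N_x$ and their reversals $N_x^r$. First I would observe that the hypothesis $\mathcal{V}\cap\mathcal{PS}(A)\in\{\mathcal{R}(A),\mathcal{PS}(A)\}$ forces $\mathcal{R}(A)\subseteq\mathcal{V}$ (in both cases, since $\mathcal{R}(A)\subseteq\mathcal{PS}(A)$), so Lemma \ref{accuracy-vector} applies and yields $N_x\in\mathcal{V}$ and $\mathscr{F}(N_x)=\{x\}$ for every $x\in A$. Because $N_x+N_x^r\in\mathcal{R}(A)\subseteq\mathcal{V}$ and $\mathcal{V}$ is a subspace, also $N_x^r\in\mathcal{V}$. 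I would then establish the crucial computational identity $\mathscr{F}(N_x^r)=A\setminus\{x\}$: by neutrality $\mathscr{F}(N_x^r)$ is invariant under $\mathrm{Sym}(A\setminus\{x\})$, hence equals one of $\{x\}$, $A\setminus\{x\}$, $A$; and if it contained $x$, then consistency with $\mathscr{F}(N_x)=\{x\}$ together with cancellation applied to $N_x+N_x^r\in\mathcal{R}(A)$ would force $A=\mathscr{F}(N_x+N_x^r)=\{x\}$, which is absurd. This ``complementation'' value is what ultimately lets the final computation avoid combining disjoint-outcome outstars.

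The heart of the argument is the claim that $\mathscr{F}(P)=A$ for every $P\in\mathcal{V}\cap\mathcal{PS}(A)$. When $\mathcal{V}\cap\mathcal{PS}(A)=\mathcal{R}(A)$ this is immediate from cancellation. The case $\mathcal{PS}(A)\subseteq\mathcal{V}$ is the main obstacle. Here I would first treat an $m$-cycle network $C$: the cyclic permutation that carries the cycle to itself fixes $C$ and acts transitively on $A$, so by neutrality the nonempty set $\mathscr{F}(C)$ is invariant under a transitive group and hence equals $A$. By Lemma \ref{k-sum} any finite sum of $m$-cycles then has value $A$, and Lemma \ref{scalar-vector-q} extends this to nonnegative rational combinations. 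A general rational combination $\sum_j q_jC_j$ I would split as $P'-Q'$ with $P',Q'$ nonnegative combinations of $m$-cycles; from $P'=(\sum_j q_jC_j)+Q'$, $\mathscr{F}(Q')=\mathscr{F}(P')=A$ and consistency one obtains $\mathscr{F}(\sum_j q_jC_j)=A$. Finally, Proposition \ref{ps-3} expresses an arbitrary $P\in\mathcal{PS}(A)$ as such a combination plus a reversal symmetric network, the latter being absorbed by Lemma \ref{sum-R}.

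With the claim in hand, $\mathscr{F}$ factors through $\delta$ on $\mathcal{V}$: if $N,M\in\mathcal{V}$ satisfy $\delta^N=\delta^M$, then $N-M\in\mathcal{V}\cap\mathrm{Ker}(\delta)=\mathcal{V}\cap\mathcal{PS}(A)$, so $\mathscr{F}(N-M)=A$, and consistency applied to $M+(N-M)=N$ gives $\mathscr{F}(N)=\mathscr{F}(M)$. Since $\mathscr{O}$ likewise depends only on $\delta^N$, it now suffices, for each prescribed net-outdegree, to exhibit a single network in $\mathcal{V}$ realizing it whose $\mathscr{F}$-value can be computed directly.

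For this last step I would fix $N\in\mathcal{V}$ and set $B=\mathscr{O}(N)=\mathrm{argmax}_{x}\delta^N(x)$. The case $B=A$ is exactly $N\in\mathcal{PS}(A)$ by Proposition \ref{flat-o}, hence covered by the claim, so assume $B\neq A$. Using \eqref{delta-spruzzo} I would write $\delta^N=\sum_x\lambda_x\,\delta(N_x)$ with $\lambda_x:=(\delta^N(x)-\min_w\delta^N(w))/m\ge0$, which makes $\delta^N(x)$ an increasing affine function of $\lambda_x$ and so $\mathrm{argmax}_x\lambda_x=B$; clearing denominators by some $t\in\mathbb{N}$ gives integers $\mu_x=t\lambda_x$ with maximum $L$ attained exactly on $B$. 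The key construction is the network $M^\circ:=\sum_{x\notin B}(L-\mu_x)N_x^r\in\mathcal{V}$, whose coefficients are positive integers; a short computation with $\delta$ (using $\sum_x\delta(N_x)=0$ and $\mu_x=L$ on $B$) shows $\delta^{M^\circ}=\sum_x\mu_x\,\delta(N_x)=t\,\delta^N$, while Lemma \ref{k-sum} together with $\mathscr{F}(N_x^r)=A\setminus\{x\}$ gives $\mathscr{F}(M^\circ)=\bigcap_{x\notin B}(A\setminus\{x\})=B$ (the single-summand degeneracy being handled directly by the complementation value). Applying the factoring step to $M^\circ$ and $tN$, which share net-outdegree $t\,\delta^N$, and then Lemma \ref{scalar-vector-q} to pass from $tN$ back to $N$, I would conclude $\mathscr{F}(N)=\mathscr{F}(M^\circ)=B=\mathscr{O}(N)$. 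The decisive difficulty remains the pseudo-symmetric case of the claim; once that and the reversed-outstar identity are secured, the rest is bookkeeping with $\delta$.
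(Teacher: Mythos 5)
Your proposal is correct, and while it follows the paper's overall architecture, its endgame takes a genuinely different route. You share with the paper the opening moves: $\mathcal{R}(A)\subseteq\mathcal{V}$, Lemma \ref{accuracy-vector} giving $\mathscr{F}(N_x)=\{x\}$, the claim that $\mathscr{F}\equiv A$ on $\mathcal{V}\cap\mathcal{PS}(A)$ via $m$-cycle networks and Proposition \ref{ps-3} (your split of a rational combination into positive and negative parts is an immaterial variant of the paper's direct neutrality argument for $qC$ with $q\in\mathbb{Q}$ arbitrary), and the factoring of $\mathscr{F}$ through $\delta$. The divergence is in the realization step. The paper enumerates $A$ by decreasing net-outdegree, telescopes $\delta(N)=\sum_{j=1}^{m-1}(\lambda_j-\lambda_{j+1})\sum_{i\leq j}\delta(N_{x_i})$, and computes $\mathscr{F}\left(\sum_{i\leq j}N_{x_i}\right)=A_j$ by a backward induction on $j$ requiring two separate orbit arguments (claim \eqref{claim2}); you instead prove the complementation identity $\mathscr{F}(N_x^r)=A\setminus\{x\}$ — a short argument combining the orbit structure of the stabilizer of $x$ with consistency and cancellation applied to $N_x+N_x^r\in\mathcal{R}(A)$ — and then realize the scaled score vector by a single network $M^\circ=\sum_{x\notin B}(L-\mu_x)N_x^r$ with positive integer coefficients. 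I checked the computations: $\delta(N_x^r)=-\delta(N_x)$, your choice $\lambda_x=(\delta^N(x)-\min_w\delta^N(w))/m\geq 0$ indeed gives $\delta^N=\sum_x\lambda_x\delta(N_x)$, and $\delta^{M^\circ}=\sum_x\mu_x\delta(N_x)=t\,\delta^N$ follows from $\sum_x\delta(N_x)=0$ and $\mu_x=L$ on $B$; then Lemma \ref{k-sum} yields $\mathscr{F}(M^\circ)=\bigcap_{x\notin B}(A\setminus\{x\})=B$ at once, and Lemma \ref{scalar-vector-q} closes the loop. What your route buys is a real simplification: the paper's most delicate step (the backward induction with its $B_j$ bookkeeping) collapses to a one-line intersection, at the modest cost of one extra lemma on reversed outstars. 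What the paper's route buys is fidelity to Young's original scheme and the structural by-product $\mathscr{F}\left(\sum_{i\leq j}N_{x_i}\right)=A_j$, i.e., explicit values on the prefix-sum networks, which your argument never needs to determine.
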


\begin{proof} 
First, note that the condition $\mathcal{V}\cap \mathcal{PS}(A)\in\{\mathcal{R}(A),\mathcal{PS}(A)\}$ implies $\mathcal{R}(A)\subseteq \mathcal{V}$. Moreover, by Proposition \ref{NUOVO}, we also have $\{N_x:x\in A\}\subseteq\mathcal{V}$.

Let $\delta_{\mid\mathcal{V}}$ be the restriction to $\mathcal{V}$ of the net-outdegree (see Section \ref{net-sec}) and note that $\delta_{\mid\mathcal{V}}$ is a linear map from $\mathcal{V}$ to $\mathbb{Q}^A$.
By the fact that, for every $x\in A$, the networks $N_x$ belong to $\mathcal{V}$, we deduce that
$\mathbb{Q}\{\delta(N_x):x\in A\}\subseteq \delta_{\mid\mathcal{V}} (\mathcal{V})$. Using Proposition \ref{im-delta-1}, we have then that 
$\delta(\mathcal{N}(A))\subseteq\delta_{\mid\mathcal{V}} (\mathcal{V})$. Since obviously $\delta_{\mid\mathcal{V}} (\mathcal{V})\subseteq\delta(\mathcal{N}(A)),$ we then get 
\begin{equation}\label{immagine}
\delta_{\mid\mathcal{V}} (\mathcal{V})=\delta(\mathcal{N}(A))=\mathbb{Q}\{\delta(N_x):x\in A\}.
\end{equation} 
Moreover, again by Proposition \ref{im-delta-1}, we have $\mathrm{dim}\left(\delta_{\mid\mathcal{V}} (\mathcal{V})\right)=m-1$.
For what concerns the kernel, we have that 
$\mathrm{Ker}\left(\delta_{\mid\mathcal{V}}\right) =\mathcal{V}\cap \mathrm{Ker}\left(\delta\right)=\mathcal{V}\cap \mathcal{PS}(A)\in\{\mathcal{R}(A),\mathcal{PS}(A)\}$.
Observe now that,
\begin{equation}\label{claim}
\mbox{for every $N\in \mathrm{Ker}\left(\delta_{\mid\mathcal{V}}\right)$, we have $\mathscr{F}(N)=A$.}
\end{equation}
Indeed, if $\mathrm{Ker}\left(\delta_{\mid\mathcal{V}}\right) =\mathcal{R}(A)$, that immediately follows from the fact that $\mathscr{F}$ satisfies cancellation. If instead $\mathrm{Ker}\left(\delta_{\mid\mathcal{V}}\right) =\mathcal{PS}(A)$, that immediately follows from Proposition \ref{const-su-PS}.

We now show that two networks in $\mathcal{V}$ with the same image by $\delta$ also have the same image by $\mathscr{F}$. 
Let $N,M\in \mathcal{V}$ be such that $\delta_{\mid\mathcal{V}}(N)=\delta_{\mid\mathcal{V}}(M)$. Then we have $\delta_{\mid\mathcal{V}}(N-M)=0\in \mathbb{Q}^A$, that is, $N-M\in \mathrm{Ker}\left(\delta_{\mid\mathcal{V}}\right)$ and thus, by \eqref{claim}, $\mathscr{F}(N-M)=A.$ As a consequence, by consistency of $\mathscr{F}$, we get
$
\mathscr{F}(N)=\mathscr{F}(M+(N-M))=\mathscr{F}(M).
$

By Proposition \ref{NUOVO} we know that $\mathscr{F}$ satisfies exactly one of the conditions $(a)$, $(b)$ and $(c)$ stated in that proposition. 
We complete the proof of the theorem showing that 
if $\mathscr{F}$ satisfies condition $(a)$, then $\mathscr{F}=\mathscr{O}_{|\mathcal{V}}$;
if $\mathscr{F}$ satisfies condition $(b)$, then $\mathscr{F}=\mathscr{I}_{|\mathcal{V}}$; 
if $\mathscr{F}$ satisfies condition $(c)$, then $\mathscr{F}=\mathscr{T}_{|\mathcal{V}}$.

\vspace{2mm}

\noindent{\it Case $(a)$.} Assume that, for every $x\in A$, $\mathscr{F}(N_x)=\{x\}$ and prove that, for every $N\in \mathcal{V}$, we have $\mathscr{F}(N)=\mathscr{O}_{|\mathcal{V}}(N)$.

Fix $N\in \mathcal{V}$ and look to the rational numbers $\delta^N(x)$ for $x\in A$. Let us enumerate the elements of $A$ is such a way that $A=\{x_1,\ldots,x_m\}$ and 
\begin{equation}\label{delta-in-fila}
\delta^N(x_1)\geq \delta^N(x_2)\geq \cdots\geq \delta^N(x_m).
\end{equation}
Our task is then to show that 
\[
\mathscr{F}(N)=\left\{x_k\in A:\delta^N(x_k)=\delta^N(x_1)\right\}.
\]
By \eqref{immagine}, there exist rational numbers $\lambda_1,\ldots, \lambda_m$ such that 
\begin{equation}\label{comb-lin}
\delta(N)=\sum_{j=1}^m \lambda_j\delta(N_{x_j}).
\end{equation}
Moreover, we have that 
\begin{equation}\label{comb-lin-triv}
\sum_{j=1}^m \delta(N_{x_j})=0.
\end{equation}
Indeed, since $\mathcal{C}(A)\subseteq\mathcal{PS}(A)=\mathrm{Ker}\left(\delta\right)$, we have that
\[
\sum_{j=1}^m \delta(N_{x_j})=\delta\left(\sum_{j=1}^m N_{x_j}\right)=\delta(N(1))=0.
\]
From \eqref{comb-lin} and \eqref{comb-lin-triv}, an easy computation leads to 
\begin{equation}\label{comb-lin2}
\delta(N)=\sum_{j=1}^{m-1} (\lambda_j-\lambda_{j+1})\sum_{i\leq j}\delta(N_{x_i}).
\end{equation}
Using the linearity of $\delta$ and \eqref{comb-lin2} we get
\[
\delta(N)=\delta\left(\sum_{j=1}^{m-1} (\lambda_j-\lambda_{j+1})\sum_{i\leq j}N_{x_i}\right),
\]
where $\sum_{j=1}^{m-1} (\lambda_j-\lambda_{j+1})\sum_{i\leq j}N_{x_i}\in \mathcal{V}$. Since we have proved that two networks in $\mathcal{V}$ with the same image by $\delta$ have also the same image by $\mathscr{F}$, we have that 
\begin{equation}\label{comb-lin4}
\mathscr{F}(N)=\mathscr{F}\left(\sum_{j=1}^{m-1} (\lambda_j-\lambda_{j+1})\sum_{i\leq j}N_{x_i}\right).
\end{equation}
From \eqref{comb-lin}, for every $i\in[m]$, we have   
\begin{equation}\label{conto}
\delta^N(x_i)=\sum_{j=1}^m \lambda_j\delta^{N_{x_j}}(x_i)=\left(\sum_{j=1, j\neq i}^m \lambda_j(-1)\right) +\lambda _i(m-1)=m\lambda _i-\sum_{j=1}^m \lambda_j.
\end{equation}
It follows that, for every $i,j\in[m]$, we have 
\begin{equation}\label{delta-ij}
\delta^N(x_i)-\delta^N(x_j)=m(\lambda _i-\lambda _j).
\end{equation}
By \eqref{delta-in-fila}, $i\leq j$ implies $\delta^N(x_i)\geq\delta^N(x_j)$ and therefore, using \eqref{delta-ij}, we also get $\lambda _i\geq \lambda _j$. 
Hence, we deduce that 
\begin{equation}\label{lambda-in-fila}
\lambda_1\geq \lambda_2\geq \cdots\geq \lambda_m.
\end{equation}
Moreover, \eqref{delta-ij} also implies
$\{x_k\in A:\delta^N(x_k)=\delta^N(x_1)\}=\{x_k\in A:\lambda_k=\lambda_1\}$ and hence our purpose becomes to prove
\begin{equation}\label{what2}
\mathscr{F}(N)=\{x_k\in A:\lambda_k=\lambda_1\}.
\end{equation}
For every $j\in[m]$, define $A_j=\{x_i\in A: i\le j\}$. We claim that,
\begin{equation}\label{claim2}
\mbox{for every $j\in [m]$, $\mathscr{F}\left(\sum_{i\leq j}N_{x_i}\right)=A_j$.}
\end{equation}
We show \eqref{claim2} by backward induction on $j\in[m]$. If $j=m$, then 
\[
\sum_{i\leq m}N_{x_i}=N(1)\in \mathcal{V}\cap \mathcal{R}(A),
\]
and so, since $\mathscr{F}$ satisfies cancellation, we get
$\mathscr{F}\left(\sum_{i\leq m}N_{x_i}\right)=\mathscr{F}(N(1))=A=A_m$. 
Let now $j\in [m-1]$ be such that $\mathscr{F}\left(\sum_{i\leq j+1}N_{x_i}\right)=A_{j+1}$ and prove that 
$\mathscr{F}\left(\sum_{i\leq j}N_{x_i}\right)=A_j$.
Consider the subgroup $S_j$ of $\mathrm{Sym}(A)$ given by $S_j=\{\psi\in \mathrm{Sym}(A): \psi(A_j)=A_j\}$ and observe that, for every $\psi\in S_j$, we have that 
\[
\left(\sum_{i\leq j}N_{x_i}\right)^{\psi}=\sum_{i\leq j}N^{\psi}_{x_i}=\sum_{i\leq j}N_{\psi(x_i)}=\sum_{i\leq j}N_{x_i}.
\]
Assume, by contradiction, that there exists $k\ge j+1$ such that $x_k\in \mathscr{F}(\sum_{i\leq j}N_{x_i})$. Consider now $\psi=(x_{j+1} \cdots x_m)\in S_j$. Then, using the neutrality of $\mathscr{F}$, we have that, for every $n\in \mathbb{N}$, 
\[
\psi^n(x_k)\in \psi^n\left(\mathscr{F}\left(\sum_{i\leq j}N_{x_i}\right)\right)=\mathscr{F}\left(\left(\sum_{i\leq j}N_{x_i}\right)^{\psi^n}\right)
=\mathscr{F}\left(\sum_{i\leq j}N_{x_i}\right).
\]
As a consequence, we have $\{x_{j+1},\dots, x_m\}\subseteq \mathscr{F}\left(\sum_{i\leq j}N_{x_i}\right)$.
By inductive assumption, we know $\mathscr{F}\left(\sum_{i\leq j+1}N_{x_i}\right)=A_{j+1}$. Since by assumption we know that 
$\mathscr{F}(N_{x_{j+1}})=\{x_{j+1}\}$ and since $\mathscr{F}\left(\sum_{i\leq j}N_{x_i}\right)\cap \mathscr{F}(N_{x_{j+1}})=\{x_{j+1}\}$, by consistency of $\mathscr{F}$, we get
\[
A_{j+1}=\mathscr{F}\left(\sum_{i\leq j+1}N_{x_i}\right)=\mathscr{F}\left(\sum_{i\leq j}N_{x_i}\right)\cap \mathscr{F}\left(N_{x_{j+1}}\right)=\{x_{j+1}\},
\]
that is a contradiction since $|A_{j+1}|=j+1\ge 2$. Thus, we have shown that $\mathscr{F}\left(\sum_{i\leq j}N_{x_i}\right)\subseteq A_j$.
Since $\mathscr{F}\left(\sum_{i\leq j}N_{x_i}\right)\neq\varnothing$, there exists $x_k\in \mathscr{F}\left(\sum_{i\leq j}N_{x_i}\right)$ with $k\le j$. 
Let $\psi=(x_1\cdots x_j)\in S_j$. Then, using the neutrality of $\mathscr{F}$ we have that, for every $n\in \mathbb{N}$,
\[
\psi^n(x_k)\in \psi^n\left(\mathscr{F}\left(\sum_{i\leq j}N_{x_i}\right)\right)=\mathscr{F}\left(\left(\sum_{i\leq j}N_{x_i}\right)^{\psi^n}\right)
=\mathscr{F}\left(\sum_{i\leq j}N_{x_i}\right).
\]
As a consequence, $\mathscr{F}\left(\sum_{i\leq j}N_{x_i}\right)\supseteq A_j$. Thus, we can finally conclude that $\mathscr{F}(\sum_{i\leq j}N_{x_i})= A_j$. That completes the proof of \eqref{claim2}.
Note that, by \eqref{claim}, $\mathscr{F}(N(0))=A$. Then, by \eqref{claim2} and Proposition \ref{scalar-vector-q}, we have that, for every $j\in[m-1]$, 
\[ 
\mathscr{F}\left((\lambda_j-\lambda_{j+1})\sum_{i\leq j}N_{x_i}\right)=B_j\coloneq 
\left\{
\begin{array}{lll}
A&\mbox{ if } \lambda_j=\lambda_{j+1}\\
A_j&\mbox{ if } \lambda_j>\lambda_{j+1}\\
\end{array}
\right.
\]
holds.
Thus
\[
\bigcap_{j=1}^{m-1}\mathscr{F}\left((\lambda_j-\lambda_{j+1})\sum_{i\leq j}N_{x_i}\right)=\bigcap_{j=1}^{m-1}B_j \supseteq\bigcap_{j=1}^{m-1}A_j =\{x_1\}\neq \varnothing.
\]
By \eqref{comb-lin4},  consistency of $\mathscr{F}$ and Proposition \ref{k-sum}, we therefore have that
\[
\mathscr{F}(N)
=\mathscr{F}\left(\sum_{j=1}^{m-1} (\lambda_j-\lambda_{j+1})\sum_{i\leq j}N_{x_i}\right)=B\coloneq \bigcap_{j=1}^{m-1}B_j.
\]
In order to get \eqref{what2}, we are left with proving that
\begin{equation}\label{finimola2}
B=\{x_k\in A:\lambda_k=\lambda_1\}.
\end{equation}
Let $x_k\in A$ be such that $\lambda_1>\lambda_k$. Then, by \eqref{lambda-in-fila}, there exists $j^*\in [m-1]$ with $j^*+1\leq k$ such that $
\lambda_{j^*}>\lambda_{j^*+1}$. Thus, $B_{j^*}=A_{j^*}$ and $B\subseteq A_{j^*}$. 
Note that $k\ge j^*+1$ implies $x_k\not\in A_{j^*}$. We then deduce that $x_k\not\in B$. Hence we have 
$B\subseteq\{x_k\in A:\lambda_k=\lambda_1\}$. 
Let now $x_k\in A$ be such that $\lambda_1=\lambda_k$. We prove that $x_k\in B$ showing that $x_k\in B_j$ for all $j\in[m-1]$.
Consider $j\in[m-1]$. Assume first that $\lambda_j>\lambda_{j+1}$. Then $B_j=A_j$. 
Then, by \eqref{lambda-in-fila}, we also have $ \lambda_1>\lambda_{s}$ for all $s\in \{j+1,\ldots,m\}$.
As a consequence, we have $k\neq s$ for all $s\in \{j+1,\ldots,m\}$ that implies $k\leq j$. It follows that $x_k\in A_k\subseteq A_j=B_j$. 
Assume now that $\lambda_j=\lambda_{j+1}$. Thus $B_j=A$ and thus, clearly, $x_k\in B_j$. 
Hence we also have $\{x_k\in A:\lambda_k=\lambda_1\}\subseteq B$  and \eqref{finimola2} is  proved.

\vspace{2mm}

\noindent {\it Case $(b)$.} Assume that, for every $x\in A$, $\mathscr{F}(N_x)=A\setminus \{x\}$ and prove that, for every $N\in \mathcal{V}$, we have $\mathscr{F}(N)=\mathscr{I}_{|\mathcal{V}}(N)$.

Let $\mathscr{G}$ be the {\sc ns} on $\mathcal{V}$ defined, for every $N\in\mathcal{V}$, by $\mathscr{G}(N)=\mathscr{F}(-N)$.
Note that, being $\mathcal{V}$ a subspace of $\mathcal{N}(A)$, we have that $N\in\mathcal{V}$ implies $-N\in \mathcal{V}$.
We have that $\mathscr{G}$ satisfies neutrality. Indeed, let $N\in\mathcal{V}$ and $\psi\in\mathrm{Sym}(A)$.
Then $\mathscr{G}(N^\psi)=\mathscr{F}(-N^\psi)=\mathscr{F}((-N)^\psi)=\psi(\mathscr{F}(-N))=\psi(\mathscr{G}(N))$.
We also have that $\mathscr{G}$ satisfies consistency. Indeed, let $N,M\in\mathcal{V}$ be such that 
$\mathscr{G}(N)\cap \mathscr{G}(M)\neq \varnothing$. Thus $\mathscr{F}(-N)\cap \mathscr{F}(-M)\neq \varnothing$ and hence, using the 
fact that $\mathscr{F}$ satisfies consistency, we get
\[
\mathscr{G}(N+M)=\mathscr{F}(-N-M)=\mathscr{F}(-N)\cap \mathscr{F}(-M)=\mathscr{G}(N)\cap \mathscr{G}(M).
\]
We finally have that $\mathscr{G}$ satisfies cancellation. Indeed, let $N\in\mathcal{R}(A)$. Then $-N\in \mathcal{R}(A)$ and, using the fact that $\mathscr{F}$ satisfies cancellation, we get $\mathscr{G}(N)=\mathscr{F}(-N)=A$.

Let us prove now that, for every $x\in A$, $\mathscr{G}(N_x)=\{x\}$. Let $x\in A$. By assumption we know that $\mathscr{F}(N_x)=A\setminus\{x\}$. Thus, $\mathscr{G}(-N_x)=\mathscr{F}(N_x)=A\setminus\{x\}$.
Assume now by contradiction that $\mathscr{G}(N_x)\neq \{x\}$. Thus, $\mathscr{G}(N_x)\cap (A\setminus\{x\})\neq\varnothing$ and hence
$\mathscr{G}(N_x)\cap \mathscr{G}(-N_x)\neq\varnothing$.
Since $\mathscr{G}$ satisfies consistency and cancellation, we conclude that
\[
A=\mathscr{G}(N(0))=\mathscr{G}(N_x+(-N_x))=\mathscr{G}(N_x)\cap \mathscr{G}(-N_x)=\mathscr{G}(N_x)\cap (A\setminus\{x\}).
\] 
Since $\mathscr{G}(N_x)\cap (A\setminus\{x\})\neq A$, we have a contradiction.

Thus, we can apply to $\mathscr{G}$ the Case $(a)$ previously proved. We get then that
$\mathscr{G}=\mathscr{O}_{|\mathcal{V}}$. As a consequence we have that, for every $N\in\mathcal{V}$,
\[
\mathscr{F}(N)=\mathscr{G}(-N)=\underset{x\in A}{\mathrm{argmax}}\,\delta^{(-N)}(x)
=\underset{x\in A}{\mathrm{argmax}}\,-\delta^N(x)=\underset{x\in A}{\mathrm{argmin}}\,\delta^N(x)=\mathscr{I}(N).
\]
We conclude then that $\mathscr{F}=\mathscr{T}_{|\mathcal{V}}$, as desired

\vspace{2mm}

\noindent {\it Case $(c)$.} Assume that, for every $x\in A$, $\mathscr{F}(N_x)=A$ and prove that, for every $N\in \mathcal{V}$, we have $\mathscr{F}(N)=A$.

Fix $N\in \mathcal{V}$ and look to the rational numbers $\delta^N(x)$ for $x\in A$. Let us enumerate the elements of $A$ is such a way that $A=\{x_1,\ldots,x_m\}$ and \eqref{delta-in-fila} holds true. By \eqref{immagine}, there exist rational numbers $\lambda_1,\ldots, \lambda_m$ such that \eqref{comb-lin} holds true. Using the same argument developed in Case $(a)$ we deduce \eqref{comb-lin4} and \eqref{lambda-in-fila}.
Let us prove now that
\begin{equation}\label{equalA}
\mbox{for every $j\in[m-1]$,\ }\mathscr{F}\left(\sum_{i\leq j}N_{x_i}\right)=A.
\end{equation}
Indeed, consider $j\in[m-1]$. Since we know that, for every $x\in A$, $\mathscr{F}\left(N_{x}\right)=A$, by Proposition \ref{k-sum}, we have that
\[
\mathscr{F}\left(\sum_{i\leq j}N_{x_i}\right)=\bigcap_{i\le j}\mathscr{F}\left(N_{x_i}\right)=A.
\]
Let us prove  now that
\begin{equation}\label{equalA2}
\mbox{for every $j\in[m-1]$,\  }\mathscr{F}\left((\lambda_j-\lambda_{j+1})\sum_{i\leq j}N_{x_i}\right)=A.
\end{equation}
Indeed, let $j\in[m-1]$. By \eqref{lambda-in-fila} we know that $\lambda_j-\lambda_{j+1}\ge 0$. Assume first that $\lambda_j-\lambda_{j+1}= 0$. Since $N(0)\in \mathcal{R}(A)\subseteq \mathcal{V}$, using cancellation we obtain 
\[
\mathscr{F}\left((\lambda_j-\lambda_{j+1})\sum_{i\leq j}N_{x_i}\right)=\mathscr{F}(N(0))=A.
\]
If instead $\lambda_j-\lambda_{j+1}> 0$, we prove \eqref{equalA2}  by recalling \eqref{equalA} and applying Proposition \ref{scalar-vector-q}.
Finally, by Proposition \ref{k-sum} and using \eqref{equalA2} and \eqref{comb-lin4}, we conclude that 
\[
\mathscr{F}(N)=\mathscr{F}\left(\sum_{j=1}^{m-1} (\lambda_j-\lambda_{j+1})\sum_{i\leq j}N_{x_i}\right)=\bigcap_{j=1}^{m-1}
\mathscr{F}\left((\lambda_j-\lambda_{j+1})\sum_{i\leq j}N_{x_i}\right)=A.
\]
\end{proof}

\subsection{Extending network solutions}\label{sub-2}

Given $\mathcal{D},\mathcal{D}'\subseteq \mathcal{N}(A)$ with $\mathcal{D}\subseteq\mathcal{D}'$, a {\sc ns} $\mathscr{F}$ on $\mathcal{D}$ and a {\sc ns} $\mathscr{F}'$ on $\mathcal{D}'$, we say that $\mathscr{F}'$ is an extension of $\mathscr{F}$ if, for every $N\in \mathcal{D}$, $\mathscr{F}'(N)=\mathscr{F}(N)$.

The next result, whose proof is in the appendix, states conditions that allow to extend a network solution defined on a subset $\mathcal{D}$ of $\mathcal{N}(A)$ and satisfying neutrality, consistency and cancellation to the vector space $\mathbb{Q}\mathcal{D}+\mathcal{R}(A)$ still maintaining those properties. That passage to a vector space paves the road for exploiting Theorem \ref{subspace}. 

\begin{theorem}\label{ext-finale}
Let $\mathcal{D}\subseteq\mathcal{N}(A)$ and $\mathscr{F}$ be a {\sc ns} on $\mathcal{D}$ satisfying neutrality, consistency and cancellation. Assume that, for every $N\in \mathbb{Z}\mathcal{D}$, there exists $R\in \mathcal{D}\cap \mathcal{R}(A)$ such that $N+R\in\mathcal{D}$. Then there exists $\mathscr{F}':\mathbb{Q}\mathcal{D}+\mathcal{R}(A)\to P_*(A)$ extension of $\mathscr{F}$ satisfying neutrality, consistency and cancellation.
\end{theorem}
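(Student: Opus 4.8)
The plan is to obtain $\mathscr{F}'$ by chaining together the three extension results just established, each of which enlarges the domain by one layer while preserving neutrality, consistency, cancellation and ON-faithfulness; the theorem is essentially their composition. First I would invoke Proposition \ref{ext1}, whose hypotheses coincide verbatim with those of the theorem: it yields a network solution $\mathscr{F}_1$ on $\mathbb{Z}\mathcal{D}$ that extends $\mathscr{F}$ and still satisfies the four properties.

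Next I would feed $\mathscr{F}_1$ into Proposition \ref{ext2}, taking $\mathbb{Z}\mathcal{D}$ as the domain playing the role of ``$\mathcal{D}$'' there. That proposition requires its domain to be closed under integer combinations, so the point to check is $\mathbb{Z}(\mathbb{Z}\mathcal{D})=\mathbb{Z}\mathcal{D}$; this is immediate, since $\mathbb{Z}\mathcal{D}$ is the $\mathbb{Z}$-span of $\mathcal{D}$ and hence is stable under integer linear combinations. Applying Proposition \ref{ext2} to $\mathscr{F}_1$ then produces an extension $\mathscr{F}_2$ on $\mathbb{Q}(\mathbb{Z}\mathcal{D})$ retaining the four properties. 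From $\mathcal{D}\subseteq\mathbb{Z}\mathcal{D}\subseteq\mathbb{Q}\mathcal{D}$, passing to $\mathbb{Q}$-spans gives $\mathbb{Q}\mathcal{D}\subseteq\mathbb{Q}(\mathbb{Z}\mathcal{D})\subseteq\mathbb{Q}(\mathbb{Q}\mathcal{D})=\mathbb{Q}\mathcal{D}$, so $\mathbb{Q}(\mathbb{Z}\mathcal{D})=\mathbb{Q}\mathcal{D}$ and $\mathscr{F}_2$ is defined precisely on $\mathbb{Q}\mathcal{D}$.

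Finally, $\mathbb{Q}\mathcal{D}$ is a vector subspace of $\mathcal{N}(A)$, which is exactly the standing hypothesis of Proposition \ref{ext3}. Applying that proposition to $\mathscr{F}_2$ furnishes an extension $\mathscr{F}'$ on $\mathbb{Q}\mathcal{D}+\mathcal{R}(A)$ that again satisfies neutrality, consistency, cancellation and ON-faithfulness. Because each of the three steps produces a genuine extension, the restriction of $\mathscr{F}'$ to $\mathbb{Q}\mathcal{D}$ equals $\mathscr{F}_2$, whose restriction to $\mathbb{Z}\mathcal{D}$ equals $\mathscr{F}_1$, whose restriction to $\mathcal{D}$ equals $\mathscr{F}$; by transitivity $\mathscr{F}'$ extends $\mathscr{F}$, which is what the theorem asserts.

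I do not expect any serious obstacle here, as the heavy lifting has already been done inside the three propositions and this argument is pure bookkeeping. The only items needing verification are the two span identities $\mathbb{Z}(\mathbb{Z}\mathcal{D})=\mathbb{Z}\mathcal{D}$ and $\mathbb{Q}(\mathbb{Z}\mathcal{D})=\mathbb{Q}\mathcal{D}$, which guarantee that the output domain of one proposition is exactly the input domain of the next, together with the observation that $\mathbb{Q}\mathcal{D}$ is a subspace so that Proposition \ref{ext3} genuinely applies.
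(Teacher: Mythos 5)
Your proof is correct and follows exactly the paper's own argument: chain Proposition \ref{ext1}, then Proposition \ref{ext2} (via $\mathbb{Z}(\mathbb{Z}\mathcal{D})=\mathbb{Z}\mathcal{D}$), then Proposition \ref{ext3} (via $\mathbb{Q}\mathcal{D}$ being a subspace), with transitivity of extension. Your explicit check that $\mathbb{Q}(\mathbb{Z}\mathcal{D})=\mathbb{Q}\mathcal{D}$ is a small point the paper leaves implicit, but otherwise the two proofs coincide.
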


\subsection{The main result for {\sc ns}s}\label{sec7}

Let us introduce first the following crucial definition.

\begin{definition}\label{regular-net}
Let  $\mathcal{D}\subseteq \mathcal{N}(A)$. $\mathcal{D}$ is regular if the following conditions hold true:
\begin{itemize}
	\item[$(a)$] $\mathcal{D}$ is $\mathrm{CVP}$ and $\mathrm{CA}$;
	\item[$(b)$] for every $N\in \mathbb{Z}\mathcal{D}$, there exists $R\in \mathcal{D}\cap \mathcal{R}(A)$ such that 
	$N+R\in\mathcal{D}$;
	\item[$(c)$] $(\mathbb{Q}\mathcal{D}+\mathcal{R}(A))\cap \mathcal{PS}(A)\in\{\mathcal{R}(A),\mathcal{PS}(A)\}$;
	\item[$(d)$] $(\mathbb{Q}\mathcal{D}+\mathcal{R}(A))\cap \{N_x: x\in A\}\neq\varnothing$.
\end{itemize}
\end{definition}

Regular sets of networks are one of the main ingredients in Theorem \ref{characterization-net}. In Section \ref{sub-4} we show that many significant sets of networks are in fact regular. For the moment observe, for instance, that $\mathcal{N}(A)$ is regular. Note also that, since for every  $\mathcal{D}\subseteq \mathcal{N}(A)$, we have $\mathbb{Z}\mathcal{D}\supseteq \{N(0)\}$, condition $(b)$ implies that every regular set of networks is nonempty.

The next proposition shows a property that is particularly meaningful, especially in the light of Theorem \ref{characterization-net}, namely that each regular domain contains an element that is not pseudo-symmetric.

\begin{proposition}\label{no-pseudo}
Let  $\mathcal{D}\subseteq \mathcal{N}(A)$ be regular. Then $\mathcal{D}\not\subseteq \mathcal{PS}(A)$.
\end{proposition}

\begin{proof}
Assume by contradiction that $\mathcal{D}\subseteq \mathcal{PS}(A)$. Since $\mathcal{PS}(A)$ is a subspace of $\mathcal{N}(A)$ containing $\mathcal{R}(A)$, we have that $\mathbb{Q}\mathcal{D}+\mathcal{R}(A)\subseteq \mathcal{PS}(A)$.
As a consequence, since $\{N_x:x\in A\}\cap \mathcal{PS}(A)=\varnothing$, we get $\big(\mathbb{Q}\mathcal{D}+\mathcal{R}(A)\big)\cap \{N_x:x\in A\}=\varnothing$. That contradicts condition $(d)$ in Definition \ref{regular-net}.
\end{proof}

By Proposition \ref{Osoddisfa}, if $\mathcal{D}$ is CVP and CA, we know that $\mathscr{O}_{|\mathcal{D}}$, $\mathscr{I}_{|\mathcal{D}}$ and $\mathscr{T}_{|\mathcal{D}}$ satisfy neutrality, consistency and cancellation. Theorem \ref{characterization-net} below shows that the regularity of $\mathcal{D}$ guarantees that $\mathscr{O}_{|\mathcal{D}}$, $\mathscr{I}_{|\mathcal{D}}$ and $\mathscr{T}_{|\mathcal{D}}$ are the unique {\sc ns}s  on $\mathcal{D}$ fulfilling those properties.

\begin{theorem}\label{characterization-net}
Let  $\mathcal{D}\subseteq \mathcal{N}(A)$ be regular and $\mathscr{F}$ be a {\sc ns} on $\mathcal{D}$ satisfying neutrality, consistency and cancellation.
Then, $\mathscr{F}\in\{\mathscr{O}_{|\mathcal{D}},\mathscr{I}_{|\mathcal{D}},\mathscr{T}_{|\mathcal{D}}\}$.  Moreover, for every $N\in \mathcal{D}\setminus \mathcal{PS}(A)$, we have that
\begin{itemize}
\item[$(i)$] if $\mathscr{F}(N)=\mathscr{O}(N)$, then $\mathscr{F}=\mathscr{O}_{|\mathcal{D}}$;
\item[$(ii)$] if $\mathscr{F}(N)=\mathscr{I}(N)$, then $\mathscr{F}=\mathscr{I}_{|\mathcal{D}}$;
\item[$(iii)$] if $\mathscr{F}(N)=A$, then $\mathscr{F}=\mathscr{T}_{|\mathcal{D}}$.
\end{itemize}
\end{theorem}

\begin{proof} 
Set $\mathcal{V}=\mathbb{Q}\mathcal{D}+\mathcal{R}(A)$. Since $\mathcal{D}$ is regular, we know that 
$\mathcal{V}\cap \mathcal{PS}(A)\in\{\mathcal{R}(A),\mathcal{PS}(A)\}$ and 
$\mathcal{V}\cap \{N_x\in x\in A\}\neq\varnothing$. By Theorem \ref{ext-finale}, there exists $\mathscr{F}':\mathcal{V}\to P_*(A)$ extension of $\mathscr{F}$ satisfying neutrality, consistency and cancellation. 
Applying then Theorem \ref{subspace} to $\mathscr{F}'$ we deduce that 
$\mathscr{F}'\in\{\mathscr{O}_{|\mathcal{V}},\mathscr{I}_{|\mathcal{V}},\mathscr{T}_{|\mathcal{V}}\}$. 
We conclude then that $\mathscr{F}=\mathscr{F}'_{|\mathcal{D}}\in\{\mathscr{O}_{|\mathcal{D}},\mathscr{I}_{|\mathcal{D}},\mathscr{T}_{|\mathcal{D}}\}$.

Let us consider now $N\in \mathcal{D}\setminus \mathcal{PS}(A)$. By Proposition \ref{flat-o}, we have that $|\{\mathscr{O}(N),\mathscr{I}(N),A\}|=3$. As a consequence, if $\mathscr{F}(N)=\mathscr{O}(N)$, then we have $\mathscr{F}(N)\neq\mathscr{I}(N)$ and 
$\mathscr{F}(N)\neq\mathscr{T}(N)$, and hence $\mathscr{F}=\mathscr{O}_{|\mathcal{D}}$. 
If $\mathscr{F}(N)=\mathscr{I}(N)$, then we have $\mathscr{F}(N)\neq\mathscr{O}(N)$ and 
$\mathscr{F}(N)\neq\mathscr{T}(N)$, and hence $\mathscr{F}=\mathscr{I}_{|\mathcal{D}}$.
If $\mathscr{F}(N)=\mathscr{T}(N)$, then we have $\mathscr{F}(N)\neq\mathscr{O}(N)$ and 
$\mathscr{F}(N)\neq\mathscr{I}(N)$, and hence $\mathscr{F}=\mathscr{T}_{|\mathcal{D}}$. That proves $(i)$, $(ii)$ and $(iii)$.
\end{proof}

We emphasize that, by Proposition \ref{no-pseudo}, when $\mathcal{D}$ is regular there always exists a network $N^*\in\mathcal{D}\setminus \mathcal{PS}(A)$. Thus, by Theorem \ref{characterization-net}, if $\mathscr{F}$ is a {\sc ns} on $\mathcal{D}$ satisfying neutrality, consistency and cancellation, we can completely decide the nature of $\mathscr{F}$, just by looking at $\mathscr{F}(N^*)$. In particular, by Theorem \ref{characterization-net}, we can easily characterize any restriction of the net-outdegree network solution to a regular subset of $\mathcal{N}(A)$, once the following further property is considered.

\begin{definition}
Let  $\mathcal{D}\subseteq \mathcal{N}(A)$ and $\mathscr{F}$ be a {\sc ns} on $\mathcal{D}$. $\mathscr{F}$ satisfies $\mathscr{O}$-coherence if there exists $N\in \mathcal{D}\setminus \mathcal{PS}(A)$ such that $\mathscr{F}(N)=\mathscr{O}(N)$.
\end{definition}

Note that, in the light of Proposition \ref{flat-o}, a {\sc ns} $\mathscr{F}$ on $\mathcal{D}$ satisfies $\mathscr{O}$-coherence if and only if there exists an element in $\mathcal{D}$ at which $\mathscr{F}$ coincides with $\mathscr{O}$ and at which $\mathscr{F}$ differs from $\mathscr{I}$. We are now ready to state and prove the announced characterization theorem.

\begin{theorem}\label{crucial-net}
Let $\mathcal{D}\subseteq \mathcal{N}(A)$ be regular. Then $\mathscr{O}_{|\mathcal{D}}$ is the unique {\sc ns} on $\mathcal{D}$ satisfying neutrality, consistency, cancellation and $\mathscr{O}$-coherence.
\end{theorem}

\begin{proof}
Since $\mathcal{D}$ is CPV and CA, by Proposition \ref{Osoddisfa} we know that $\mathscr{O}_{|\mathcal{D}}$ satisfies neutrality, consistency and cancellation. By Proposition \ref{no-pseudo} we know that there exists $N\in \mathcal{D}\setminus\mathcal{PS}(A)$ and hence we immediately get that $\mathscr{O}_{|\mathcal{D}}$ satisfies $\mathscr{O}$-coherence, as well. Let now $\mathscr{F}$ be a {\sc ns} on $\mathcal{D}$ that satisfies neutrality, consistency, cancellation and $\mathscr{O}$-coherence. Since $\mathscr{F}$ satisfies $\mathscr{O}$-coherence, we know that there exists $N\in \mathcal{D}\setminus\mathcal{PS}(A)$ such that $\mathscr{F}(N)=\mathscr{O}(N)$. Since $\mathcal{D}$ is regular and $\mathscr{F}$ satisfies neutrality, consistency and cancellation, by Theorem \ref{characterization-net}, we conclude that $\mathscr{F}=\mathscr{O}_{|\mathcal{D}}$.
\end{proof}

\subsection{Regular sets of networks}\label{sub-4}

Because of Theorem \ref{characterization-net}, it becomes very important to identify the regular subsets of $\mathcal{N}(A)$. 
Indeed, each regular set of networks leads to a specification of Theorem \ref{characterization-net}.
The following theorem shows that there are many interesting regular sets of networks.

\begin{theorem}
Let $\mathbb{X}\in \{ \mathbb{N}_0, \mathbb{Z}, \mathbb{Q}_+,\mathbb{Q}\}$. Then the sets
$\mathcal{N}_{\mathbb{X}}(A)$, $\mathcal{N}_{\mathbb{X}}(A)\cap\mathcal{B}(A)$ and $\mathbb{X}\{N_x:x\in A\}$
are regular subsets of $\mathcal{N}(A)$.
\end{theorem}

\begin{proof}
Let us prove first that $\mathcal{D}\coloneq \mathcal{N}_{\mathbb{X}}(A)$ is regular. We have to show that it satisfies conditions $(a)$, $(b)$, $(c)$ and $(d)$ of Definition \ref{regular-net}. Clearly, we have that $\mathcal{D}$ is CVP and CA and thus $\mathcal{D}$ satisfies condition $(a)$. Pick now $N\in \mathbb{Z}\mathcal{D}$ and let $k=\min_{(x,y)\in A^2_*} c^N(x,y)$. Consider $t\in\mathbb{N}$ such that $t+k>0$. Thus, we have that $N(t)\in \mathcal{D}\cap \mathcal{R}(A)$ and $N+N(t)\in \mathcal{D}$ and thus condition $(b)$ is satisfied. Note that, for every $(x,y)\in A^2_*$, we have that $N_{xy}\in \mathcal{N}_{\mathbb{N}_0}(A)$. Thus, $\mathbb{Q}\mathcal{N}_{\mathbb{N}_0}(A)=\mathcal{N}(A)$ and so $\mathbb{Q}\mathcal{N}_{\mathbb{N}_0}(A)+ \mathcal{R}(A)=\mathcal{N}(A)$. Since $\mathcal{N}_{\mathbb{N}_0}(A)\subseteq \mathcal{D}$, we also have that $\mathbb{Q}\mathcal{D}+ \mathcal{R}(A)=\mathcal{N}(A)$. Thus,
$(\mathbb{Q}\mathcal{D}+\mathcal{R}(A))\cap \mathcal{PS}(A)=\mathcal{PS}(A)$ and $(c)$ is satisfied.
For every $x\in A$, we have that $N_x\in \mathcal{D}$ and so $(d)$ is satisfied.

Let us prove now that $\mathcal{D}\coloneq \mathcal{N}_{\mathbb{X}}(A)\cap\mathcal{B}(A)$ satisfies conditions $(a)$, $(b)$, $(c)$ and $(d)$ of Definition \ref{regular-net}.  Clearly, we have that $\mathcal{D}$ is CVP and CA and hence $\mathcal{D}$ satisfies condition $(a)$. 
Pick now $N\in \mathbb{Z}\mathcal{D}$ and let $k=\min_{(x,y)\in A^2_*} c^N(x,y)$. Consider $t\in\mathbb{N}$ such that $t+k>0$. Thus, we have that $N(t)\in \mathcal{D}\cap \mathcal{R}(A)$ and $N+N(t)\in \mathcal{D}$ and hence condition $(b)$ is satisfied. Given $(x,y)\in A^2_*$, let 
\[
R=\sum_{z\in A\setminus \{x,y\}}(N_{xz}+N_{zx})+\sum_{z\in A\setminus \{x,y\}}(N_{yz}+N_{zy})+K_{A\setminus \{x,y\}}.
\]
We have that $R\in \mathcal{R}(A)$ and that $2N_{xy}+R\in \mathcal{B}_{\mathbb{N}_0}(A)$.
Since $N_{xy}=\frac{1}{2}(2N_{xy}+R)-\frac{1}{2}R$, we deduce that $N_{xy}\in \mathbb{Q}\mathcal{B}_{\mathbb{N}_0}(A)+\mathcal{R}(A)$ and since $\{N_{xy}: (x,y)\in A^2_*\}$ is a basis for $\mathcal{N}(A)$, we get
$\mathcal{N}(A)\subseteq \mathbb{Q}\mathcal{B}_{\mathbb{N}_0}(A)+\mathcal{R}(A)$. 
By $\mathcal{B}_{\mathbb{N}_0}(A)\subseteq \mathcal{D}$, we have then that $\mathcal{N}(A)\subseteq \mathbb{Q}\mathcal{D}+\mathcal{R}(A)$ and, since the opposite inclusion is trivial, we conclude that $\mathbb{Q}\mathcal{D}+\mathcal{R}(A)=\mathcal{N}(A)$. 
Thus, $(\mathbb{Q}\mathcal{D}+\mathcal{R}(A))\cap \mathcal{PS}(A)=\mathcal{PS}(A)$ and  $(c)$ is satisfied.
Given $x\in A$, we have that $2N_x+R\in \mathcal{D}$, where $R=K_{A\setminus\{x\}}\in \mathcal{R}(A)$. Thus, 
$N_x=\frac{1}{2}(2N_x+R)-\frac{1}{2}R\in \mathbb{Q}\mathcal{D}+\mathcal{R}(A)$ and $(c)$ is satisfied.

Let us finally prove that $\mathcal{D}\coloneq \mathbb{X}\{N_x:x\in A\}$ satisfies conditions $(a)$, $(b)$, $(c)$ and $(d)$ of Definition \ref{regular-net}.  
Note first that $\{N(k):k\in \mathbb{N}_0\}\subseteq \mathcal{D}\cap \mathcal{R}(A)$  because, for every $k\in \mathbb{N}_0$, 
we have $N(k)=\sum_{x\in A}kN_x\in \mathcal{D}\cap \mathcal{R}(A)$. Clearly, we have that $\mathcal{D}$ is CVP and CA and hence $\mathcal{D}$ satisfies condition $(a)$. Pick now $N\in \mathbb{Z}\mathcal{D}$ and let $k=\min_{(x,y)\in A^2_*} c^N(x,y)$. Consider $t\in\mathbb{N}$ such that $t+k>0$. Thus, we have that $N(t)\in \mathcal{D}\cap \mathcal{R}(A)$ and $N+N(t)\in \mathcal{D}$ and hence condition $(b)$ is satisfied.
Observe further that $\mathbb{Q}\mathcal{D}=\mathcal{O}(A)$. Thus, by Proposition \ref{generale-reti}$(iii.g)$, we have that $(c)$ holds.
Given now $x\in A$, we have that $N_x\in \mathcal{D}$ and hence condition $(d)$ is satisfied.
\end{proof}

\section{Preference profiles}\label{sezione-profili}

A relation $R$ on $A$ is a subset of $A^2$, that is, an element of $P(A^2)$. The set of relations on $A$ is denoted by $\mathbf{R}(A)$.
Let $R\in\mathbf{R}(A)$. 
Given $x,y\in A$, we sometimes write $x\succeq_R y$ instead of $(x,y)\in R$;
$x\succ_R y$ instead of $(x,y)\in R$ and $(y,x)\not\in R$; $x\sim_R y$ instead of $(x,y)\in R$ and $(y,x)\in R$; $x\perp_R y$ instead of $(x,y)\notin R$ and $(y,x)\notin R$. For every $x\in A$, we define the sets
\begin{equation}\label{LU}
L(R,x)\coloneq \{y\in A: x\succ_R y\},\quad U(R,x)\coloneq \{y\in A: y\succ_R x\},
\end{equation}
\[
I(R,x)\coloneq \{y\in A: x\sim_R y\},\quad IN(R,x)\coloneq \{y\in A: x\perp_R y\}.
\]
Note that $L(R,x)$, $U(R,x)$, $I(R,x)$ and $IN(R,x)$ are pairwise disjoint and $L(R,x)\cup U(R,x)\cup I(R,x)\cup IN(R,x)=A$.
Given $\psi\in\mathrm{Sym}(A)$, we define $R^\psi=\{(x,y)\in A^2:(\psi^{-1}(x),\psi^{-1}(y))\in R\}$. We also set 
$R^r=\{(x,y)\in A^2:(y,x)\in R\}$. Moreover, let
\[
\mathrm{Max}(R)\coloneq \{x\in A: \forall y\in A,\; y\not\succ_R x\}, \quad \mathrm{Min}(R)\coloneq \{x\in A: \forall y\in A,\; x\not\succ_R y\},
\]
be the sets of maximal and minimal elements of $R$, respectively.

From now on, we interpret $A$ as the set of alternatives. Consider a countably infinite set $V$ whose elements are to be interpreted as potential voters. For simplicity, we assume $V=\mathbb{N}$. Elements of $\mathbf{R}(A)$ are interpreted as preference relations on $A$. Let us consider the set
\[
\mathbf{R}(A)^*=\bigcup_{\substack{
I\subseteq V\\
 I\neq\varnothing\, \mathrm{finite}
}}
 \mathbf{R}(A)^I.
\]
An element of $\mathbf{R}(A)^*$ is called preference profile. 
Thus, a preference profile is a function from a finite nonempty subset of $V$ to $\mathbf{R}(A)$. 
Given $p\in \mathbf{R}(A)^*$, we denote by $\mathrm{Dom}(p)$ the domain of $p$ and, for every $i\in \mathrm{Dom}(p)$, 
$p(i)\in  \mathbf{R}(A)$ is interpreted as the preference relation of voter $i$ on the set of alternatives $A$.

If $p\in \mathbf{R}(A)^*$, we denote by $p^r$ the element of $\mathbf{R}(A)^*$ such that $\mathrm{Dom}(p^r)=\mathrm{Dom}(p)$ and for every $i\in \mathrm{Dom}(p)$, $p^r(i)=p(i)^r$. If $p\in \mathbf{R}(A)^*$ and $\psi \in \mathrm{Sym}(A)$, we denote by $p^\psi$ the element of $\mathbf{R}(A)^*$ such that $\mathrm{Dom}(p^{\psi})=\mathrm{Dom}(p)$ and, for every $i\in \mathrm{Dom}(p)$, $p^\psi(i)=p(i)^\psi$.

If $p,p'\in \mathbf{R}(A)^*$, we say that $p, p'$ are disjoint if $\mathrm{Dom}(p)\cap \mathrm{Dom}(p')=\varnothing$; we say that $p'$ is a clone of $p$  if $|\mathrm{Dom}(p)|= |\mathrm{Dom}(p')|$ and there exists a bijection $\varphi:\mathrm{Dom}(p)\rightarrow \mathrm{Dom}(p')$ such that, for every $i\in \mathrm{Dom}(p)$, we have  $p'(\varphi(i))=p(i)$; we say that $p'$ is a disjoint clone of $p$  if $p'$ is a clone of $p$ and $p$ and $p'$ are disjoint. If $p, p'\in \mathbf{R}(A)^*$ are disjoint, we denote by $p+p'$ the element of $\mathbf{R}(A)^*$ such that $\mathrm{Dom}(p+p')=\mathrm{Dom}(p)\cup\mathrm{Dom}(p')$ and such that, for every $i\in \mathrm{Dom}(p)$, we have $(p+p')(i)=p(i)$ and, for every $i\in \mathrm{Dom}(p')$, we have $(p+p')(i)=p'(i)$.

Note that if $p, p'\in \mathbf{R}(A)^*$ are disjoint, then $p+p'=p'+p$. Moreover, if $p, p', p''\in \mathbf{R}(A)^*$ are pairwise disjoint,
then $p+p'$ and $p''$ are disjoint, $p$ and $p'+p''$ are disjoint and $(p+p')+p''=p+(p'+p'')$. Thus we can write $p+p'+p''$ without ambiguity. A similar observation holds for any number of pairwise disjoint preference profiles.

Given $R\in \mathbf{R}(A)$, let $N(R)=(A,A^2_*,c_R)$ be the network in $\mathcal{N}_{\mathbb{N}_0}(A)$ whose capacity $c_R$ is defined, for every $(x,y)\in A^2_*$, by $c_R(x,y)=1$ if $(x,y)\in R$ and $c_R(x,y)=0$ if $(x,y)\notin R$. Note that $R\subseteq A_d^2$ if and only if $N(R)=N(0).$

Consider then the function $N:\mathbf{R}(A)^*\rightarrow \mathcal{N}_{\mathbb{N}_0}(A)$ defined, for every  $p\in \mathbf{R}(A)^*$,  by
\[
N(p)\coloneq \sum_{i\in\mathrm{Dom}(p)} N(p(i)).
\] 
Given $p\in \mathbf{R}(A)^*$, we call $N(p)$ the network associated with $p$ and we denote its capacity by $c_p$. Thus, we have that
\[
c_p= \sum_{i\in \mathrm{Dom}(p)} c_{p(i)}.
\]
For every $\mathbf{D}\subseteq \mathbf{R}(A)^*$, we set $N(\mathbf{D})=\{N(p): p\in \mathbf{D}\}.$  We call $N(\mathbf{D})$ the set of networks of  $\mathbf{D}$.

Let $p,p'\in \mathbf{R}(A)^*$  and $\psi\in\mathrm{Sym}(A)$. If $p,p'$ are disjoint, then the following properties hold true 
\begin{equation}\label{prop-reti-profili}
N(p^r)=N(p)^r, \qquad
N(p^\psi)=N(p)^\psi, \qquad  N(p+p')=N(p)+N(p').
\end{equation}
Note also that if $p$ and $p'$ are clones, then $N(p)=N(p')$. As a consequence, if $q$ is a clone of $p^r$ disjoint from $p$ then, using \eqref{prop-reti-profili}, we obtain
\begin{equation}\label{-np}
N(p+q)=N(p)+N(q)=N(p)+N(p^r)=N(p)+N(p)^r\in \mathcal{R}(A).
\end{equation}

Let us introduce now some properties a set of preference profiles may meet.

\begin{definition}
Let $\mathbf{D}\subseteq \mathbf{R}(A)^*$. 
\begin{itemize}
	\item $\mathbf{D}$ is {\rm closed under permutation of alternatives} {\rm (CPA)} if, for every $p\in \mathbf{D}$ and $\psi\in\mathrm{Sym}(A)$, $p^\psi\in \mathbf{D}$;
	\item $\mathbf{D}$ is {\rm closed under addition} {\rm (CA)} if, for every $p, p'\in \mathbf{D}$ disjoint, $p+p'\in \mathbf{D}$; 
	\item $\mathbf{D}$ is {\rm coherent with clones} {\rm (CWC)} if, for every $k\in\mathbb{N}$ and $p_1,\ldots,p_k\in \mathbf{D}$, there exist $q_1,\ldots,q_k\in \mathbf{D}$ pairwise disjoint such that, for every $i,j\in[k]$, $q_i$ is a clone of $p_i$ disjoint from $p_j$; 
	\item $\mathbf{D}$ is {\rm coherent with reversal symmetry} {\rm (CRS)} if, for every $p\in \mathbf{D}$, there exists $q\in \mathbf{D}$ disjoint from $p$ such that $N(p+q)\in\mathcal{R}(A)$.
\end{itemize}
\end{definition}

The next proposition establishes how the properties of a set of preference profiles $\mathbf{D}$ affect the properties of its set of networks.

\begin{proposition}\label{nuovo-lemma} Let $\mathbf{D}\subseteq \mathbf{R}(A)^*$. Then the following facts hold true.
\begin{itemize}
  \item[$(i)$] If $\mathbf{D}$ is {\rm CPA}, then $N(\mathbf{D})$ is {\rm CPV}.
	\item[$(ii)$] If $\mathbf{D}$ is {\rm CA} and {\rm CWC}, then $N(\mathbf{D})$ is {\rm CA}.
	\item[$(iii)$] If $\mathbf{D}$ is {\rm CPA}, {\rm CA} and {\rm CWC}, then, for every $p\in\mathbf{D}$, there exists $q\in \mathbf{D}$ disjoint from $p$ such that 
  $N(p+q)\in N(\mathbf{D})\cap \mathcal{C}(A)$. In particular, $\mathbf{D}$ is {\rm CRS}.
	\item[$(iv)$] If $\mathbf{D}$ is {\rm CPA}, {\rm CA}, {\rm CWC} and such that there exists $p\in \mathbf{D}$ with $N(p)\neq N(0)$, then, for every $k\in\mathbb{N}$, there exists $t\in\mathbb{N}$ with $t\ge k$ such that $N(t)\in N(\mathbf{D})\cap \mathcal{C}(A)$.
\end{itemize}
\end{proposition}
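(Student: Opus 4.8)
The plan is to transfer each closure property of $\mathbf{D}$ to $N(\mathbf{D})$ by repeatedly using the identities \eqref{prop-reti-profili}, the fact that clones share the same associated network, and CWC whenever disjointness of domains is needed to form a profile sum. For $(i)$, I would take $N=N(p)\in N(\mathbf{D})$ with $p\in\mathbf{D}$ and $\psi\in\mathrm{Sym}(A)$; CPA gives $p^\psi\in\mathbf{D}$, and \eqref{prop-reti-profili} gives $N^\psi=N(p)^\psi=N(p^\psi)\in N(\mathbf{D})$. For $(ii)$, given $N=N(p)$ and $M=N(p')$ in $N(\mathbf{D})$, the only issue is that $p$ and $p'$ need not be disjoint, so $p+p'$ may be undefined; I would apply CWC to the pair $(p,p')$ to obtain disjoint clones $q_1,q_2\in\mathbf{D}$, form $q_1+q_2\in\mathbf{D}$ by CA, and conclude $N+M=N(q_1)+N(q_2)=N(q_1+q_2)\in N(\mathbf{D})$, since clones preserve associated networks.

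The heart of the statement is $(iii)$, where a genuinely \emph{constant} network must be produced. The key observation is that symmetrizing over the whole group yields a constant network: setting $M:=\sum_{\psi\in\mathrm{Sym}(A)}N(p)^\psi$, reindexing by $\psi\mapsto\sigma\psi$ shows $M^\sigma=M$ for every $\sigma\in\mathrm{Sym}(A)$, whence $M\in\mathcal{C}(A)$ by Proposition \ref{costanti-sim}. To realize $M$ as $N(p+q)$ with $q\in\mathbf{D}$ disjoint from $p$, I would enumerate $\mathrm{Sym}(A)=\{\psi_1=id,\psi_2,\dots,\psi_n\}$ with $n=m!$, note $p^{\psi_i}\in\mathbf{D}$ by CPA, and apply CWC to the list $p^{\psi_1},\dots,p^{\psi_n}$ to obtain pairwise disjoint clones $q_1,\dots,q_n\in\mathbf{D}$, each disjoint from every $p^{\psi_j}$ and in particular from $p=p^{\psi_1}$. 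Putting $q:=q_2+\cdots+q_n$, which lies in $\mathbf{D}$ and is disjoint from $p$ by CA, one computes $N(p+q)=N(p)+\sum_{i=2}^nN(p^{\psi_i})=\sum_{i=1}^nN(p)^{\psi_i}=M$, so $N(p+q)\in N(\mathbf{D})\cap\mathcal{C}(A)$. Since $\mathcal{C}(A)\subseteq\mathcal{R}(A)$, CRS follows at once.

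For $(iv)$ I would bootstrap from $(iii)$ applied to a profile $p$ with $N(p)\neq N(0)$. The resulting $M=N(p+q)$ is the constant network whose value at every arc equals $(m-2)!\sum_{(a,b)\in A^2_*}c_p(a,b)$; since the capacities of $N(p)$ are nonnegative integers, not all zero, this common value $s$ is a positive integer, so $N(s)\in N(\mathbf{D})\cap\mathcal{C}(A)$. Fixing a profile realizing $N(s)$, namely $p+q$, and applying CWC to $k$ copies of it, CA produces a profile whose associated network is $kN(s)=N(ks)$; taking $t:=ks\geq k$ yields $N(t)\in N(\mathbf{D})\cap\mathcal{C}(A)$, as required.

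The main obstacle is precisely the construction in $(iii)$: one cannot add $N(p)$ to its permuted copies at the level of profiles, because profile addition is defined only on disjoint domains, so the symmetrized network $M$ exists a priori only as an abstract element of $\mathcal{N}(A)$. CWC is what rescues the argument, since it supplies a full family of pairwise disjoint clones of the entire orbit $\{p^\psi:\psi\in\mathrm{Sym}(A)\}$, all disjoint from $p$, allowing $M$ to be recovered as $N(p+q)$ with $p+q\in\mathbf{D}$. Once this is in place, $(iv)$ reduces to a routine scaling argument and $(i)$, $(ii)$ are direct.
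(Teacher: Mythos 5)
Your proof is correct and takes essentially the same approach as the paper: parts $(i)$ and $(ii)$ are identical, part $(iii)$ uses the same symmetrization of $N(p)$ over $\mathrm{Sym}(A)$ via CWC-clones of the orbit together with Proposition \ref{costanti-sim}, and part $(iv)$ is the same clone-and-add scaling argument. The only differences are cosmetic: you compute the constant value $(m-2)!\sum_{(a,b)\in A^2_*}c_p(a,b)$ explicitly and take $k$ disjoint clones, whereas the paper merely notes the constant is some $r\in\mathbb{N}$ and picks $h$ with $hr\ge k$.
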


\begin{proof} 
$(i)$ Assume that $\mathbf{D}$ is {\rm CPA} and show that $N(\mathbf{D})$ is {\rm CPV}.
Let $N\in N(\mathbf{D})$ and $\psi\in \mathrm{Sym}(A)$. Then there exists $p\in \mathbf{D}$ such that $N=N(p)$. Since $\mathbf{D}$ is {\rm CPA} we have that $p^\psi\in\mathbf{D}$. Thus, $N^\psi=N(p)^\psi=N(p^\psi)\in N(\mathbf{D})$. 

$(ii)$ Assume that $\mathbf{D}$ is {\rm CA} and {\rm CWC} and show that $N(\mathbf{D})$ is {\rm CA}.
Let $N,M\in N(\mathbf{D})$. Then there exist $p,p'\in \mathbf{D}$ such that $N=N(p)$ and $M=N(p')$. Since $\mathbf{D}$ is CWC
we can find $\hat{p},\hat{p}'\in \mathbf{D}$ disjoint such that $\hat{p}$ is a clone of $p$, $\hat{p}'$ is a clone of $p'$.
Since $\mathbf{D}$ is {\rm CA} we have that $\hat{p}+\hat{p}'\in\mathbf{D}$. 
Thus, $N+M=N(\hat{p})+N(\hat{p}')=N(\hat{p}+\hat{p}')\in N(\mathbf{D})$.

$(iii)$ Assume that $\mathbf{D}$ is {\rm CPA}, {\rm CA} and {\rm CWC}. Consider $p\in \mathbf{D}$. Let $\psi_1,\ldots,\psi_k$ be an enumeration of the $k=m!-1$ elements of  $\mathrm{Sym}(A)\setminus \{id\}$. Since $\mathbf{D}$ is CPA, for every $i\in[k]$, $p^{\psi_i}\in \mathbf{D}$.
Since $\mathbf{D}$ is {\rm CWC},
there exist $\hat{p}_1,\ldots, \hat{p}_k\in \mathbf{D}$ pairwise disjoint such that, for every $i\in[k]$, $\hat{p}_i$ is a clone of $p^{\psi_i}$ and $\hat{p}_i$ is disjoint from $p$.
Consider then $q\coloneq \hat{p}_1+\ldots+\hat{p}_k$ and note that $q$ is disjoint from $p$.
Since $\mathbf{D}$ is CA, we have that $q\in \mathbf{D}$ and $p+q\in \mathbf{D}$ and hence $N(p+q)\in N(\mathbf{D})$. We complete the proof showing that 
$N(p+q)\in\mathcal{C}(A)$.
By \eqref{prop-reti-profili}, we have that
\[
N(p+q)=N\left(p+\sum_{i=1}^k\hat{p}_i\right)=N(p)+\sum_{i=1}^kN(\hat{p}_i)=N(p)+\sum_{i=1}^kN(p^{\psi_i})
\]
\[
=\sum_{\psi\in \mathrm{Sym}(A)}N(p^{\psi})=\sum_{\psi\in \mathrm{Sym}(A)}N(p)^{\psi}.
\]
Thus, for every $\sigma\in \mathrm{Sym}(A)$,
\[
N(p+q)^\sigma=\left(\sum_{\psi\in \mathrm{Sym}(A)}N(p)^{\psi}\right)^\sigma=\sum_{\psi\in \mathrm{Sym}(A)}\left(N(p)^{\psi}\right)^\sigma
\] 
\[
=\sum_{\psi\in \mathrm{Sym}(A)}N(p)^{\sigma\psi}=\sum_{\psi\in \mathrm{Sym}(A)}N(p)^{\psi}=N(p+q).
\]
By Proposition \ref{costanti-sim}, we conclude that $N(p+q)\in\mathcal{C}(A)\leq \mathcal{R}(A)$.

$(iv)$ Assume that $\mathbf{D}$ is {\rm CPA}, {\rm CA} and {\rm CWC} and that there exists $p\in \mathbf{D}$ such that $N(p)\neq N(0)$.
By $(iii)$ we know that there exists $q\in \mathbf{D}$ disjoint from $p$ such that $N(p+q)\in N(\mathbf{D})\cap \mathcal{C}(A)$. Let $p'\coloneq p+q$.
Since $\mathbf{D}$ is CA, we have that $p'\in \mathbf{D}$. Moreover, $N(p')=N(p)+N(q)$ and, since $N(p),N(q)\in \mathcal{N}_{\mathbb{N}_0}(A)$
and $N(p)\neq N(0)$, we deduce that $N(p')=N(r)$ for some $r\in\mathbb{N}$. Fixed now $k\in\mathbb{N}$, there exists $h\in\mathbb{N}$ such that $t\coloneq hr\ge k$. Now, due to the fact that $\mathbf{D}$ is {\rm CWC}, there exist $p_1,\ldots,p_h\in \mathbf{D}$ pairwise disjoint clones of $p'$. Since $\mathbf{D}$ is {\rm CA}, we have that $p''\coloneq p_1+\ldots+p_h\in \mathbf{D}$ and $N(p'')=N(t)\in N(\mathbf{D})$, which completes the proof. 
\end{proof}

\section{Social choice correspondences} 

Let $\mathbf{D}\subseteq \mathbf{R}(A)^*$. A social choice correspondence ({\sc scc}) on $\mathbf{D}$ is a function from $\mathbf{D}$ to $P_*(A)$. Thus, a {\sc scc} on $\mathbf{D}$ can be interpreted as a method for selecting a nonempty set of alternatives when the voters and their preferences generate a preference profile belonging to $\mathbf{D}$.

The following three {\sc scc}s are crucial for our purposes.
The net-outdegree {\sc scc}, denoted by $O$, is the {\sc scc} on $\mathbf{R}(A)^*$ 
that associates with every $p\in\mathbf{R}(A)^*$ the set
\[
O(p)\coloneq  \mathscr{O}(N(p));
\]
the net-indegree {\sc scc}, denoted by $I$, is the {\sc scc} on $\mathbf{R}(A)^*$ 
that associates with every $p\in\mathbf{R}(A)^*$ the set
\[
I(p)\coloneq  \mathscr{I}(N(p));
\]
the total {\sc scc}, denoted by $T$, is the {\sc scc} on $\mathbf{R}(A)^*$ 
that associates with any $p\in\mathbf{R}(A)^*$ the set $A$. 

In the next proposition we show that the aforementioned {\sc scc}s coincide on those preference profiles whose associated network is pseudo-symmetric, while they have different outcomes on the other preference profiles. The proof is an immediate consequence of 
Proposition \ref{flat-o} and then omitted.

\begin{proposition}\label{flat-o-prof} Let $p\in\mathbf{R}(A)^*$. Then the following facts hold true.
\begin{itemize}
\item[$(i)$] If $N(p)\in \mathcal{PS}(A)$, then $O(p)=I(p)=A$.
\item[$(ii)$] If $N(p)\not\in \mathcal{PS}(A)$, then $O(p)\cap I(p)=\varnothing$ and $|\{O(p),I(p),A\}|=3$.
\end{itemize}
\end{proposition}

Let us introduce now some properties that can be used to assess and compare {\sc scc}s.

\begin{definition}\label{defs-scc}
Let $\mathbf{D}\subseteq \mathbf{R}(A)^*$ and $F$ be a {\sc scc} on $\mathbf{D}$. 
\begin{itemize}
	\item $F$ satisfies {\rm neutrality} if $\mathbf{D}$ is {\rm CPA} and, for every $p\in \mathbf{D}$ and $\psi\in\mathrm{Sym}(A)$, $F(p^\psi)=\psi(F(p))$;
	\item $F$ satisfies  {\rm consistency} if $\mathbf{D}$ is {\rm CA} and, for every $p,p'\in \mathbf{D}$ disjoint  with $F(p)\cap F(p')\neq\varnothing$, $F(p+p')=F(p)\cap F(p')$;
	\item $F$ satisfies {\rm cancellation} if, for every $p\in \mathbf{D}$ such that $N(p)\in\mathcal{R}(A)$, $F(p)=A$;
\item $F$ satisfies {\rm anonymity} if, for every $p,p'\in \mathbf{D}$ with $\mathrm{Dom}(p)=\mathrm{Dom}(p')$ and $p'$ clone of $p$, we have $F(p)=F(p').$
\item $F$  satisfies {\rm Fishburn-anonymity} if, for every $p,p'\in \mathbf{D}$, $N(p)=N(p')$ implies $F(p)=F(p')$.
\end{itemize}
\end{definition}
The interpretation of those properties is natural. Neutrality means that all the alternatives are equally treated or, in other words, that no group of alternatives has an exogenous advantage over the others; consistency requires that if two disjoint sets of voters select two sets of alternative that are not disjoint, then the common alternatives must be the ones selected by the voters of the two sets, provided that the individual preferences on the alternatives are kept unchanged; cancellation means that all the alternatives must be selected when, for every pair of alternatives, the number of individuals preferring the first alternative to the second one equals the number of individuals preferring the second alternative to the first one; anonymity means that all the voters involved in the voting procedure equally contribute to the choice; Fishburn-anonymity means that the only information used to select the alternatives is the family of numbers obtained by counting, for every pair of alternatives, how many individuals prefer the first alternative to the second one and the second alternative to the first one.  Recall that, when  $\mathbf{D}=\mathbf{L}(A)^*$,  Fishburn-anonymity corresponds to the property C2 in Fishburn (1977). Cleary, Fishburn-anonymity  implies anonymity. Let us also stress that stating that a {\sc scc} satisfies neutrality and consistency implies to give an information on the properties of the domain of the {\sc scc}.

The next proposition shows that any restriction of the net-outdegree {\sc scc}, the net-indegree {\sc scc} and the total {\sc scc} satisfies the aforementioned properties, provided the chosen domain satisfies suitable conditions.

\begin{proposition}\label{Prop-OIT}
Let $\mathbf{D}\subseteq \mathbf{R}(A)^*$. Then the following facts hold true.
\begin{itemize}
\item[$(i)$] If $\mathbf{D}$ is {\rm CPA}, then $O_{|\mathbf{D}}$, $I_{|\mathbf{D}}$ and $T_{|\mathbf{D}}$ satisfy neutrality.
\item[$(ii)$] If $\mathbf{D}$ is {\rm CA} and {\rm CWC}, then $O_{|\mathbf{D}}$, $I_{|\mathbf{D}}$ and $T_{|\mathbf{D}}$ satisfy consistency.
\item[$(iii)$] $O_{|\mathbf{D}}$, $I_{|\mathbf{D}}$ and $T_{|\mathbf{D}}$ satisfy cancellation.
\item[$(iv)$] $O_{|\mathbf{D}}$, $I_{|\mathbf{D}}$ and $T_{|\mathbf{D}}$ satisfy Fishburn-anonymity.
\end{itemize}
\end{proposition}

\begin{proof}
We prove $(i)$, $(ii)$, $(iii)$ and $(iv)$ for $O_{\mathbf{D}}$ only. The proofs for $I_{|\mathbf{D}}$ are analogous, while the ones for $T_{|\mathbf{D}}$ are  straightforward.

$(i)$ Assume that $\mathbf{D}$ is {\rm CPA}. Let us prove first that $O_{|\mathbf{D}}$ satisfies neutrality. Let $p\in \mathbf{D}$ and $\psi\in\mathrm{Sym}(A)$. We have to prove that $O(p^\psi)=\psi(O(p))$. By Propositions \ref{nuovo-lemma}$(i)$ and \ref{Osoddisfa}$(i)$, we know that $N(\mathbf{D})$ is CPV and that $\mathscr{O}_{|N(\mathbf{D})}$ satisfies neutrality. Thus, we have that
\[
O(p^\psi)=\mathscr{O}(N(p^\psi))=\mathscr{O}(N(p)^\psi)=\psi(\mathscr{O}(N(p)))=\psi(O(p)).
\]

$(ii)$ Assume that $\mathbf{D}$ is {\rm CA} , and {\rm CWC}. Let us prove first that $O_{|\mathbf{D}}$ satisfies consistency. Let $p,p'\in \mathbf{D}$ disjoint with $O(p)\cap O(p')\neq \varnothing$. We have to prove that $O(p+p')=O(p)\cap O(p')$. By Propositions \ref{nuovo-lemma}$(ii)$ and \ref{Osoddisfa}$(ii)$, we know that $N(\mathbf{D})$ is CA and that $\mathscr{O}_{|N(\mathbf{D})}$ satisfies consistency. Recalling also that $\mathscr{O}(N(p))\cap \mathscr{O}(N(p'))=O(p)\cap O(p')\neq \varnothing$, we have that
\[
O(p+p')=\mathscr{O}(N(p+p'))=\mathscr{O}(N(p)+N(p'))=\mathscr{O}(N(p))\cap\mathscr{O}(N(p'))=O(p)\cap O(p').
\]

$(iii)$ Let us first show that $O_{|\mathbf{D}}$ satisfies cancellation. Let $p\in \mathbf{D}$ be such that $N(p)\in\mathcal{R}(A)$. We have to prove that $O(p)=A$. By Proposition \ref{Osoddisfa}$(iii)$, we know that $\mathscr{O}_{|N(\mathbf{D})}$ satisfies cancellation. Thus, we have that $O(p)=\mathscr{O}(N(p))=A$.

$(iv)$ Let us first that $O_{|\mathbf{D}}$ satisfies Fishburn-anonymity. Let $p,p'\in \mathbf{D}$ be such that $N(p)=N(p')$. Then we immediately have that $O(p)=\mathscr{O}(N(p))=\mathscr{O}(N(p'))=O(p')$.
\end{proof}

We present now an auxiliary result for {\sc scc}s satisfying consistency and cancellation inspired by Lemma 1 in Young (1974).
It will play a crucial role in proving the main characterization result for {\sc scc}, namely Theorem \ref{characterization-profiles}.

\begin{proposition}\label{passaggio-reti} Let $\mathbf{D}\subseteq \mathbf{R}(A)^*$ be {\rm CA}, {\rm CWC} and {\rm CRS}, and let $F$ be a {\sc scc} on $\mathbf{D}$ satisfying consistency and cancellation. Then $F$ satisfies Fishburn-anonymity. In particular, $F$ is  anonymous.
\end{proposition}

\begin{proof} 
Let us prove first that if $p,p'\in \mathbf{D}$ are disjoint clones, then $F(p)=F(p')$.
Let $p,p'\in \mathbf{D}$ be disjoint clones. Since $\mathbf{D}$ is CRS, there exists $q\in \mathbf{D}$ such that $q$ is disjoint from $p$ and $N(p+q)\in\mathcal{R}(A)$.  Since $\mathbf{D}$ is CWC, there exists $s\in \mathbf{D}$ such that $s$ is a clone of $q$ disjoint from $p$ and $p'$. We have then that $p+q$, $p+s$ and $p'+s$ are clones and hence $N(p+q)=N(p+s)=N(p'+s)\in\mathcal{R}(A)$. By cancellation of $F$, we have that $F(p+s)=A$ and $F(p'+s)=A$.
Using now consistency of $F$,  we get
$F(p)=F(p)\cap A=F(p)\cap F(p'+s)=F(p+(p'+s))=F(p'+(p+s))=F(p')\cap F(p+s)=F(p')\cap A =F(p').$

We prove now that $F$ satisfies Fishburn anonymity. Let $p,p'\in \mathbf{D}$ be such that $N(p)=N(p')$. 
Since $\mathbf{D}$ satisfies CWC, there exist $\hat{p},\hat{p}'\in\mathbf{D}$ disjoint such that $\hat{p}$ is a clone of $p$,  $\hat{p}'$ is a clone of $p'$ and $\hat{p}$ and $\hat{p}'$ are both disjoint from $p$ and $p'$.
Since $\mathbf{D}$ satisfies CRS, there exist $q\in\mathbf{D}$ such that $q$ is disjoint 
from $p$  and $N(p+q)\in\mathcal{R}(A)$.
Since $\mathbf{D}$ satisfies CWC, there exist $\hat{q}\in\mathbf{D}$ such that $\hat{q}$ is a clone of $q$ and $\hat{q}$ is disjoint from 
$p$, $q$, $\hat{p}$ and $\hat{p}'$.
By $N(p)=N(p')$, properties \eqref{prop-reti-profili} and the properties of clones, we get 
\[
N(\hat{p}+\hat{q})=N(\hat{p})+N(\hat{q})=N(p)+N(q)=N(p+q)\in \mathcal{R}(A),
\]
\[
N(\hat{p}'+\hat{q})=N(\hat{p}')+N(\hat{q})=N(p')+N(q)=N(p)+N(q)=N(p+q)\in \mathcal{R}(A).
\]
Thus, by cancellation of $F$, we obtain that $F(\hat{p}+\hat{q})=A$ and $F(\hat{p}'+\hat{q})=A$.
Applying now consistency of $F$, we obtain 
\[
F(\hat{p})=F(\hat{p})\cap A=F(\hat{p})\cap F(\hat{p}'+\hat{q})=F(\hat{p}+(\hat{p}'+\hat{q}))
\]
\[
=
F(\hat{p}'+(\hat{p}+\hat{q}))=F(\hat{p}')\cap F(\hat{p}+\hat{q})=F(\hat{p}')\cap A=F(\hat{p}').
\]
Since $\hat{p}$ is a disjoint clone of $p$ and $\hat{p}'$ is a disjoint clone of $p'$, by the first part of the proof, we finally obtain 
$F(p)=F(\hat{p})=F(\hat{p}')=F(p')$.
\end{proof}

\section{General characterization theorem for {\sc scc}s}

Let us introduce now a crucial property for sets of preference profiles. That property, called regularity, is strongly inspired to the concept of regularity for sets of networks presented in Definition \ref{regular-net}.

\begin{definition}\label{regular}
Let $\mathbf{D}\subseteq \mathbf{R}(A)^*$. We say that $\mathbf{D}$ is {\rm regular} if
\begin{itemize}
	\item[$(\alpha)$] $\mathbf{D}$ is {\rm CPA}, {\rm CA} and {\rm CWC};
	\item[$(\beta)$] for every $N\in \mathbb{Z} N(\mathbf{D})$, there exist $p,p'\in \mathbf{D}$ such that
$N(p)=N+N(p')$ with $N(p')\in \mathcal{R}(A)$;
	\item[$(\gamma)$] $\big(\mathbb{Q}N(\mathbf{D})+\mathcal{R}(A)\big)\cap \mathcal{PS}(A)\in\{\mathcal{R}(A),\mathcal{PS}(A)\}$;
	\item[$(\epsilon)$] $\big(\mathbb{Q}N(\mathbf{D})+\mathcal{R}(A)\big)\cap \{N_x:x\in A\}\neq\varnothing$.
\end{itemize}
\end{definition}
Regular sets of preference profiles are one of the main ingredients of Theorem \ref{characterization-profiles}, the main characterization result for social choice correspondences. In Section \ref{regset}, we will exhibit several remarkable examples of regular subsets of $\mathbf{R}(A)^*$. 

The next proposition helps apply the theory developed for networks to the framework of social choice correspondences. It also raises an interesting point in reference to the second part of Theorem \ref{characterization-profiles} by explaining that any regular set of preference profiles contains an element whose associated network is not pseudo-symmetric.

\begin{proposition}\label{reg-scc-ns}
Let $\mathbf{D}\subseteq \mathbf{R}(A)^*$ be regular. Then $N(\mathbf{D})$ is regular and $N(\mathbf{D})\not\subseteq \mathcal{PS}(A)$.
\end{proposition}

\begin{proof}
In order to prove that $N(\mathbf{D})$ is regular, we have to show that conditions $(a)$, $(b)$, $(c)$ and $(d)$
in Definition \ref{regular-net} are satisfied by $N(\mathbf{D})$. First note that, since $\mathbf{D}$ is CPA, CA and CWC, by Proposition \ref{nuovo-lemma},  
we have that $N(\mathbf{D})$ is CVP and CA. Thus, $N(\mathbf{D})$ satisfies condition  $(a)$. 
Since $\mathbf{D}$ satisfies conditions $(\beta)$, $(\gamma)$ and $(\epsilon)$, we immediately also have that $N(\mathbf{D})$ satisfies conditions $(b)$, $(c)$ and $(d)$. The fact that $N(\mathbf{D})\not\subseteq \mathcal{PS}(A)$ follows then from Proposition \ref{no-pseudo}.
\end{proof}

By Proposition \ref{Prop-OIT}, if $\mathbf{D}$ is CPA, CA and CWC, we know that $O_{|\mathbf{D}}$, $I_{|\mathbf{D}}$ and $T_{|\mathbf{D}}$ satisfy neutrality, consistency and cancellation. The next result shows that the regularity of $\mathbf{D}$ guarantees that $O_{|\mathbf{D}}$, $I_{|\mathbf{D}}$ and $T_{|\mathbf{D}}$ are the unique {\sc scc}s  on $\mathbf{D}$ fulfilling those properties.

\begin{theorem}\label{characterization-profiles}
Let $\mathbf{D}\subseteq \mathbf{R}(A)^*$ be regular and $F$ be a {\sc scc} on $\mathbf{D}$ satisfying neutrality, consistency and cancellation. Then 
$F\in \{O_{|\mathbf{D}}, I_{|\mathbf{D}}, T_{|\mathbf{D}}\}$. Moreover, for every $p\in \mathbf{D}$ such that $N(p)\not\in\mathcal{PS}(A)$, we have that
\begin{itemize}
\item[$(i)$] if $F(p)=O(p)$, then $F=O_{|\mathbf{D}}$;
\item[$(ii)$] if $F(p)=I(p)$, then $F=I_{|\mathbf{D}}$;
\item[$(iii)$]  if $F(p)=A$, then $F=T_{|\mathbf{D}}$.
\end{itemize}
\end{theorem}

\begin{proof} 
Let $\mathcal{D}\coloneq N(\mathbf{D})\subseteq \mathcal{N}(A)$.
By Proposition \ref{reg-scc-ns}, we know that $\mathcal{D}$ is regular.
By Proposition \ref{nuovo-lemma}$(iii)$, we also know that $\mathbf{D}$ is {\rm CA}, {\rm CWC} and {\rm CRS}. 

Consider then the network solution $\mathscr{F}:\mathcal{D}\to P_*(A)$, defined, for every $N\in \mathcal{D}$, by $\mathscr{F}(N)=F(p)$, where $p$ is any element in $\mathbf{D}$ such that $N(p)=N$. We show that $\mathscr{F}$ is well defined. Indeed, since $\mathbf{D}$ is CA, CWC and CRS and $F$ satisfies consistency and cancellation, by Proposition \ref{passaggio-reti}, we have that $F$ is Fishburn-anonymous. Thus, if $N\in \mathcal{D}$ and $p,p'\in \mathbf{D}$ are such that $N(p)=N(p')=N$, then we have $F(p)=F(p')$. Note in particular that, for every $p\in\mathbf{D}$, $F(p)=\mathscr{F}(N(p))$.

We show now that $\mathscr{F}$ satisfies neutrality. Let $N\in \mathcal{D}$ and $\psi\in\mathrm{Sym}(A)$. Consider $p\in \mathbf{D}$ such that $N(p)=N$. Following the definition of $\mathscr{F}$, using formulas \eqref{prop-reti-profili} and recalling that $F$ satisfies neutrality, we have 
\[
\mathscr{F}(N^{\psi})=\mathscr{F}(N(p)^{\psi})=\mathscr{F}(N(p^{\psi}))=F(p^{\psi})=\psi F(p)=\psi \mathscr{F}(N(p))=\psi \mathscr{F}(N).
\]
We also have that $\mathscr{F}$ satisfies cancellation. Indeed, let $N\in \mathcal{D}$ such that $N\in\mathcal{R}(A)$. Consider then $p\in \mathbf{D}$ such that $N(p)=N$. Since $F$ satisfies cancellation, we get $\mathscr{F}(N)=F(p)=A$.
Let us check now that $\mathscr{F}$ satisfies consistency too. Let $N,M\in \mathcal{D}$ be such that $\mathscr{F}(N)\cap \mathscr{F}(M)\neq \varnothing$. Consider $p,p'\in \mathbf{D}$ such that $N=N(p)$ and $M=N(p')$. Since $\mathbf{D}$ is CWC there are $\hat{p},\hat{p}'\in \mathbf{D}$ disjoint such that $\hat{p}$ is a clone of $p$ and $\hat{p}'$ is a clone of $p'$. Thus, $N=N(\hat{p})$ and $M=N(\hat{p}')$. Then
\[
F(\hat{p})\cap F(\hat{p}')=\mathscr{F}(N(\hat{p}))\cap \mathscr{F}(N(\hat{p}')) =\mathscr{F}(N)\cap \mathscr{F}(M)\neq\varnothing,
\]
and hence, using consistency of $F$,  we get 
\[
\mathscr{F}(N+M)=\mathscr{F}(N(\hat{p})+N(\hat{p}'))=\mathscr{F}(N(\hat{p}+\hat{p}'))=F(\hat{p}+\hat{p}')=F(\hat{p})\cap F(\hat{p}')=\mathscr{F}(N)\cap \mathscr{F}(M).
\]
By Theorem \ref{characterization-net}, we deduce that $\mathscr{F}\in\{\mathscr{O}_{|\mathcal{D}},\mathscr{I}_{|\mathcal{D}},\mathscr{T}_{|\mathcal{D}}\}$. 
Assume now that $\mathscr{F}=\mathscr{O}_{|\mathcal{D}}$. We have that, for every $p\in\mathbf{D}$, $F(p)=\mathscr{F}(N(p))=\mathscr{O}(N(p))=O(p)$. Thus, $F=O_{|\mathbf{D}}$. An analogous argument proves that if  $\mathscr{F}=\mathscr{I}_{|\mathcal{D}}$, then $F=I_{|\mathbf{D}}$ and that if 
$\mathscr{F}=\mathscr{T}_{|\mathcal{D}}$, then $F=T_{|\mathbf{D}}$. We conclude then that $F\in \{O_{|\mathbf{D}}, I_{|\mathbf{D}}, T_{|\mathbf{D}}\}$.

By Proposition \ref{reg-scc-ns}, there exists $p\in \mathbf{D}$ such that $N(p)\not\in\mathcal{PS}(A)$. By Proposition \ref{flat-o-prof}, we have that $|\{O(p),I(p),A\}|=3$. As a consequence, 
if $F(p)=O(p)$, then we have $F(p)\neq I(p)$ and $F(p)\neq T(p)$ and hence $F=O_{|\mathbf{D}}$; 
if $F(p)=I(p)$, then we have $F(p)\neq O(p)$ and $F(p)\neq T(p)$ and hence $F=I_{|\mathbf{D}}$;
if $F(p)=T(p)$, then we have $F(p)\neq O(p)$ and $F(p)\neq I(p)$ and hence $F=T_{|\mathbf{D}}$.
That proves $(i)$, $(ii)$ and $(iii)$.
\end{proof}

\subsection{Some regular subsets of $\mathbf{R}(A)^*$}\label{regset}

Because of Theorem \ref{characterization-profiles}, we can clearly understand the importance of determining families of regular subsets of $\mathbf{R}(A)^*$. 
Indeed, each regular set of networks leads to a specific characterization theorem. Theorem \ref{reg-scc} at the end of this section shows that many interesting subsets of $\mathbf{R}(A)^*$ are actually regular. In order to present that result we need to introduce some notable classes of preference relations.

Let $R\in\mathbf{R}(A)$. We say that $R$ is reflexive if $A^2_d\subseteq R$; complete if, for every $x,y\in A$, $(x,y)\in R$ or $(y,x)\in R$; transitive if, for every $x,y,z\in A$, $(x,y)\in R$ and $(y,z)\in R$ implies $(x,z)\in R$; antisymmetric if, for every $x,y\in A$, $(x,y)\in R$ and $(y,x)\in R$ implies $x=y$. Note that a complete relation is necessarily reflexive. 

We say that $R$ is an order if $R$ is complete and transitive; a linear order if $R$ is complete, transitive and antisymmetric; a partial order if $R$ is reflexive, antisymmetric and transitive. The set of orders on $A$ is denoted by $\mathbf{O}(A)$; the set of linear orders on $A$ is denoted by $\mathbf{L}(A)$; the set of partial orders on $A$ is denoted by $\mathbf{P}(A)$. Of course, we have $ \mathbf{L}(A)\neq \varnothing$, $\mathbf{L}(A)\subseteq \mathbf{O}(A)$ and $ \mathbf{L}(A)\subseteq \mathbf{P}(A)$. 
Note that, if $R\in\mathbf{O}(A)$, then $IN(R,x)=\varnothing$; if $R\in\mathbf{P}(A)$, then $I(R,x)=\{x\}$ (see \eqref{LU}).

Let $R\in\mathbf{O}(A)$. Then we have that  
\begin{equation}\label{maxorder}
\mathrm{Max}(R)= \{x\in A: \forall y\in A,\; x\succeq_R y\}, \quad \mathrm{Min}(R)=\{x\in A: \forall y\in A,\; y\succeq_R x\};
\end{equation}
$\mathrm{Max}(R)$ and $\mathrm{Min}(R)$ are nonempty; for every $x\in \mathrm{Max}(R)$, $\mathrm{Max}(R)=I(R,x)$; for every $x\in \mathrm{Min}(R)$, $\mathrm{Min}(R)=I(R,x)$; $\mathrm{Max}(R)\cap \mathrm{Min}(R)\neq\varnothing$ is equivalent to $\mathrm{Max}(R)=\mathrm{Min}(R)=A$ that, in turn, is equivalent to $R=A^2$; for every $x\in \mathrm{Max}(R)$ and $y\in A\setminus \mathrm{Max}(R)$, $x\succ_R y$; for every $x\in \mathrm{Min}(R)$ and $y\in A\setminus \mathrm{Min}(R)$, $y\succ_R x$; if $R\in\mathbf{L}(A)$, then $|\mathrm{Max}(R)|=|\mathrm{Min}(R)|=1$. We say that $R$ is a dichotomous order if  $\mathrm{Max}(R)\cup \mathrm{Min}(R)=A$;
a top-truncated order if, for every $x\in A\setminus \mathrm{Min}(R)$, $|I(R,x)|=1$. The set of dichotomous orders on $A$ is denoted by $\mathbf{Di}(A)$; the set of top-truncated orders on $A$ is denoted by $\mathbf{T}(A)$. Note that $\mathbf{L}(A)\subseteq \mathbf{T}(A)$. In particular, $ \mathbf{T}(A)\neq \varnothing.$

For every $t\in [m]$, we set $\mathbf{Di}_t(A)\coloneq \{R\in \mathbf{Di}(A): |\mathrm{Max}(R)|=t\}$ and,  given $X\subseteq [m]$, we set 
\[
\mathbf{Di}_X(A)\coloneq \bigcup_{t\in X} \mathbf{Di}_t(A).
\]
For every $s\in [m-1]_0$, we set $\mathbf{T}_s(A)\coloneq \{R\in \mathbf{T}(A): |A\setminus \mathrm{Min}(R)|=s\}$ and given $Y\subseteq [m-1]_0$, we set 
\[
\mathbf{T}_Y(A)\coloneq \bigcup_{s\in Y} \mathbf{T}_s(A).
\]
Note that, for every $t\in [m]$, $\mathbf{Di}_t(A)=\mathbf{Di}_{\{t\}}(A) \neq\varnothing$ and, for every $s\in [m-1]_0$,  $\mathbf{T}_s(A)=\mathbf{T}_{\{s\}}(A) \neq\varnothing$. Moreover, $\mathbf{Di}_m(A)=\mathbf{T}_0(A)=\{A^2\}$,  $\mathbf{Di}_1(A)= \mathbf{T}_1(A)$, $\mathbf{T}_{m-1}(A)=\mathbf{L}(A)$.

Let $\mathbf{D}(A)\subseteq \mathbf{R}(A)$. We say that $\mathbf{D}(A)$ is closed under permutation of alternatives if, for every $R\in \mathbf{D}(A)$ and $\psi\in\mathrm{Sym}(A)$, $R^\psi\in \mathbf{D}(A)$. It is easily checked that the sets $\mathbf{R}(A)$, $\mathbf{P}(A)$, $\mathbf{O}(A)$, $\mathbf{L}(A)$, $\mathbf{Di}_X(A)$ and $\mathbf{T}_Y(A)$ are closed under permutation of alternatives.

 We set 
\[
\mathbf{D}(A)^*\coloneq \bigcup_{\substack{
I\subseteq V\\
 I\neq\varnothing\, \mathrm{finite}
}}
 \mathbf{D}(A)^I.
\]
Note that $\mathbf{D}(A)^*\subseteq \mathbf{R}(A)^*$ and that  $N(\mathbf{D}(A)^*)=\{N(0)\}$ if and only if $\mathbf{D}(A)\subseteq P(A^2_d)$.

The next theorem, whose proof is in the appendix, shows that the set $\mathbf{D}(A)^*$ is regular for many remarkable qualifications of $\mathbf{D}(A)$ typically used in social choice theory. 

\begin{theorem}\label{reg-scc}
The following facts hold true.
\begin{itemize}
\item[$(i)$] Let $ \mathbf{D}(A)\subseteq \mathbf{R}(A)$ be closed under permutation of alternatives and such that $\mathbf{L}(A)\subseteq \mathbf{D}(A)$. Then $\mathbf{D}(A)^*$ is regular. In particular, $\mathbf{R}(A)^*$, $\mathbf{P}(A)^*$, $\mathbf{O}(A)^*$, $\mathbf{T}(A)^*$ and $\mathbf{L}(A)^*$ are regular.
\item[$(ii)$]Let $X\subseteq [m]$ with $X\cap [m-1]\neq\varnothing$. Then
$\mathbf{Di}_X(A)^*$ is regular. In particular, $\mathbf{Di}(A)^*$ is regular and, for every $t\in[m-1]$, $\mathbf{Di}_t(A)^*$ and $\mathbf{Di}_{[t]}(A)^*$ are regular.
\item[$(iii)$]Let $Y\subseteq [m-1]_0$ with $Y\cap [m-1]\neq\varnothing$. Then
$\mathbf{T}_Y(A)^*$ is regular. In particular,
$\mathbf{T}(A)^*$ is regular and, for every $s\in[m-1]$, $\mathbf{T}_s(A)^*$ and  $\mathbf{T}_{[s]}(A)^*$ are regular.
\end{itemize}
\end{theorem}

\section{New and old characterization results for classic {\sc scc}s}\label{ultima}

In this section we exploit the power and the generality of our approach to prove within a unified framework new and known characterization results for some classic {\sc scc}s. In order to get the goal, we need first to recognize that the net-oudegree {\sc scc} can be seen as the common root of many well-known {\sc scc}s (Section \ref{Ouguale}). Then we need to introduce and discuss some further properties for {\sc scc}s (Section \ref{faoc}).

\subsection{Classic {\sc scc}s and the net-outdegree {\sc scc}}\label{Ouguale}

In this section we show that many well-know and classic {\sc scc}s, namely the Borda rule, the Partial Borda rule, the Averaged Borda rule, the Approval Voting, the Plurality rule and the anti-Plurality rule, correspond to suitable restrictions of the net-outdegree {\sc scc}. Let us begin by recalling the definition of the aforementioned rules.

Let us recall first the definition of the Borda rule on $\mathbf{O}(A)^*$ (Mas-Colell et al., 1997, Example 21.C.1; Vorsatz, 2008), a definition that extends to 
$\mathbf{O}(A)^*$ the classic version of the rule defined on $\mathbf{L}(A)^*$. For every $R\in\mathbf{O}(A)$ and $x\in A$, the Borda score of $x$ in $R$ is defined by
\[
b(R,x)\coloneq \frac{1}{|I(R,x)|}\sum_{i=0}^{|I(R,x)|-1} (|L(R,x)|+i)= |L(R,x)|+ \frac{|I(R,x)|}{2}-\frac{1}{2}.
\]
For every $p\in \mathbf{O}(A)^*$ and $x\in A$, the Borda score of $x$ in $p$ is defined by
\[
b(p,x)\coloneq \sum_{i\in \mathrm{Dom}(p)}b(p(i),x).
\]
The Borda rule is then the {\sc scc} defined, for every $p\in \mathbf{O}(A)^*$, by 
\[
BOR(p)\coloneq \underset{x\in A}{\mathrm{argmax}}\; b(p,x).
\] 

The Partial Borda rule is a {\sc scc} on $\mathbf{P}(A)^*$ introduced by Cullinan et al. (2014) and defined as follows.
Given $R\in\mathbf{P}(A)$, the partial Borda score of $x$ in $R$ is defined by
\[
pb(R,x)\coloneq 2|L(R,x)|+ |IN(R,x)|.
\]
For every $p\in \mathbf{P}(A)^*$ and $x\in A$, the partial Borda score of $x$ in $p$ is defined by
\[
pb(p,x)\coloneq \sum_{i\in \mathrm{Dom}(p)}pb(p(i),x).
\]
The Partial Borda rule is then the {\sc scc} defined, for every $p\in \mathbf{P}(A)^*$, by 
\[
PBOR(p)\coloneq \underset{x\in A}{\mathrm{argmax}}\; pb(p,x).
\]

The Averaged Borda rule is a {\sc scc} on $\mathbf{T}(A)^*$ introduced by Dummett (1997). Such a {\sc scc}, here denoted by $ABOR$, coincides with $BOR_{|\mathbf{T}(A)^*}$.

The Approval Voting is a {\sc scc } on $\mathbf{Di}(A)^*$ defined by Brams and Fishburn (1978) as follows.
For every $p\in \mathbf{Di}(A)^*$ and $x\in A$, the approval score of $x$ in $p$ is defined as
\begin{equation}\label{appr-score}
av(p,x)\coloneq |\{i\in \mathrm{Dom}(p): x\in \mathrm{Max}(p(i))\}|.
\end{equation}
The Approval Voting is then the {\sc scc} defined, for every $p\in \mathbf{Di}(A)^*$, by 
\[
AV(p)\coloneq \underset{x\in A}{\mathrm{argmax}}\; av(p,x).
\]
Finally, the Plurality rule, here denoted by $PLU$, is the {\sc scc} on $\mathbf{Di}_1(A)^*$ given by to $AV_{|\mathbf{Di}_1(A)^*}$ and 
the anti-Plurality rule, here denoted by $APLU$, is the {\sc scc} on $\mathbf{Di}_{m-1}(A)^*$ given by to $AV_{|\mathbf{Di}_{m-1}(A)^*}$.

In the rest of the section we prove that each of above considered the rules always selects the same alternatives as the net-outdegree {\sc scc}.
In what follows, for every $p\in \mathbf{R}(A)^*$ and $x\in A$, we set
\begin{equation}\label{nuova-O}
o(p,x)\coloneq \sum_{i\in \mathrm{Dom}(p)}|L(p(i),x)|-\sum_{i\in \mathrm{Dom}(p)}|U(p(i),x)|.
\end{equation}
The next proposition explains the strong relation between the quantity $o(p,x)$ and the net-outdegree {\sc scc} and constitutes a very useful preliminary result.

\begin{proposition}\label{O-interpret}
Let $p\in \mathbf{R}(A)^*$ and $x\in A$. Then $\delta^{N(p)}(x)=o(p,x)$. In particular, 
\[
O(p)=\underset{x\in A}{\mathrm{argmax}}\; o(p,x).
\]
\end{proposition}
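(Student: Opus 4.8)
The plan is to establish the pointwise identity $\delta^{N(p)}(x)=o(p,x)$ for every $x\in A$; the description of $O(p)$ as an argmax then follows immediately, since by definition $O(p)=\mathscr{O}(N(p))=\underset{x\in A}{\mathrm{argmax}}\,\delta^{N(p)}(x)$, and two functions that agree have the same argmax, so this equals $\underset{x\in A}{\mathrm{argmax}}\,o(p,x)$.

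First I would reduce everything to the case of a single relation. Since $N(p)=\sum_{i\in\mathrm{Dom}(p)}N(p(i))$ and $\delta$ is linear (equivalently, reading off \eqref{deltadef} with $c_p=\sum_{i}c_{p(i)}$), one gets $\delta^{N(p)}(x)=\sum_{i\in\mathrm{Dom}(p)}\delta^{N(p(i))}(x)$. On the other hand, directly from the definitions of $L(p,x)$ and $U(p,x)$ one has $o(p,x)=L(p,x)-U(p,x)=\sum_{i\in\mathrm{Dom}(p)}\big(|L(p(i),x)|-|U(p(i),x)|\big)$. Hence it suffices to prove, for a single $R\in\mathbf{R}(A)$, the identity $\delta^{N(R)}(x)=|L(R,x)|-|U(R,x)|$.

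Second, for a fixed $R$ I would unfold $\delta^{N(R)}(x)=\sum_{y\neq x}c_R(x,y)-\sum_{y\neq x}c_R(y,x)=|\{y\neq x:(x,y)\in R\}|-|\{y\neq x:(y,x)\in R\}|$, using that $c_R$ takes values in $\{0,1\}$. I would then decompose each counted set according to the classes $L(R,x)$, $U(R,x)$, $I(R,x)$, $IN(R,x)$, which partition $A$. For $y\neq x$, the condition $(x,y)\in R$ holds exactly when $y\in L(R,x)$ or $y\in I(R,x)$, while $(y,x)\in R$ holds exactly when $y\in U(R,x)$ or $y\in I(R,x)$. Writing $I'(R,x):=I(R,x)\setminus\{x\}$, this yields $|\{y\neq x:(x,y)\in R\}|=|L(R,x)|+|I'(R,x)|$ and $|\{y\neq x:(y,x)\in R\}|=|U(R,x)|+|I'(R,x)|$; subtracting, the indifference term cancels and leaves $\delta^{N(R)}(x)=|L(R,x)|-|U(R,x)|$. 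Summing over $i\in\mathrm{Dom}(p)$ then gives $\delta^{N(p)}(x)=o(p,x)$.

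The computation is essentially bookkeeping, and the only genuinely delicate point — the step I would flag as the crux — is the correct treatment of indifferent and incomparable pairs. Note that $c_R(x,y)=1$ whenever $(x,y)\in R$, irrespective of whether $(y,x)\in R$, so a tie ($y\in I(R,x)$) contributes to \emph{both} $\sum_{y\neq x}c_R(x,y)$ and $\sum_{y\neq x}c_R(y,x)$, whereas an incomparable pair ($y\in IN(R,x)$) contributes to neither; it is precisely this symmetric contribution of ties that makes them cancel in the net-outdegree, so that $\delta^{N(R)}(x)$ records only the strict comparisons counted by $L(R,x)$ and $U(R,x)$. I would also take mild care that the diagonal element $x$ may belong to $I(R,x)$ or $IN(R,x)$, which is why the sums run over $y\neq x$ and why passing to $I'(R,x)$ is harmless.
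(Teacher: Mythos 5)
Your proof is correct and follows essentially the same route as the paper's: unfold $\delta^{N(p)}(x)$ voter by voter, partition the alternatives according to $L$, $U$, $I$, $IN$, and observe that ties contribute equally to both sums (hence cancel) while incomparable pairs contribute to neither. Your extra care with the diagonal element (passing to $I(R,x)\setminus\{x\}$) is a harmless refinement of the same bookkeeping the paper performs more tersely.
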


\begin{proof}
We have that
\[
\delta^{N(p)}(x)=\sum_{y\in A\setminus \{x\}}c_p(x,y)-\sum_{y\in A\setminus \{x\}}c_p(y,x)
=\sum_{i\in \mathrm{Dom}(p)}\sum_{y\in A\setminus \{x\}}\left(c_{p(i)}(x,y)-c_{p(i)}(y,x)\right).
\]
Clearly, for every $i\in  \mathrm{Dom}(p)$,  the alternatives in $I(p(i),x)\cup IN(p(i),x)$ contribute $0$ to the above sum. It follows that 
\[
\delta^{N(p)}(x)=\sum_{i\in \mathrm{Dom}(p)}\left(\sum_{y\in L(p(i),x)}1-\sum_{y\in U(p(i),x)}1\right)=\sum_{i\in \mathrm{Dom}(p)}\Big(|L(p(i),x)|-|U(p(i),x)|\Big)=o(p,x).
\]
\end{proof}

We are now ready to prove all the announced results.

\begin{proposition}\label{Borda-equal-O}
$BOR=O_{|\mathbf{O}(A)^*}$. In particular, $BOR_{|\mathbf{L}(A)^*}=O_{|\mathbf{L}(A)^*}$.
\end{proposition}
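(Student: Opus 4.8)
The plan is to show that, for each individual order, the Borda score is a fixed positive-affine transform of the net-outdegree contribution, and then to sum over voters so that the transform survives and the \emph{argmax} is unchanged. First I would fix a single order $R\in\mathbf{O}(A)$ and an alternative $x\in A$ and exploit completeness: since $R$ is complete it is reflexive and $IN(R,x)=\varnothing$, so $L(R,x)$, $U(R,x)$ and $I(R,x)$ partition $A$, whence $|I(R,x)|=m-|L(R,x)|-|U(R,x)|$. Substituting this into the definition of the Borda score gives, after a one-line computation,
\[
b(R,x)=|L(R,x)|+\frac{|I(R,x)|}{2}-\frac12=\frac12\bigl(|L(R,x)|-|U(R,x)|\bigr)+\frac{m-1}{2}.
\]

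Next I would connect the parenthesised quantity to $\delta$. For the single-voter profile $p$ with $\mathrm{Dom}(p)=\{i\}$ and $p(i)=R$ we have $N(p)=N(R)$ and, directly from \eqref{deltadef} (equivalently, from the computation underlying Proposition \ref{O-interpret}), $\delta^{N(R)}(x)=|L(R,x)|-|U(R,x)|$. Hence $b(R,x)=\tfrac12\,\delta^{N(R)}(x)+\tfrac{m-1}{2}$.

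Then I would sum over voters and invoke linearity of $\delta$ together with $N(p)=\sum_{i}N(p(i))$. For any $p\in\mathbf{O}(A)^*$,
\[
b(p,x)=\sum_{i\in\mathrm{Dom}(p)}b(p(i),x)=\frac12\sum_{i\in\mathrm{Dom}(p)}\delta^{N(p(i))}(x)+|\mathrm{Dom}(p)|\,\frac{m-1}{2}=\frac12\,o(p,x)+|\mathrm{Dom}(p)|\,\frac{m-1}{2},
\]
where the last equality uses $\sum_{i}\delta^{N(p(i))}=\delta^{N(p)}$ and $\delta^{N(p)}(x)=o(p,x)$ from Proposition \ref{O-interpret}.

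Finally I would observe that the additive term $|\mathrm{Dom}(p)|\,\tfrac{m-1}{2}$ does not depend on $x$ and that the multiplicative factor $\tfrac12$ is strictly positive; therefore the maps $x\mapsto b(p,x)$ and $x\mapsto o(p,x)$ attain their maxima at exactly the same alternatives. This gives $BOR(p)=\underset{x\in A}{\mathrm{argmax}}\,b(p,x)=\underset{x\in A}{\mathrm{argmax}}\,o(p,x)=O(p)$ for every $p\in\mathbf{O}(A)^*$, i.e.\ $BOR=O_{|\mathbf{O}(A)^*}$, and the stated particular case follows since $\mathbf{L}(A)^*\subseteq\mathbf{O}(A)^*$. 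There is no genuine obstacle here; the only step needing care is the appeal to completeness to eliminate the incomparability set $IN(R,x)$, which is precisely what makes the affine identity hold (and would fail for merely partial orders).
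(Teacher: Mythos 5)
Your proof is correct and follows essentially the same route as the paper's: the key step in both is the affine identity $b(R,x)=\tfrac12\bigl(|L(R,x)|-|U(R,x)|\bigr)+\tfrac{m-1}{2}$, obtained from $IN(R,x)=\varnothing$, followed by summation over voters and the observation that a positive affine transformation (with the constant independent of $x$) leaves the argmax unchanged, via Proposition \ref{O-interpret}. Your intermediate detour through $\delta^{N(R)}(x)$ for single relations and linearity of $\delta$ is just a restatement of the computation already contained in Proposition \ref{O-interpret}, so there is no substantive difference.
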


\begin{proof} Let $R\in\mathbf{O}(A)$ and $x\in A$. Since $IN(R,x)=\varnothing$, we have that $A$ is the disjoint union of $L(R,x), U(R,x), I(R,x)$. It follows that $m=|L(R,x)|+|U(R,x)|+|I(R,x)|.$
Thus, we have
\begin{equation*}\label{Borda1}
b(R,x)=|L(R,x)|+\frac{1}{2}(m-|L(R,x)|-|U(R,x)|)-\frac{1}{2}=
\frac{1}{2}\left(|L(R,x)|-|U(R,x)|+m-1 \right).
\end{equation*}
As a consequence, recalling \eqref{nuova-O}, for every $p\in \mathbf{O}(A)^*$ and $x\in A$ we get
\[
b(p,x)=\sum_{i\in \mathrm{Dom}(p)}\frac{1}{2}\left(|L(p(i),x)|-|U(p(i),x)|+m-1\right)=\frac{1}{2} o(p,x) + \frac{m-1}{2}|\mathrm{Dom}(p)|.
\] 
Thus, by Proposition \ref{O-interpret}, for every $p\in \mathbf{O}(A)^*,$ we have
\[
BOR(p)= \underset{x\in A}{\mathrm{argmax}}\;\left(\frac{1}{2} o(p,x) + \frac{m-1}{2}|\mathrm{Dom}(p)|\right)=\underset{x\in A}{\mathrm{argmax}}\; o(p,x)=O(p).
\]
\end{proof}

\begin{proposition}\label{PBorda-equal-O}
$PBOR=O_{|\mathbf{P}(A)^*}$. 
\end{proposition}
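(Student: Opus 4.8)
The plan is to mirror the proof of Proposition \ref{Borda-equal-O}: I would show that, for a fixed profile, the partial Borda score $pb(p,x)$ differs from the net-outdegree score $o(p,x)$ only by an additive constant that is independent of $x$, so that the two {\sc scc}s select the same argmax. The whole argument rests on Proposition \ref{O-interpret}, which identifies $O(p)=\operatorname{argmax}_{x\in A} o(p,x)$ where $o(p,x)=L(p,x)-U(p,x)$.

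First I would work at the level of a single partial order $R\in\mathbf{P}(A)$. The key structural fact, already recorded in the excerpt, is that for a partial order one has $I(R,x)=\{x\}$, hence $|I(R,x)|=1$. Since $A$ is the disjoint union of $L(R,x)$, $U(R,x)$, $I(R,x)$ and $IN(R,x)$, this gives
\[
m=|L(R,x)|+|U(R,x)|+1+|IN(R,x)|,
\]
so that $|IN(R,x)|=m-1-|L(R,x)|-|U(R,x)|$. Substituting into the definition $pb(R,x)=2|L(R,x)|+|IN(R,x)|$ yields
\[
pb(R,x)=|L(R,x)|-|U(R,x)|+m-1.
\]
This is exactly the step where the partial-order hypothesis is used; it is the analogue of the computation $b(R,x)=\tfrac12(|L(R,x)|-|U(R,x)|+m-1)$ in Proposition \ref{Borda-equal-O}, and I expect it to be the only non-routine point.

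Next I would sum over voters. For $p\in\mathbf{P}(A)^*$, summing the identity above over $i\in\mathrm{Dom}(p)$ and recalling \eqref{nuova-O} gives
\[
pb(p,x)=\sum_{i\in\mathrm{Dom}(p)}\big(|L(p(i),x)|-|U(p(i),x)|+m-1\big)=o(p,x)+(m-1)\,|\mathrm{Dom}(p)|.
\]
Since the term $(m-1)\,|\mathrm{Dom}(p)|$ does not depend on $x$, the maximizers of $pb(p,\cdot)$ and of $o(p,\cdot)$ coincide. Therefore, using Proposition \ref{O-interpret},
\[
PBOR(p)=\underset{x\in A}{\mathrm{argmax}}\; pb(p,x)=\underset{x\in A}{\mathrm{argmax}}\; o(p,x)=O(p),
\]
which proves $PBOR=O_{|\mathbf{P}(A)^*}$. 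The argument is entirely elementary once the single-relation identity is established; no new machinery beyond Proposition \ref{O-interpret} is required.
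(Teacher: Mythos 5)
Your proof is correct and follows essentially the same route as the paper's: both establish the single-relation identity $pb(R,x)=|L(R,x)|-|U(R,x)|+m-1$ using $|I(R,x)|=1$ and the partition of $A$ into $L(R,x)$, $U(R,x)$, $I(R,x)$, $IN(R,x)$, then sum over voters and invoke Proposition \ref{O-interpret}. The only cosmetic difference is that you solve explicitly for $|IN(R,x)|$ before substituting, whereas the paper rearranges the terms directly; the content is identical.
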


\begin{proof}
We first claim that, for every $R\in\mathbf{P}(A)$ and $x\in A$, we have 
\begin{equation}\label{PBorda1}
pb(R,x)=|L(R,x)|-|U(R,x)|+m-1.
\end{equation}
Indeed, recalling that $|I(R,x)|=1$, we have
\[
pb(R,x)=2|L(R,x)|+|IN(R,x)|=|L(R,x)|+|L(R,x)|+|IN(R,x)|
\]
\[
=|L(R,x)|+(m-U(R,x)-I(R,x))=|L(R,x)|-|U(R,x)|+m-1.
\]
Consider now $p\in \mathbf{P}(A)^*$ and $x\in A$. By \eqref{PBorda1}, we get
\[
pb(p,x)=\sum_{i\in \mathrm{Dom}(p)}\left(|L(p(i),x)|-|U(p(i),x)|+m-1\right)=o(p,x) + (m-1)|\mathrm{Dom}(p)|.
\] 
Thus, by Proposition \ref{O-interpret}, for every $p\in \mathbf{P}(A)^*$, we have 
\[
PBOR(p)= \underset{x\in A}{\mathrm{argmax}}(o(p,x) + (m-1)|\mathrm{Dom}(p)|)=\underset{x\in A}{\mathrm{argmax}}\; o(p,x)=O(p).
\]
\end{proof}

\begin{proposition}\label{ABorda-equal-O}
$ABOR=O_{|\mathbf{T}(A)^*}$. 
\end{proposition}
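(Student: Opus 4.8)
The plan is to reduce the statement to the already-established identity $BOR = O_{|\mathbf{O}(A)^*}$ (Proposition \ref{Borda-equal-O}), since the Averaged Borda rule has been defined precisely as a restriction of the Borda rule. The first step is to record the inclusion $\mathbf{T}(A) \subseteq \mathbf{O}(A)$: by definition a top-truncated order is an order $R$ subject to the extra requirement that $|I(R,x)|=1$ for every $x \in A\setminus B(R)$, so every top-truncated order is in particular an order. Consequently $\mathbf{T}(A)^* \subseteq \mathbf{O}(A)^*$, and the restriction $BOR_{|\mathbf{T}(A)^*}$ is well defined.

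Next I would invoke the definition of the Averaged Borda rule recalled immediately above the statement, namely that $ABOR = BOR_{|\mathbf{T}(A)^*}$. Combining this with Proposition \ref{Borda-equal-O}, which gives $BOR = O_{|\mathbf{O}(A)^*}$, the chain of equalities
\[
ABOR = BOR_{|\mathbf{T}(A)^*} = \left(O_{|\mathbf{O}(A)^*}\right)_{|\mathbf{T}(A)^*} = O_{|\mathbf{T}(A)^*}
\]
completes the argument, the last equality being the elementary fact that restricting twice to nested domains agrees with restricting once to the smaller domain $\mathbf{T}(A)^*$.

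There is no genuine obstacle here: the only point deserving a moment's attention is that $\mathbf{T}(A)$ is literally a subclass of $\mathbf{O}(A)$, so that the Borda score formula used to define $BOR$ on $\mathbf{O}(A)^*$ applies verbatim to top-truncated profiles and coincides there with $ABOR$. Once that inclusion is noted, the proposition is a direct specialization of Proposition \ref{Borda-equal-O}, and no further computation with the net-outdegree scores $o(p,x)$ is required.
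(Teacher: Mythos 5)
Your proof is correct and matches the paper's own argument: the paper likewise observes that $ABOR=BOR_{|\mathbf{T}(A)^*}$ by definition and then restricts Proposition \ref{Borda-equal-O} to $\mathbf{T}(A)^*\subseteq\mathbf{O}(A)^*$. No further computation is needed, exactly as you note.
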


\begin{proof}
It immediately follows from the fact that $ABOR=BOR_{|\mathbf{T}(A)^*}$ and Proposition \ref{Borda-equal-O}.
\end{proof}

\begin{proposition}\label{AV-equal-O}
$AV=O_{|\mathbf{Di}(A)^*}$.
\end{proposition}

\begin{proof}
Let $p\in \mathbf{Di}(A)^*$ and $x\in A$. Then, by \eqref{nuova-O}, we have
\[
o(p,x)=\sum_{i\in \mathrm{Dom}(p)}(|L(p(i),x)|-|U(p(i),x)|)=\sum_{\substack{i\in \mathrm{Dom}(p),\\ x\in \mathrm{Max}(p(i))}}(m-| \mathrm{Max}(p(i))|)+\sum_{\substack {i\in \mathrm{Dom}(p),\\ x\notin  \mathrm{Max}(p(i))}}-| \mathrm{Max}(p(i))|
\]
\[
=m\cdot av(p,x)-\sum_{i\in \mathrm{Dom}(p)}| \mathrm{Max}(p(i))|.
\]
Thus, for every $p\in \mathbf{Di}(A)^*$, we have 
\[
O(p)= \underset{x\in A}{\mathrm{argmax}}\;o(p,x)=\underset{x\in A}{\mathrm{argmax}}\; av(p,x)=AV(p).
\]
\end{proof}

\begin{proposition}\label{PLU}
$PLU=O_{|\mathbf{Di}_1(A)^*}$.
\end{proposition}
\begin{proof}
Simply note that $\mathbf{Di}_1(A)^*\subseteq \mathbf{D}(A)^*$ and apply Proposition \ref{AV-equal-O}.
\end{proof}

\begin{proposition}\label{APLU}
$APLU=O_{|\mathbf{Di}_{m-1}(A)^*}$.
\end{proposition}
\begin{proof}
Simply note that $\mathbf{Di}_{m-1}(A)^*\subseteq \mathbf{D}(A)^*$ and apply Proposition \ref{AV-equal-O}.
\end{proof}

It is worth noting that, since $\mathbf{Di}(A)^*\subseteq \mathbf{O}(A)^*$, by Proposition \ref{Borda-equal-O} we get the known equality $AV=BOR_{|\mathbf{Di}(A)^*}$ (Vorsatz, 2008, Proposition 1). Moreover, since $\mathbf{Di}_1(A)^*\subseteq \mathbf{O}(A)^*$ and $\mathbf{Di}_{m-1}(A)^*\subseteq \mathbf{O}(A)^*$, by Propositions \ref{PLU} and \ref{APLU} we also deduce the equalities $PLU=BOR_{|\mathbf{Di}_1(A)^*}$ and $APLU=BOR_{|\mathbf{Di}_{m-1}(A)^*}$.

\subsection{Faithfulness, averseness, $O$-coherence and Fishburn-cancellation}\label{faoc}

In the next definition we extend the classic property of faithfulness introduced by Young (1974) and Fishburn (1979) for {\sc scc}s defined on $\mathbf{L}(A)^*$ and $\mathbf{Di}(A)^*$ to the case of {\sc scc}s defined on $\mathbf{R}(A)^*$.  Moreover, we also extend to the case of {\sc scc}s defined on $\mathbf{R}(A)^*$ a property, that we call averseness, considered by Cullinan et al. (2014) in the context of {\sc scc} defined on $\mathbf{P}(A)^*$. The term averseness also appears in Kurihara (2018) referred to {\sc scc} defined in $\mathbf{Di}_{m-1}(A)^*$. We finally introduce a new and weak property called $O$-coherence.

\begin{definition}\label{def-faith}
Let $\mathbf{D}\subseteq \mathbf{R}(A)^*$ and $F$ be a {\sc scc} on $\mathbf{D}$. 
\begin{itemize}
\item $F$ satisfies {\rm averseness} if, for every $p\in \mathbf{D}$ such that $\mathrm{Dom}(p)=\{i\}$ for some $i\in V$ and $\mathrm{Max}(p(i))\neq\varnothing$, we have $F(p)\subseteq \mathrm{Max}(p(i))$;
\item $F$ satisfies {\rm faithfulness} if, for every $p\in \mathbf{D}$ such that $\mathrm{Dom}(p)=\{i\}$ for some $i\in V$ and $\mathrm{Max}(p(i))\neq\varnothing$, we have $F(p)=\mathrm{Max}(p(i))$;
\item $F$ satisfies {\rm $O$-coherence} if there exists $p\in \mathbf{D}$ such that $N(p)\not\in \mathcal{PS}(A)$ and $F(p)=O(p)$.
\end{itemize}
\end{definition}

In the next proposition we propose conditions on $\mathbf{D}\subseteq \mathbf{R}(A)^*$ that guarantee that the net-outdegree {\sc scc} restricted to $\mathbf{D}$ satisfies the properties  in Definition \ref{def-faith}.

\begin{proposition}\label{newO}
Let $\mathbf{D}\subseteq \mathbf{R}(A)^*$. Then the following facts hold true.
\begin{itemize}
\item[$(i)$] If  $\mathbf{D}\subseteq \mathbf{P}(A)^*$, then $O_{|\mathbf{D}}$ satisfies averseness.
\item[$(ii)$]  If $\mathbf{D}\subseteq \mathbf{O}(A)^*$, then $O_{|\mathbf{D}}$ satisfies faithfulness. 
\item[$(iii)$] If $N(\mathbf{D})\not \subseteq \mathcal{PS}(A)$, then $O_{|\mathbf{D}}$ satisfies $O$-coherence. In particular, if $\mathbf{D}$ is regular, then 
 $O$ satisfies $O$-coherence.
\end{itemize}
\end{proposition}

\begin{proof}
$(i)$  Let $\mathbf{D}\subseteq \mathbf{P}(A)^*$ and $p\in \mathbf{D}$ with  $\mathrm{Dom}(p)=\{i\}$. Then $p(i)\in \mathbf{P}(A)$ and $\mathrm{Max}(p(i))\neq\varnothing$. We show that $O(p)\subseteq \mathrm{Max}(p(i))$.
Let $x^*\in O(p)=\underset{x\in A}{\mathrm{argmax}}\; o(p,x)$. Assume, by contradiction, that $x^*\notin\mathrm{Max}(p(i)).$ Then, there exists $y\in A$ such that $y\succ_{p(i)}x^*$. Then, by the transitivity of $p(i)$, we get $L(p(i), x^*)\subsetneq L(p(i),y)$. Observe that that inclusion is proper because $x^*\in L(p(i),y)\setminus L(p(i), x^*).$ Moreover, we have 
$U(p(i),y)\subseteq U(p(i), x^*)$. As a consequence, we deduce
\[
o(p, y)=|L(p(i), y)|-|U(p(i), y)|> |L(p(i),  x^*)|-|U(p(i),  x^*)|=o(p, x^*),
\]
a contradiction.

$(ii)$ Let $\mathbf{D}\subseteq \mathbf{O}(A)^*$ and $p\in \mathbf{D}$ with  $\mathrm{Dom}(p)=\{i\}$. Since $\mathbf{O}(A)^*\subseteq \mathbf{P}(A)^*$, by $(i)$, we have that $O(p)\subseteq \mathrm{Max}(p(i))$. Hence, it is enough to show that $\mathrm{Max}(p(i))\subseteq O(p)$. Since $p(i)\in \mathbf{O}(A)$, by \eqref{maxorder}, we have 
$\mathrm{Max}(p(i))= \{x\in A: \forall y\in A,\; x\succeq_{p(i)} y\}.$
Let $x^*\in \mathrm{Max}(p(i))$. Then, for every $y\in A$, $x^*\succeq_{p(i)} y$. We show that, for every $y\in A$, we have  $o(p, y)\leq o(p, x^*)$. Let $y\in A.$ Then, by the transitivity of $p(i)$, we have $L(p(i), y)\subseteq L(p(i),x^*)$ and $U(p(i),x^*)\subseteq U(p(i), y)$. As a consequence, 
\[
o(p, y)=|L(p(i), y)|-|U(p(i), y)|\leq |L(p(i),  x^*)|-|U(p(i),  x^*)|=o(p, x^*).
\]
We conclude then that $x^*\in O(p)=\underset{x\in A}{\mathrm{argmax}}\; o(p,x)$.

$(iii)$ It immediately follows from the definition of $O$-coherence and Proposition \ref{reg-scc-ns}.
\end{proof}

We observe that, in general, $O$ does not satisfy faithfulness if restricted to $\mathbf{P}(A)^*$. For instance, consider $A=[3]$ and $p\in \mathbf{P}(A)^*$, where $\mathrm{Dom}(p)=\{1\}$ and $p(1)=\{(1,2)\}\in \mathbf{P}(A)$; then it is immediately checked that $O(p)=\{1\}\neq \{1, 3\}= \mathrm{Max}(p(1))$.

The next simple proposition describes the relation among faithfulness, averseness and $O$-coherence.

\begin{proposition}\label{comp-fai-ave-oco}
Let $\mathbf{D}\subseteq \mathbf{R}(A)^*$ and $F$ be a {\sc scc} on $\mathbf{D}$. Then the following facts hold true.
\begin{itemize}
\item[$(i)$] If $F$ satisfies faithfulness, then $F$ satisfies averseness.
\item[$(ii)$] If $\mathbf{D}\subseteq \mathbf{O}(A)^*$ and $F$ satisfies averseness and neutrality, then $F$ satisfies faithfulness.
\item[$(iii)$] If there exist $p\in \mathbf{D}$ and $i\in V$ such that $\mathrm{Dom}(p)=\{i\}$ and $p(i)\in \mathbf{O}(A)\setminus\{A^2\}$ and $F$ satisfies faithfulness, then $F$ satisfies $O$-coherence. 
\item[$(iv)$] If there exist $p\in \mathbf{D}$ and $i\in V$ such that $\mathrm{Dom}(p)=\{i\}$, $p(i)\in \mathbf{O}(A)$, $|\mathrm{Max}(p(i))|=1$ and $F$ satisfies averseness, then $F$ satisfies $O$-coherence.
\end{itemize}
\end{proposition}

\begin{proof}
$(i)$ Straightforward.

$(ii)$ Assume that  $\mathbf{D}\subseteq \mathbf{O}(A)^*$ and that $F$ satisfies averseness and neutrality. Let $p\in \mathbf{D}$ be such that $\mathrm{Dom}(p)=\{i\}$ for some $i\in V$. We have to show that $F(p)=\mathrm{Max}(p(i))$. By averseness we know that $F(p)\subseteq \mathrm{Max}(p(i))$. Thus, we are left with proving that $\mathrm{Max}(p(i))\subseteq F(p)$. Let $y\in \mathrm{Max}(p(i))$ and prove that $y\in F(p)$. Since $F(p)\neq\varnothing$, there exists $x\in F(p)$. Since $x\in \mathrm{Max}(p(i))$, there exists $\psi\in \mathrm{Sym}(A)$ such that $\psi(x)=y$ and fixing the alternatives in $ A\setminus \mathrm{Max}(p(i))$. Since $p(i)\in\mathbf{O}(A)$, we have that $p^{\psi}=p$. Thus, using the neutrality of $F$, we obtain $y=\psi(x)\in \psi F(p)=F(p^{\psi})=F(p)$, as desired.

$(iii)$ Assume that there exist $p\in \mathbf{D}$ and $i\in V$ such that $\mathrm{Dom}(p)=\{i\}$, $p(i)\in \mathbf{O}(A)\setminus\{A^2\}$ and assume that $F$ satisfies averseness. Observe first that, by Proposition \ref{newO}, we have $O(p)=\mathrm{Max}(p(i))$.
Let us prove now that $N(p)\not\in \mathcal{PS}(A)$. Indeed, assume by contradiction that $N(p)\in \mathcal{PS}(A)$. By Proposition \ref{flat-o-prof}, we have $\mathrm{Max}(p(i))=O(p)=A$; as a consequence, $p(i)=A^2$, against the assumption. Now it is enough to observe that, by faithfulness of $F$, we have $F(p)=\mathrm{Max}(p(i))=O(p)$.

$(iv)$ Assume that there exist $p\in \mathbf{D}$ and $i\in V$ such that $\mathrm{Dom}(p)=\{i\}$, $p(i)\in \mathbf{O}(A)$ and $|\mathrm{Max}(p(i))|=1$  and assume that $F$ satisfies averseness. Observe first that, by Proposition \ref{newO}, we have $O(p)=\mathrm{Max}(p(i))$.
Since $|\mathrm{Max}(p(i))|=1$, we have that  $p(i)\neq A^2$. As observed in the proof of $(iii)$, the fact that $p(i)\in \mathbf{O}(A)\setminus\{A^2\}$ implies $N(p)\not\in \mathcal{PS}(A)$. Since $F$ satisfies averseness and $|\mathrm{Max}(p(i))|=1$, we have that $F(p)=\mathrm{Max}(p(i))=O(p)$. 
\end{proof}

Let us introduce now an interesting concept of cancellation introduced by Fishburn (1979) for {\sc scc}s defined on a subset of 
$\mathbf{Di}(A)^*$. Recall that the number $av(p,x)$ is the approval score defined in \eqref{appr-score}.

\begin{definition}
Let $\mathbf{D}\subseteq \mathbf{Di}(A)^*$ and $F$ be a {\sc scc} on $\mathbf{D}$. We say that $F$ satisfies {\rm Fishburn-cancellation} if, for every $p\in \mathbf{D}$ such that $av(p,x)=av(p,y)$ for all $x,y\in A$, we have that $F(p)=A$.
\end{definition}

The next proposition explains that for {\sc scc}s defined on a subset of $\mathbf{Di}(A)^*$ the property of cancellation coincides with the property of Fishburn-cancellation. We observe that Fishburn-cancellation is certainly much more intuitive than cancellation when dichotomous orders are involved.

\begin{proposition}\label{canc-fishburn}
Let $\mathbf{D}\subseteq \mathbf{Di}(A)^*$ and $F$ be a {\sc scc} on $\mathbf{D}$. Then $F$ satisfies cancellation if and only if $F$ satisfies Fishburn-cancellation.
\end{proposition}

\begin{proof} Let $p\in \mathbf{Di}(A)^*$. We prove first that the following facts are equivalent:
\begin{itemize}
	\item[$(a)$] $N(p)\in\mathcal{R}(A)$;
	\item[$(b)$] for every $x,y\in A$, $av(p,x)=av(p,y)$.
\end{itemize}
First of all, observe that $(a)$ and $(b)$ are respectively equivalent to
\begin{itemize}
	\item[$(a')$] for every $x,y\in A$ with $x\neq y$, $c_p(x,y)=c_p(y,x)$;
	\item[$(b')$] for every $x,y\in A$ with $x\neq y$, $av(p,x)=av(p,y)$.
\end{itemize}
It is then enough to prove that $(a')$ is equivalent to $(b')$.
Observe that, for every $x,y\in A$ with $x\neq y$, we have that
\[
av(p,x)=|\{i\in\mathrm{Dom}(p): x\in \mathrm{Max}(p(i)),y\in \mathrm{Max}(p(i))\}|
+|\{i\in\mathrm{Dom}(p): x\in \mathrm{Max}(p(i)),y\not\in \mathrm{Max}(p(i))\}|
\]
and
\[
c_p(x,y)=|\mathrm{Dom}(p)|-|\{i\in\mathrm{Dom}(p): x\not\in \mathrm{Max}(p(i)),y\in \mathrm{Max}(p(i))\}|.
\]
Thus, for every $x,y\in A$ with $x\neq y$, we have that
\[
av(p,x)-av(p,y)=(|\mathrm{Dom}(p)|-c_p(y,x))-(|\mathrm{Dom}(p)|-c_p(x,y))=c_p(x,y)-c_p(y,x).
\]
As a consequence, we deduce that $(a')$ is equivalent to $(b')$ and the proof is completed. 
\end{proof}

\subsection{Characterization theorems for {\sc scc}s}

Let us begin with the following characterization theorem for $O_{|\mathbf{D}}$ when  $\mathbf{D}$ is a regular subset of $\mathbf{R}(A)^*$.

\begin{theorem}\label{crucial}
Let $\mathbf{D}\subseteq \mathbf{R}(A)^*$ be regular. Then $O_{|\mathbf{D}}$ is the unique {\sc scc} on $\mathbf{D}$ satisfying neutrality, consistency, cancellation and $O$-coherence.
\end{theorem}

\begin{proof}
Since $\mathbf{D}$ is CPA, CA and CWC, we know by Proposition \ref{Prop-OIT} that $O_{|\mathbf{D}}$ satisfies neutrality, consistency, cancellation. By Proposition \ref{newO} we know that $O_{|\mathbf{D}}$ satisfies $O$-coherence. Let now $F$ be a {\sc scc} on $\mathbf{D}$ satisfies neutrality, consistency, cancellation and $O$-coherence. Since $\mathbf{D}$ is regular and $F$ satisfies neutrality, consistency and cancellation, by Theorem \ref{characterization-profiles}, we know that $F\in \{O_{|\mathbf{D}},I_{|\mathbf{D}},T_{|\mathbf{D}}\}$. Since $F$ satisfies $O$-coherence, we know there exists $p\in \mathbf{D}$ such that $N(p)\not\in\mathcal{PS}(A)$
and $F(p)=O(p)$. By Theorem \ref{characterization-profiles}, we conclude then that $F=O_{|\mathbf{D}}$.
\end{proof}

By means of Theorem \ref{crucial} and the results proved in Sections \ref{Ouguale} and \ref{faoc} we are now in the position to easily prove a variety of characterization results for well-known {\sc scc}s. Those results are collected in Theorem \ref{lista} below. Each statement of that theorem presents one or two lists of at least two properties written between brackets. That must be interpreted as follows: picking any 
property in each list determines a true statement. Thus, Theorem \ref{lista} contains $29$ characterization results.

\begin{theorem}\label{lista}
The following facts hold true.
\begin{itemize}
\item[$(i)$]$BOR_{|\mathbf{L}(A)^*}$ is the unique {\sc scc} on $\mathbf{L}(A)^*$ satisfying neutrality, consistency, cancellation and [$O$-coherence; faithfulness; averseness].
\item[$(ii)$]$BOR$ is the unique {\sc scc} on $\mathbf{O}(A)^*$ satisfying neutrality, consistency, cancellation and [$O$-coherence; faithfulness; averseness].
\item[$(iii)$]$PBOR$ is the unique  {\sc scc} on $\mathbf{P}(A)^*$ satisfying neutrality, consistency, cancellation and [$O$-coherence; averseness]. 
\item[$(iv)$]Let $Y\subseteq [m-1]_0$ with $Y\cap [m-1]\neq\varnothing$. Then
$ABOR_{|\mathbf{T}_Y(A)^*}$ is the unique {\sc scc} on $\mathbf{T}_Y(A)^*$ satisfying neutrality, consistency, cancellation and 
[$O$-coherence; faithfulness; averseness]. 
\item[$(v)$]Let $X\subseteq [m]$ with $X\cap [m-1]\neq \varnothing$. Then 
$AV_{|\mathbf{Di}_X(A)^*}$ is the unique  {\sc scc} on $\mathbf{Di}_X(A)^*$ satisfying neutrality, consistency, [cancellation; Fishburn-cancellation] and [$O$-coherence; faithfulness; averseness].
\item[$(vi)$]$PLU$ is the unique  {\sc scc} on $\mathbf{Di}_1(A)^*$ satisfying neutrality, consistency, [cancellation; Fishburn-cancellation] and [$O$-coherence; faithfulness; averseness].
\item[$(vii)$]$APLU$ is the unique  {\sc scc} on $\mathbf{Di}_{m-1}(A)^*$ satisfying neutrality, consistency, [cancellation; Fishburn-cancellation] and [$O$-coherence; faithfulness; averseness]. 
\end{itemize}
\end{theorem}

\begin{proof}
Let $\Sigma_1$, $\Sigma_2$ and $\Sigma_3$ be the sets of subsets of $\mathbf{R}(A)^*$ defined as
\begin{eqnarray*}
\Sigma_1&=&\{\mathbf{O}(A)^*, \mathbf{L}(A)^*\}\cup\{\mathbf{T}_Y(A)^*: Y\subseteq [m-1]_0, Y\cap[m-1]\neq\varnothing\},\\
\Sigma_2&=&\{\mathbf{Di}_X(A)^*:X\subseteq [m],X\cap[m-1]\neq\varnothing\},\\
\Sigma_3&=&\{\mathbf{P}(A)^*\}.
\end{eqnarray*}
Consider now $\mathbf{D}\in \Sigma_1\cup\Sigma_2\cup\Sigma_3$.  By Theorems \ref{tutte-new}, \ref{dicotomici-all} and \ref{truncated-all}, we know that $\mathbf{D}$ is regular. Hence, by Proposition \ref{newO}, $O_{\mid \mathbf{D}}$ satisfies $O$-coherence.
Observe now that the following observations hold true.
\begin{itemize}
\item[$(c1)$] Assume that  $\mathbf{D}\in \Sigma_1\cup \Sigma_2$. Then we have $\mathbf{D}\subseteq \mathbf{O}(A)^*$ and there exists $p\in \mathbf{D}$  such that $\mathrm{Dom}(p)=\{1\}$, $p(1)\in \mathbf{O}(A)\setminus\{A^2\}$. Then, by Proposition \ref{comp-fai-ave-oco}, if a {\sc scc} $F$ on $\mathbf{D}$ satisfies neutrality and one between faithfulness and averseness, then $F$ satisfies $O$-coherence. Moreover, by Proposition \ref{newO}, $O_{|\mathbf{D}}$ satisfies faithfulness and then averseness.
\item[$(c2)$] Assume that $\mathbf{D}\in \Sigma_3$. Then there exists $p\in \mathbf{D}$ such that $\mathrm{Dom}(p)=\{1\}$, $p(1)\in \mathbf{O}(A)$ and $|\mathrm{Max}(p(1))|=1$. Thus, by Proposition \ref{comp-fai-ave-oco}, if a {\sc scc} $F$ on $\mathbf{D}$ satisfies averseness, then $F$ satisfies $O$-coherence.
Moreover, by Proposition \ref{newO}, $O_{|\mathbf{D}}$ satisfies averseness. 
\item[$(c3)$] Assume that $\mathbf{D}\in \Sigma_2$, by Proposition \ref{canc-fishburn}, we have that  a {\sc scc} $F$ on $\mathbf{D}$ satisfies cancellation if and only if it satisfies Fishburn-cancellation.
\end{itemize}
Considering now Theorem \ref{crucial} and Propositions \ref{Borda-equal-O}-\ref{APLU}, 
 we have that 
$(c1)$ implies $(i)$, $(ii)$ and $(iv)$;
$(c1)$ and $(c3)$ imply $(v)$, $(vi)$ and $(vii)$;
$(c2)$ implies $(iii)$.
\end{proof}

It is important to note that most of the characterization results in Theorem \ref{lista} are new. Some of them instead correspond or immediately imply well-known results. Such results are listed below.

\begin{theorem}{\rm [Young, 1974, Theorem 1]}\label{young-theorem}
$BOR_{|\mathbf{L}(A)^*}$ is the unique {\sc scc} on $\mathbf{L}(A)^*$ satisfying neutrality, consistency, cancellation and faithfulness.
\end{theorem}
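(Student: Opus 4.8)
The plan is to derive this classical result directly from Theorem \ref{nuovoYoung}, which has already established that $BOR_{|\mathbf{L}(A)^*}$ is the unique {\sc scc} on $\mathbf{L}(A)^*$ satisfying neutrality, consistency, cancellation and $\mathrm{ON}$-faithfulness. Since the present statement replaces $\mathrm{ON}$-faithfulness by the classical faithfulness, the whole task reduces to bridging the gap between these two properties on the domain $\mathbf{L}(A)^*$, and essentially all the substantive work has been front-loaded into the two earlier results.

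First I would verify that $BOR_{|\mathbf{L}(A)^*}$ itself satisfies the four listed properties. Neutrality, consistency and cancellation are inherited from $O_{|\mathbf{L}(A)^*}$ through Proposition \ref{Borda-equal-O}, combined with Corollary \ref{tutte} (giving that $\mathbf{L}(A)^*$ is regular, hence in particular CPA, CA and CON) and the remarks following Definition \ref{defs-scc}. For faithfulness, I would take $p$ with $\mathrm{Dom}(p)=\{i\}$ and $p(i)=R\in\mathbf{L}(A)$. Since $R$ is a linear order, $I(R,x)=\{x\}$ for every $x\in A$, so $b(R,x)=|L(R,x)|$, and this quantity attains its maximum value $m-1$ at exactly one alternative, namely the top element of $R$. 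As $|T(R)|=1$ for a linear order, this yields $BOR(p)=T(R)=T(p(i))$, which is precisely faithfulness.

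For uniqueness, let $F$ be any {\sc scc} on $\mathbf{L}(A)^*$ satisfying neutrality, consistency, cancellation and faithfulness. Because $F$ satisfies faithfulness and consistency and $\mathbf{L}(A)^*\supseteq \mathbf{L}(A)^*$, Proposition \ref{faith-vs-weak} applies (with $\mathbf{D}=\mathbf{L}(A)^*$) and guarantees that $F$ satisfies $\mathrm{ON}$-faithfulness. Hence $F$ meets all four hypotheses of Theorem \ref{nuovoYoung}, from which I conclude $F=BOR_{|\mathbf{L}(A)^*}$.

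I do not expect a genuine obstacle here: the only delicate implication is faithfulness together with consistency forcing $\mathrm{ON}$-faithfulness, and this is exactly the content of Proposition \ref{faith-vs-weak} specialized to $\mathbf{L}(A)^*$. Once that proposition is invoked, the argument collapses to a two-line deduction, so the proof is short and the difficulty lies entirely in the machinery already developed rather than in this final step.
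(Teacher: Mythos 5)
Your proof is correct and follows essentially the same route as the paper's own proof: invoke Proposition \ref{faith-vs-weak} (with $\mathbf{D}=\mathbf{L}(A)^*$) to upgrade faithfulness plus consistency to $\mathrm{ON}$-faithfulness, then conclude by Theorem \ref{nuovoYoung}. Your additional verification that $BOR_{|\mathbf{L}(A)^*}$ itself satisfies the four properties (in particular faithfulness, via $b(R,x)=|L(R,x)|$ for linear orders) is a sound completeness touch that the paper leaves implicit, but it does not change the argument.
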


\begin{theorem}{\rm [Young, 1974]}\label{borda-young}
$BOR$ is the unique {\sc scc} on $\mathbf{O}(A)^*$ satisfying neutrality, consistency, cancellation and faithfulness.
\end{theorem}

\begin{theorem}{\rm [Cullinan et al., 2014, Theorem 2]}\label{culli-theorem}
$PBOR$ is the unique  {\sc scc} on $\mathbf{P}(A)^*$ satisfying neutrality, consistency, cancellation and averseness.
\end{theorem}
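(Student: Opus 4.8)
The plan is to reduce the claim to Theorem~\ref{plu-teo}, which already establishes that $PBOR=O_{|\mathbf{P}(A)^*}$ (Proposition~\ref{PBorda-equal-O}) is the unique {\sc scc} on $\mathbf{P}(A)^*$ satisfying neutrality, consistency, cancellation and $\mathrm{ON}$-faithfulness. Hence it is enough to prove that, under neutrality, consistency and cancellation, the property of averseness forces $\mathrm{ON}$-faithfulness; once this is done, the identity $F=PBOR$ is immediate. For the existence half I would first record that $PBOR$ itself is averse: if $p$ is a single-voter profile with $p(i)=R\in\mathbf{P}(A)$ and $y\succ_R x$, then transitivity gives $L(R,x)\cup\{x\}\subseteq L(R,y)$ and $U(R,y)\cup\{y\}\subseteq U(R,x)$, so by \eqref{PBorda1} one gets $pb(R,y)\ge pb(R,x)+2$, whence $x\notin PBOR(p)$.

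The main obstacle is a domain mismatch: Proposition~\ref{nuovo-nuovo}, the natural device turning averseness into $\mathrm{ON}$-faithfulness, is available only for domains contained in $\mathbf{O}(A)^*$, whereas $\mathbf{P}(A)^*\not\subseteq\mathbf{O}(A)^*$ because partial orders need not be complete. I would bypass this by descending to the subdomain $\mathbf{L}(A)^*\subseteq\mathbf{P}(A)^*\cap\mathbf{O}(A)^*$. Given a {\sc scc} $F$ on $\mathbf{P}(A)^*$ satisfying the four properties, its restriction $F_{|\mathbf{L}(A)^*}$ inherits neutrality, consistency and averseness: neutrality survives because $\mathbf{L}(A)^*$ is CPA and the defining identity holds on the larger domain, consistency survives because $\mathbf{L}(A)^*$ is CA, and averseness is a one-profile condition and so is automatically preserved under restriction. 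Since $\mathbf{L}(A)^*\subseteq\mathbf{O}(A)^*$ and trivially $\mathbf{L}(A)^*\supseteq\mathbf{L}(A)^*$, Proposition~\ref{nuovo-nuovo} applies and yields that $F_{|\mathbf{L}(A)^*}$ satisfies $\mathrm{ON}$-faithfulness.

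It then remains to transfer $\mathrm{ON}$-faithfulness back to the full domain, which is immediate: the witnessing profiles produced for $F_{|\mathbf{L}(A)^*}$ lie in $\mathbf{L}(A)^*\subseteq\mathbf{P}(A)^*$, and $F$ agrees with $F_{|\mathbf{L}(A)^*}$ on them, so for every $x\in A$ there exist $p\in\mathbf{P}(A)^*$, $k\in\mathbb{N}$ and $R\in\mathcal{R}(A)$ with $N(p)=kN_x+R$ and $F(p)=\{x\}$. Thus $F$ satisfies neutrality, consistency, cancellation and $\mathrm{ON}$-faithfulness on $\mathbf{P}(A)^*$, and Theorem~\ref{plu-teo} forces $F=PBOR$. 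The whole argument is routine bookkeeping of inherited properties except for the single delicate point of the domain mismatch, handled by the passage to $\mathbf{L}(A)^*$.
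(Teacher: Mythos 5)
Your proof is correct, and at top level it follows the same route as the paper's own proof: convert averseness into $\mathrm{ON}$-faithfulness and then invoke Theorem \ref{plu-teo} together with Proposition \ref{PBorda-equal-O}. The genuine difference is in how Proposition \ref{nuovo-nuovo} gets used, and it is to your credit. The paper applies Proposition \ref{nuovo-nuovo} directly to $F$ on $\mathbf{P}(A)^*$, which, read literally, is not licensed: that proposition (like Propositions \ref{av+neut-faith} and \ref{faith-vs-weak}, and the very definition of faithfulness) is stated only for domains $\mathbf{D}\subseteq\mathbf{O}(A)^*$, and $\mathbf{P}(A)^*\not\subseteq\mathbf{O}(A)^*$ since a partial order need not be complete; indeed, for an incomplete $p(i)$ the set $T(p(i))$ can be empty, so faithfulness cannot even be formulated on such profiles. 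Your detour repairs exactly this gap: you restrict $F$ to $\mathbf{L}(A)^*$, check that neutrality, consistency and averseness survive restriction (correct, since $\mathbf{L}(A)^*$ is CPA and CA and averseness is a per-profile condition), apply Proposition \ref{nuovo-nuovo} on $\mathbf{L}(A)^*\subseteq\mathbf{O}(A)^*$, and transfer the $\mathrm{ON}$-faithfulness witnesses back through the inclusion $\mathbf{L}(A)^*\subseteq\mathbf{P}(A)^*$; every step checks out. You also supply the existence half, namely that $PBOR$ itself is averse via $pb(R,y)\ge pb(R,x)+2$ whenever $y\succ_R x$, which the paper leaves implicit even though it is part of the uniqueness statement. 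In short, both arguments are the same two-step reduction, but yours does the domain bookkeeping the paper skips: the paper's version buys brevity, while yours buys the assurance that Proposition \ref{nuovo-nuovo} is only ever invoked within its stated hypotheses.
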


\begin{theorem}{\rm [Fishburn, 1979, Theorem 4]}\label{Fishburn}
Let $X\subseteq [m]$ with $X\cap [m-1]\neq \varnothing$. Then $AV_{|\mathbf{Di}_X(A)^*}$ is the unique  {\sc scc} on $\mathbf{Di}_X(A)^*$ satisfying neutrality, consistency, Fishburn-cancellation and faithfulness. 
\end{theorem}

Note that Theorem \ref{young-theorem} corresponds to one of the statements in Theorem \ref{lista}$(i)$; 
Theorem \ref{borda-young} corresponds to one of the statements in Theorem \ref{lista}$(ii)$;
Theorem \ref{culli-theorem} corresponds to one of the statements in Theorem \ref{lista}$(iii)$;
Theorem \ref{Fishburn} corresponds to one of the statements in Theorem \ref{lista}$(v)$. It is worth noticing that Young (1974) states that the content of Theorem \ref{borda-young} can be easily proved adapting the proof of Theorem \ref{young-theorem}. However, it does not seem obvious how one could adapt Young's reasoning to the new framework since it appears, in many parts, guided by the specific properties of linear orders.

From Theorem \ref{lista}, it is possible to easily get further results present in the literature with little effort.
First of all, let us show how we can deduce an interesting characterization result by Terzopoulou and Endriss (2021). In order to show that fact we first need to recall the definition of monotonicity considered in Terzopoulou and Endriss (2021). Given $x,y\in A$ with $x\neq y$, let $(x\,y)\in\mathrm{Sym}(A)$ be the permutation that switches $x$ and $y$ and leaves all the other alternatives fixed. A {\sc scc} $F$ on $\mathbf{T}(A)^*$ is monotonic if the following condition holds true: given $p,p'\in \mathbf{T}(A)^*$ with $\mathrm{Dom}(p)=\mathrm{Dom}(p')$, $i^*\in \mathrm{Dom}(p)$ and $x,y\in A$, if $p(i)=p'(i)$ for all $i\in \mathrm{Dom}(p)\setminus \{i^*\}$, $x\succ_{p(i^*)}y$, $y\in F(p)$ and $p'(i^*)=p(i^*)^{(x\, y)}$, then $F(p')=\{y\}$.

\begin{theorem}{\rm [Terzopoulou and Endriss, 2021, Theorem 6]}\label{terzo-endriss}
$ABOR$ is the unique {\sc scc} on $\mathbf{T}(A)^*$ satisfying neutrality, consistency, cancellation and monotonicity.
\end{theorem}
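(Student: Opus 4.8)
The plan is to reduce to Theorem \ref{ch-abor-2} by taking $Y=[m-1]_0$, for which $\mathbf{T}_Y(A)^*=\mathbf{T}(A)^*$ and $Y\cap[m-1]=[m-1]\neq\varnothing$. That theorem already asserts that $ABOR$ is the unique {\sc scc} on $\mathbf{T}(A)^*$ satisfying neutrality, consistency, cancellation and \emph{faithfulness}. So the task splits into two parts: first, check that $ABOR$ itself satisfies neutrality, consistency, cancellation and monotonicity; second, show that any $F$ on $\mathbf{T}(A)^*$ enjoying neutrality, consistency, cancellation and monotonicity is automatically faithful, whence $F=ABOR$ by Theorem \ref{ch-abor-2}. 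For the first part, neutrality, consistency and cancellation are immediate from $ABOR=O_{|\mathbf{T}(A)^*}$ (Proposition \ref{ABorda-equal-O}) together with the facts recorded for $O$; monotonicity would be verified by a direct net-outdegree computation, showing that replacing $p(i^*)$ by $p(i^*)^{(x\,y)}$ when $x\succ_{p(i^*)}y$ leaves $o(p,w)$ unchanged for $w\neq x,y$, strictly increases $o(\cdot,y)$ and strictly decreases $o(\cdot,x)$, so that a winning $y$ becomes the unique maximizer.

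The crux is the second part, and the key observation I would exploit is that for a \emph{single-voter} profile a transposition of two alternatives coincides with a neutrality permutation. Concretely, let $p\in\mathbf{T}(A)^*$ with $\mathrm{Dom}(p)=\{i\}$ and $p(i)=R$, and suppose towards a contradiction that $x\in F(p)$ while $y\succ_R x$ for some $y$. Since $y\succ_{p(i)}x$ and $x\in F(p)$, monotonicity applied with the pair $(y,x)$ and the swapped profile $p'$ given by $p'(i)=R^{(y\,x)}$ forces $F(p')=\{x\}$. But $p'$ is exactly $p^{(y\,x)}$, so neutrality gives $F(p')=F(p^{(y\,x)})=(y\,x)\big(F(p)\big)$; hence $(y\,x)(F(p))=\{x\}$ and therefore $F(p)=(y\,x)(\{x\})=\{y\}$. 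This contradicts $x\in F(p)$ together with $x\neq y$. Consequently no such $y$ exists, i.e. $F(p)\subseteq T(R)$ for every single-voter profile, which is precisely averseness (Definition \ref{def-avers}). Notice this step uses only neutrality and monotonicity, not consistency or cancellation.

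Finally, since $F$ satisfies averseness and neutrality and $\mathbf{T}(A)^*\subseteq\mathbf{O}(A)^*$, Proposition \ref{av+neut-faith} yields that $F$ is faithful. Then neutrality, consistency, cancellation and faithfulness together with Theorem \ref{ch-abor-2} (for $Y=[m-1]_0$) give $F=ABOR_{|\mathbf{T}(A)^*}=ABOR$, completing the uniqueness. The main obstacle is really the passage from monotonicity to faithfulness; it becomes painless once one realizes that, on a one-voter profile, the order-swap appearing in the definition of monotonicity is the image under the permutation $(y\,x)$, so monotonicity and neutrality can be played against each other to pin $F(p)$ down to a single wrong alternative and reach a contradiction.
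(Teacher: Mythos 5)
Your proof is correct, and its engine is exactly the paper's: on a one-voter profile the swap $p(i)\mapsto p(i)^{(x\,y)}$ appearing in the definition of monotonicity is literally the neutrality action of the transposition $(x\,y)$, so monotonicity (forcing a singleton) and neutrality (forcing the permuted outcome) collide and yield a contradiction. Where you diverge from the paper is in the bookkeeping around this collision. The paper applies the trick a single time, to the specific dichotomous profile $p_{\{x\}}$ of Definition \ref{caso-dicotomici}: since $N(p_{\{x\}})=N_x+K_{A\setminus\{x\}}$ with $K_{A\setminus\{x\}}\in\mathcal{R}(A)$, concluding $F(p_{\{x\}})=\{x\}$ gives $\mathrm{ON}$-faithfulness directly, and then Theorem \ref{ch-abor} finishes. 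You instead apply the trick to \emph{every} single-voter profile to obtain averseness, then invoke Proposition \ref{av+neut-faith} to upgrade to faithfulness, and conclude via Theorem \ref{ch-abor-2}; this is a longer chain (Theorem \ref{ch-abor-2} is itself deduced from Theorem \ref{ch-abor} through Proposition \ref{faith-vs-weak}), but it is more modular and isolates the nice observation that averseness already follows from neutrality and monotonicity alone. One further point in your favor: you also address the existence half of the characterization, i.e.\ that $ABOR$ itself is monotonic, which the paper's proof leaves implicit; your sketch of that verification is sound, since for an order $R$ the condition $x\succ_R y$ gives $|L(R,x)|>|L(R,y)|$ and $|U(R,x)|<|U(R,y)|$, so the swap strictly raises $o(\cdot,y)$, strictly lowers $o(\cdot,x)$, and leaves all other net-outdegree scores unchanged, making a winning $y$ the unique maximizer.
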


\begin{proof} We know that $ABOR$ satisfies neutrality, consistency and cancellation. It is easily proved that $ABOR$ satisfies monotonicity, as well. Let now $F$ be a {\sc scc} on $\mathbf{T}(A)^*$ satisfying neutrality, consistency, cancellation and monotonicity. We have to prove that $F=ABOR$.
By Theorem \ref{lista}$(iv)$, it is enough to show that  $F$ satisfies $O$-coherence. 
Let $x\in A$ and consider $p:\{1\}\to\mathbf{Di}_1(A)$ such that $\mathrm{Max}(p(1))=\{x\}$. Then $p\in \mathbf{Di}_1(A)^*\subseteq \mathbf{T}(A)^*$. 
Since  $\delta^{N(p)}(x)=m-1\neq 0$, we have that  $N(p)\not\in \mathcal{PS}(A)$. By Proposition \ref{comp-fai-ave-oco}, we have that $O(p)=\{x\}$. We want to show that $F(p)=\{x\}$. Since $F(p)\neq \varnothing$, it is enough to show that $F(p)\cap (A\setminus\{x\})=\varnothing.$
Assume, by contradiction, that there exists $y\in F(p)$ such that $y\neq x$. Of course, $x\succ_{p(1)}y$. Thus, by monotonicity  and by neutrality of $F$, we deduce 
$$\{y\}=F(p^{(x\,y)})=(x\,y)F(p)\supseteq \{x\},$$
against $y\neq x$.
Hence, $F$ satisfies $O$-coherence.
\end{proof}

Finally, we show that Theorem \ref{lista} allows to easily get a characterization result for the Plurality rule by Sekiguchi (2012)
and a characterization result for the anti-Plurality rule by Kurihara (2018).\footnote{Theorem 1 in Sekiguchi (2012) actually refers to the Plurality rule defined in $\mathbf{L}(A)^*$ and involves a further property called tops-only. However, because of the tops-only property, that result is equivalent to our Theorem \ref{Seki}. Analogously, Theorem 1 in Kurihara (2018) actually refers to the anti-Plurality rule defined in $\mathbf{L}(A)^*$ and involves a further property called bottoms-only. However, because of the bottoms-only property, that result is equivalent to our Theorem \ref{kuri}.} In order  to do that we need a simple preliminary lemma. In the sequel, for $p\in \mathbf{R}(A)^*$, $\varphi\in \mathrm{Sym}(\mathrm{Dom}(p))$ and $\psi\in \mathrm{Sym}(A)$, we denote by $p^{(\varphi, \psi)}$ the element of $\mathbf{R}(A)^*$ such that $\mathrm{Dom}(p^{(\varphi,\psi)})=\mathrm{Dom}(p)$ and, for every $i\in \mathrm{Dom}(p)$, $p^{(\varphi,\psi)}(i)=p(\varphi^{-1}(i))^\psi$.\footnote{Details on the symbol $p^{(\varphi, \psi)}$ and its properties can be found in Bubboloni and Gori (2014) and (2015).}

\begin{lemma}\label{an+neut} 
Let $\mathbf{D}\subseteq \mathbf{R}(A)^*$, $F$ be {\sc scc} on $\mathbf{D}$ satisfying anonymity and neutrality  and $p\in \mathbf{D}$. Assume that, for every $\psi\in \mathrm{Sym}(A)$, there exists $\varphi\in \mathrm{Sym}( \mathrm{Dom}(p))$ such that $p^{(\varphi,\psi)}=p$. Then $F(p)=A$.
\end{lemma}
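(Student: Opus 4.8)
The plan is to show that $F(p)$ is a nonempty subset of $A$ that is invariant under the action of every permutation $\psi \in \mathrm{Sym}(A)$; since the only such subset is $A$ itself, this will force $F(p)=A$.

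First I would record the elementary relation between the two notations $p^{(\varphi,\psi)}$ and $p^{\psi}$. Writing $q := p^{\psi}$, so that $q(i) = p(i)^{\psi}$ for every $i \in \mathrm{Dom}(p)$, the definition of $p^{(\varphi,\psi)}$ gives $p^{(\varphi,\psi)}(i) = p(\varphi^{-1}(i))^{\psi} = q(\varphi^{-1}(i))$ for every $i \in \mathrm{Dom}(p)$. Hence $p^{(\varphi,\psi)}$ and $p^{\psi}$ share the domain $\mathrm{Dom}(p)$, and, taking the bijection $\varphi \colon \mathrm{Dom}(p)\to\mathrm{Dom}(p)$, one checks $p^{(\varphi,\psi)}(\varphi(i)) = q(\varphi^{-1}(\varphi(i))) = q(i) = p^{\psi}(i)$, which is exactly the statement that $p^{(\varphi,\psi)}$ is a clone of $p^{\psi}$ with the same domain.

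Next I would fix an arbitrary $\psi \in \mathrm{Sym}(A)$ and choose, by hypothesis, $\varphi \in \mathrm{Sym}(\mathrm{Dom}(p))$ with $p^{(\varphi,\psi)} = p$. Since $F$ is neutral, $\mathbf{D}$ is CPA, so $p^{\psi} \in \mathbf{D}$; also $p^{(\varphi,\psi)} = p \in \mathbf{D}$. By the clone relation established above together with the anonymity of $F$, we obtain $F(p^{(\varphi,\psi)}) = F(p^{\psi})$, that is, $F(p) = F(p^{\psi})$. On the other hand, neutrality gives $F(p^{\psi}) = \psi(F(p))$. Combining the two equalities yields $F(p) = \psi(F(p))$.

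Finally, since $\psi$ was arbitrary, $F(p)$ is invariant under every $\psi \in \mathrm{Sym}(A)$. As $F(p) \in P_0(A)$ is nonempty, I pick $x \in F(p)$; for any $y \in A$ there is some $\psi \in \mathrm{Sym}(A)$ with $\psi(x)=y$, whence $y \in \psi(F(p)) = F(p)$, so $F(p) = A$, as desired. I do not expect a genuine obstacle here; the only point requiring care is the bookkeeping in the second paragraph, namely correctly recognizing $p^{(\varphi,\psi)}$ as a clone of $p^{\psi}$ sharing its domain (a pure relabeling of voters), so that it is \emph{anonymity} — rather than a second application of neutrality — that delivers $F(p) = F(p^{\psi})$.
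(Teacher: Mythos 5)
Your proof is correct and follows essentially the same route as the paper's: both arguments hinge on recognizing $p^{(\varphi,\psi)}$ as a voter-relabeled clone of $p^{\psi}$, so that anonymity gives $F(p)=F(p^{(\varphi,\psi)})=F(p^{\psi})$ while neutrality gives $F(p^{\psi})=\psi(F(p))$, and then transitivity of $\mathrm{Sym}(A)$ on $A$ forces $F(p)=A$. The only cosmetic difference is that the paper fixes a target $y\in A$ and chooses $\psi$ sending some $x^*\in F(p)$ to $y$, whereas you first establish $\psi$-invariance for all $\psi$ and then conclude; your explicit bookkeeping of the clone relation and of why $p^{\psi}\in\mathbf{D}$ is exactly what the paper leaves implicit.
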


\begin{proof} Let $y\in A$ and prove that $y\in F(p)$.
Since $F(p)\neq \varnothing$, there exists $x^*\in F(p)$. Consider then $\psi\in \mathrm{Sym}(A)$ such that $y=\psi(x^*)$
and let $\varphi\in \mathrm{Sym}( \mathrm{Dom}(p))$ be such that $p^{(\varphi,\psi)}=p$.
Since $F$ is anonymous and neutral and since $x^*\in F(p)$, we have that $y=\psi(x^*)\in \psi F(p)=F(p^{(\varphi,\psi)})=F(p)$.
\end{proof}

\begin{theorem}\label{Seki}{\rm [Sekiguchi, 2012, Theorem 1]} $PLU$ is the unique  {\sc scc} on $\mathbf{Di}_1(A)^*$ satisfying anonymity, neutrality, consistency and faithfulness. 
\end{theorem}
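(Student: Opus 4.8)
The plan is to reduce Theorem \ref{Seki} to the already established Theorem \ref{plurality2}, which characterizes $PLU$ on $\mathbf{Di}_1(A)^*$ by neutrality, consistency, Fishburn-cancellation and faithfulness. Since the hypotheses of Theorem \ref{Seki} share neutrality, consistency and faithfulness with those of Theorem \ref{plurality2}, it suffices to prove that a {\sc scc} $F$ on $\mathbf{Di}_1(A)^*$ satisfying anonymity, neutrality, consistency and faithfulness necessarily satisfies Fishburn-cancellation; once this is done, Theorem \ref{plurality2} immediately yields $F=PLU$. (That $PLU$ itself meets the four listed properties is the routine direction: it follows from $PLU=O_{|\mathbf{Di}_1(A)^*}$, the strong anonymity, neutrality and consistency of $O$, the regularity of $\mathbf{Di}_1(A)^*$ recorded in Corollary \ref{dicotomici-special-cor}, and a direct computation of $O$ on single-voter dichotomous profiles.)

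To establish Fishburn-cancellation I would fix $p\in \mathbf{Di}_1(A)^*$ with $av(p,x)=av(p,y)$ for all $x,y\in A$ and set $n:=av(p,x)$, a value independent of $x$. The decisive structural remark is that every relation in $\mathbf{Di}_1(A)$ is uniquely determined by its singleton top set: a dichotomous order with $|T(R)|=1$ places its top alternative strictly above all the others, which are mutually indifferent. Writing $D_x:=\{i\in \mathrm{Dom}(p): T(p(i))=\{x\}\}$, the family $(D_x)_{x\in A}$ therefore partitions $\mathrm{Dom}(p)$ and each block satisfies $|D_x|=av(p,x)=n$. Moreover, from the definition of $R^\psi$ one checks that $T(R^\psi)=\psi(T(R))$, so applying $\psi$ to a voter's preference sends a top-$x$ voter to a top-$\psi(x)$ voter.

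The main step is then to verify the hypothesis of Lemma \ref{an+neut}: for every $\psi\in\mathrm{Sym}(A)$ I would produce $\varphi\in\mathrm{Sym}(\mathrm{Dom}(p))$ with $p^{(\varphi,\psi)}=p$. Since $|D_x|=|D_{\psi^{-1}(x)}|=n$ for every $x$, I would pick, for each $x\in A$, a bijection $D_x\to D_{\psi^{-1}(x)}$ and glue these into a permutation $\varphi^{-1}$ of $\mathrm{Dom}(p)$; this is well defined because both families $(D_x)_{x\in A}$ and $(D_{\psi^{-1}(x)})_{x\in A}$ partition the domain. For $i\in D_x$ we then have $\varphi^{-1}(i)\in D_{\psi^{-1}(x)}$, so $p(\varphi^{-1}(i))$ has top $\psi^{-1}(x)$ and $p(\varphi^{-1}(i))^\psi$ has top $\psi(\psi^{-1}(x))=x=T(p(i))$; by the uniqueness remark above, $p^{(\varphi,\psi)}(i)=p(\varphi^{-1}(i))^\psi=p(i)$. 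Thus $p^{(\varphi,\psi)}=p$ for every $\psi$, and Lemma \ref{an+neut}, applied using the anonymity and neutrality of $F$, gives $F(p)=A$, which is exactly Fishburn-cancellation.

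The only delicate point, and the natural place for a slip, is the bookkeeping that turns the fiberwise bijections into a genuine permutation $\varphi$ of $\mathrm{Dom}(p)$ realizing the identity $p^{(\varphi,\psi)}=p$; everything else is an invocation of results already proved. In particular, the equal-size condition $|D_x|=n$ for all $x$ is precisely the translation of the balance hypothesis $av(p,x)=av(p,y)$, so the construction genuinely exploits that assumption rather than anonymity and neutrality alone, and it is exactly this balance that makes the gluing feasible.
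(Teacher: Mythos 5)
Your proposal is correct and follows exactly the paper's own route: reduce to Theorem \ref{plurality2} by showing that anonymity, neutrality, consistency and faithfulness imply Fishburn-cancellation, via the equal-size top-fibers $|\{i:T(p(i))=\{x\}\}|$ and an application of Lemma \ref{an+neut}. The only difference is that you spell out the gluing of fiberwise bijections into the permutation $\varphi$ with $p^{(\varphi,\psi)}=p$, a step the paper states without detail, and your bookkeeping there is sound.
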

\begin{proof} We know that $PLU$ satisfies anonymity, neutrality, consistency and faithfulness. Let now $F$ be a {\sc scc} satisfying anonymity, neutrality, consistency and faithfulness. We have to prove that $F=APLU$. By Theorem \ref{lista}$(vi)$, it is enough to show that  $F$ satisfies 
Fishburn-cancellation. Let  $p\in \mathbf{Di}_1(A)^*$ be such that  $av(p,x)=av(p,y)$ for all $x,y\in A$. Then we have $|\{i\in \mathrm{Dom}(p): \mathrm{Max}(p(i))=\{x\}\}|=|\{i\in \mathrm{Dom}(p):  \mathrm{Max}(p(i))=\{y\}\}|$ for all $x,y\in A$. As a consequence, for every $\psi\in \mathrm{Sym}(A)$, there exists $\varphi\in \mathrm{Sym}( \mathrm{Dom}(p))$ such that $p^{(\varphi,\psi)}=p.$ Since $F$ is anonymous and neutral, by Lemma \ref{an+neut} we deduce $F(p)=A.$
\end{proof}

\begin{theorem}\label{kuri}{\rm [Kurihara, 2018, Theorem 1]} $APLU$ is the unique  {\sc scc} on $\mathbf{Di}_{m-1}(A)^*$ satisfying 
anonymity, neutrality, consistency and averseness.
\end{theorem}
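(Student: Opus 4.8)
The plan is to deduce the statement from Theorem \ref{negativeplurality2}, following exactly the route used for Sekiguchi's result (Theorem \ref{Seki}) but with averseness now playing the role that faithfulness played there. So let $F$ be a {\sc scc} on $\mathbf{Di}_{m-1}(A)^*$ satisfying anonymity, neutrality, consistency and averseness. By Theorem \ref{negativeplurality2} it suffices to show that $F$ satisfies faithfulness and Fishburn-cancellation, since neutrality and consistency are already assumed.

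First I would obtain faithfulness essentially for free. As $\mathbf{Di}_{m-1}(A)^*\subseteq \mathbf{Di}(A)^*\subseteq \mathbf{O}(A)^*$ and $F$ satisfies averseness and neutrality, Proposition \ref{av+neut-faith} immediately gives that $F$ satisfies faithfulness.

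The core step is Fishburn-cancellation, which I would establish through Lemma \ref{an+neut}. Let $p\in \mathbf{Di}_{m-1}(A)^*$ satisfy $av(p,x)=av(p,y)$ for all $x,y\in A$. Each relation $p(i)$ has $|T(p(i))|=m-1$, hence a unique worst alternative $b(i)$ with $A\setminus T(p(i))=\{b(i)\}$; consequently $av(p,x)=|\mathrm{Dom}(p)|-|\{i\in\mathrm{Dom}(p):b(i)=x\}|$. The hypothesis therefore says that every alternative is the worst alternative of exactly the same number $n$ of voters, so, writing $V_a:=\{i\in\mathrm{Dom}(p):b(i)=a\}$, we have $|V_a|=n$ for all $a\in A$. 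Now fix $\psi\in\mathrm{Sym}(A)$. Recalling that the worst alternative of $R^\psi$ is $\psi$ applied to the worst alternative of $R$, and that a relation in $\mathbf{Di}_{m-1}(A)$ is completely determined by its worst alternative, I would build $\varphi\in\mathrm{Sym}(\mathrm{Dom}(p))$ by choosing, for each $a\in A$, a bijection from $V_a$ onto $V_{\psi(a)}$ (possible since both sets have cardinality $n$) and letting $\varphi$ be their union. A short check shows that, with this choice, the worst alternative of $p(\varphi^{-1}(i))^\psi$ coincides with that of $p(i)$ for every $i\in\mathrm{Dom}(p)$, whence $p^{(\varphi,\psi)}=p$. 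Since $F$ is anonymous and neutral, Lemma \ref{an+neut} then yields $F(p)=A$, which is precisely Fishburn-cancellation.

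With faithfulness and Fishburn-cancellation established alongside the assumed neutrality and consistency, Theorem \ref{negativeplurality2} concludes $F=APLU$. I expect the only delicate point to be the construction of $\varphi$ and the verification that $p^{(\varphi,\psi)}=p$, which requires tracking correctly how the worst alternative transforms under the combined action $p\mapsto p^{(\varphi,\psi)}$; every other ingredient is a direct invocation of results already proved.
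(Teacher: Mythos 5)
Your proposal is correct and follows essentially the same route as the paper's own proof: derive faithfulness from averseness and neutrality via Proposition \ref{av+neut-faith}, reduce the problem to Fishburn-cancellation via Theorem \ref{negativeplurality2}, and establish Fishburn-cancellation by exhibiting, for each $\psi\in\mathrm{Sym}(A)$, a permutation $\varphi\in\mathrm{Sym}(\mathrm{Dom}(p))$ with $p^{(\varphi,\psi)}=p$ and invoking Lemma \ref{an+neut}. The only difference is that you spell out the construction of $\varphi$ (gluing bijections $V_a\to V_{\psi(a)}$ between the fibres of the worst-alternative map), a detail the paper leaves implicit after noting that the sets $\{i\in\mathrm{Dom}(p): x\in B(p(i))\}$ all have the same cardinality.
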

\begin{proof} We know that $APLU$ satisfies anonymity, neutrality, consistency and averseness. Let now $F$ be a {\sc scc} satisfying anonymity, neutrality, consistency and averseness. We have to prove that $F=APLU$. 
By Theorem \ref{lista}$(vii)$, it is enough to show that $F$ satisfies Fishburn-cancellation. Let  $p\in \mathbf{Di}_{m-1}(A)^*$ be such that 
$av(p,x)=av(p,y)$ for all $x,y\in A$. Then we have  $|\{i\in \mathrm{Dom}(p): x\notin \mathrm{Min}(p(i))\}|=|\{i\in \mathrm{Dom}(p): y\notin \mathrm{Min}(p(i))\}|$ for all $x,y\in A$. As a consequence we also have
$|\{i\in \mathrm{Dom}(p): x\in \mathrm{Min}(p(i))\}|=|\{i\in \mathrm{Dom}(p): y\in \mathrm{Min}(p(i))\}|$ for all $x,y\in A$. 
Hence, for every $\psi\in \mathrm{Sym}(A)$, there exists $\varphi\in \mathrm{Sym}( \mathrm{Dom}(p))$ such that $p^{(\varphi,\psi)}=p.$ Since $F$ is anonymous and neutral, by Lemma \ref{an+neut} we deduce $F(p)=A$.
\end{proof}

\vspace{10mm}
\noindent {\Large{\bf Acknowledgements}}
\vspace{2mm}

\noindent Daniela Bubboloni is supported by GNSAGA
of INdAM (Italy) and by the national project PRIN 2022- 2022PSTWLB - Group Theory and Applications - CUP B53D23009410006.  We thank the participants to the Workshop on Collective Decisions: Current Trends (Granada, 2023) for their valuable comments on a preliminary version of the paper.

%
%

\vspace{10mm}
\noindent {\Large{\bf References}}
\vspace{2mm}

\noindent Al\'os-Ferrer, C., 2006. A simple characterization of approval voting. 
Social Choice and Welfare 27, 621--625. 
\vspace{2mm}

\noindent Barber\`a, S., Bossert, W., 2023. Opinion aggregation: Borda and Condorcet revisited. Journal of Economic Theory 210, 105--654.
\vspace{2mm}

\noindent Belkin, A.R., Gvozdik. A.A., 1989. The flow model for ranking objects. 
Problems of Cybernetics, Decision Making and Analysis of Expert Information, Eds. Dorofeyuk, A., Litvak, B., Tjurin, Yu., Moscow, 109--117 (in Russian).
\vspace{2mm}

\noindent Bouyssou, D., 1992.  Ranking methods based on valued preference relations: a characterization of the net flow method. 
European Journal of Operational Research 60, 61--67.\vspace{2mm}

\noindent Brams, S.J., Fishburn, P.C., 1978. Approval Voting. 
The American Political Science Review 72, 831--847.
\vspace{2mm}

\noindent Brandl, F., Peters, D., 2019. An axiomatic characterization of the Borda mean rule. Social Choice and Welfare 52, 685--707.
\vspace{2mm}

\noindent Brans, J.P., Vincke, Ph., Mareschal, B., 1986. How to select and how to rank projects: The Promethee method.
European Journal of Operational Research 24, 228--238.\vspace{2mm}

\noindent Bubboloni, D., Gori, M., 2014. Anonymous and neutral majority rules. 
Social Choice and Welfare 43, 377--401.
\vspace{2mm}

\noindent Bubboloni, D., Gori, M., 2015. Symmetric majority rules. 
Mathematical Social Sciences 76, 73--86.
\vspace{2mm}

\noindent Bubboloni, D., Gori, M., 2018. The flow network method. 
Social Choice and Welfare 51, 621--656.
\vspace{2mm}

\noindent Cullinan, J., Hsiao, S.K., Polett, D., 2014. A borda count for partially ordered ballots. 
Social Choice and Welfare 42, 913--926.
\vspace{2mm}

\noindent Duddy, C., Piggins, A., Zwicker, W.S., 2016. Aggregation of binary evaluations: a Borda-like approach.
Social Choice and Welfare 46,  301--333.
\vspace{2mm}

\noindent Dummett, M., 1997. {\it Principles of Electoral Reform}. Oxford University Press.
\vspace{2mm}

\noindent Fishburn, P.C., 1979. Symmetric and consistent aggregation with dichotomous voting. 
In: J-J. Laffont (ed.), \textsl{Aggregation and revelation of preferences}, North Holland.\vspace{2mm}

\noindent Gvozdik, A.A., 1987. Flow interpretation of the tournament model for ranking. 
Abstracts of the VI-th Moscow Conference of young scientists on cybernetics and computing, Moscow: Scientific Council on Cybernetics of RAS, 1987, p.56 (in Russian).
\vspace{2mm}

\noindent Kurihara, T., 2018. A simple characterization of the anti-plurality rule. 
Economics Letters 168, 110--111.
\vspace{2mm}

\noindent Henriet, D., 1985. The Copeland choice function - An axiomatic characterization. 
Social Choice and Welfare 2, 49--64.
\vspace{2mm}

\noindent Hansson, B., Sahlquist, H., 1976. A proof technique for social choice with variable electorate. 
Journal of Economic Theory 13, 193--200.
\vspace{2mm}

\noindent Langville, A.N., Meyer, C.D., 2012. {\it Who Is $\#1$? The Science of Rating and Ranking}. 
Princeton University Press.
\vspace{2mm}

\noindent Laslier, J.-F., 1997. {\it Tournament solutions and majority voting}. 
Studies in Economic Theory, Volume 7. Springer.
\vspace{2mm}

\noindent Mas-Colell, A., Whinston, M.D., Green, J., 1995. {\it Microeconomic Theory}. 
New York: Oxford University Press.
\vspace{2mm}

\noindent Moulin, H., 1988. Condorcet’s principle implies the no show paradox. 
Journal of Economic Theory 45, 53--64.
\vspace{2mm}

\noindent Myerson, R.B., 1995. Axiomatic derivation of scoring rules without the ordering assumption. 
Social Choice and Welfare 12, 59--74.
\vspace{2mm}

\noindent Rubinstein, A., 1980. Ranking the participants in a tournament. 
SIAM Journal on Applied Mathematics 38, 108--111.\vspace{2mm}

\noindent Sekiguchi, Y., 2012. A characterization of the Plurality rule. 
Economic Letters 116, 330--332.\vspace{2mm}

\noindent Schulze, M., 2011. A new monotonic, clone-independent, reversal symmetric, and Condorcet-consistent single-winner election method. 
Social Choice and Welfare  36, 267--303.
\vspace{2mm}

\noindent Terzopoulou, Z., Endriss, U., 2021. The Borda class - An axiomatic study of the Borda rule on top-truncated preferences.
Journal of Mathematical Economics 92, 31--40.
\vspace{2mm}

\noindent van den Brink, R., Gilles, R.P., 2009. The outflow ranking method for weighted directed graphs. 
European Journal of Operational Research 193, 484--491.
\vspace{2mm}

\noindent Vorsatz, M., 2008. Scoring rules on dichotomous preferences. 
Social Choice and Welfare 31, 151--162.
\vspace{2mm}

\noindent Young, H.P., 1974. An axiomatization of Borda's rule. 
Journal of Economic Theory 9, 43--52.
\vspace{2mm}

\noindent Young, H.P., 1975. Social choice scoring functions. 
SIAM Journal on Applied Mathematics 28, 824--838.
\vspace{2mm}

\noindent Young, H.P., 1988. Condorcet's theory of voting. 
American Political Science Review 82, 1231--1244.
\vspace{2mm}

\noindent Zwicker, W.S., 1991. The voters' paradox, spin, and the Borda count. 
Mathematical Social Sciences 22, 187--227.

\appendix

\section{Proof of Proposition \ref{generale-reti}}

Throughout the proof, given $N\in \mathcal{N}(A)$, we denote the capacity of $N$ by $c^N$.

$(i)$ The fact that $\mathcal{C}(A)$, $\mathcal{R}(A)$, $\mathcal{B}_0(A)$, $\mathcal{B}(A)$, $\mathcal{PS}(A)$ and $\mathcal{O}(A)$ are subspaces of 
$\mathcal{N}(A)$ follows easily by their definition.

$(ii)$ The fact that $\mathrm{dim}\,\mathcal{C}(A)=1$ immediately follows by observing that, for every $k\in \mathbb{Q}$, we have that $N(k)=kN(1)$.
It is easily observed that the set $R=\{N_{xy}+N_{yx}: (x,y)\in A^2_*\}$ is a basis for $\mathcal{R}(A)$ and that there is a natural bijection between $R$ and $P_2(A)$. Thus, $\mathrm{dim}\,\mathcal{R}(A)=\frac{m(m-1)}{2}$.

Let us consider now the set $\mathcal{B}(A)$.
We claim that the set $B=\{N_{xy}-N_{yx}:(x,y)\in A^2_*\}\cup\{N(1)\}$ generates $\mathcal{B}(A)$. Given $N\in \mathcal{N}(A)$, let us associate with $N$ the network
\begin{equation}\label{rev-part}
\widehat{N}\coloneq\sum_{(x,y)\in A^2_*}  [c^N(x,y)-c^N(y,x)] \left(N_{xy}-N_{yx}\right).
\end{equation}
For every $(x,y)\in A^2_*$, we have that $c^{\widehat{N}}(x,y)=2(c^N(x,y)-c^N(y,x))$; therefore,  $\widehat{N}\in \mathcal{B}_0(A)$.
Let us prove now that if $N\in \mathcal{B}_k(A)$ with $k\in \mathbb{Q}$, then 
\begin{equation}\label{ug}
2N=kN(1)+\frac{1}{2}\widehat{N}.
\end{equation}
In order to do that, we need to show that the networks on the two sides of \eqref{ug} have the same capacity.
Indeed, for every $(x,y)\in A^2_*$, we have that
\[
c^{kN(1)+\frac{1}{2}\widehat{N}}(x,y)=k+c^N(x,y)-c^N(y,x)=2c^N(x,y)=c^{2N}(x,y).
\]
That implies that $N$ is a linear combination of networks in $B$. As a consequence, we deduce that $B$ generates $\mathcal{B}(A)$.

Consider now the function $L:\mathcal{N}(A)\rightarrow \mathcal{N}(A)$ defined, for every $N\in \mathcal{N}(A)$, by $L(N)=\widehat{N}$. 
Clearly, $L$ is linear, $\mathrm{Ker}(L)=\mathcal{R}(A)$, and $\mathrm{Im}(L)\subseteq \mathcal{B}_0(A)$. Moreover, as a  particular case of \eqref{ug}, we have that if $N\in \mathcal{B}_0(A)$, then $N=\frac{1}{4}\widehat{N}$. Thus, we deduce that
\begin{equation}\label{Limage}
L( \mathcal{N}(A))=\mathcal{B}_0(A). 
\end{equation}
Since $\mathrm{dim}(\mathcal{N}(A))=\mathrm{dim}\,(\mathcal{R}(A))+\mathrm{dim}(\mathcal{B}_0(A))$, we get the equality $m(m-1)=\frac{m(m-1)}{2}+\mathrm{dim}\,(\mathcal{B}_0(A))$. As a consequence, we conclude that $\mathrm{dim}\,(\mathcal{B}_0(A))=\frac{m(m-1)}{2}$.

By \eqref{ug} and \eqref{Limage}, we also have $\mathcal{B}(A)\leq \mathcal{C}(A)+\mathcal{B}_0(A)$. On the other hand, it is clear that $\mathcal{C}(A)+\mathcal{B}_0(A)\leq \mathcal{B}(A).$ Thus $\mathcal{B}(A)= \mathcal{C}(A)+\mathcal{B}_0(A)$. Since it is immediately checked that $ \mathcal{C}(A)\cap\mathcal{B}_0(A)=\{N(0)\}$, we deduce that 
\begin{equation}\label{somma1}
\mathcal{B}(A)= \mathcal{C}(A)\oplus \mathcal{B}_0(A)
\end{equation}
and then $\mathrm{dim}\,\mathcal{B}(A)=\frac{m(m-1)}{2}+1.$ 

The fact  that $\mathcal{PS}(A)=(m-1)^2$ is the content of Proposition \ref{ker-delta}. The fact that $\mathcal{O}(A)=m$ follows from the fact that the outstar networks generates  $\mathcal{O}(A)$ and are known to be independent.

Observe now that $\mathcal{R}(A)\cap\mathcal{B}_0(A)=\{N(0)\}$. Since $\mathrm{dim}(\mathcal{R}(A) +\mathcal{B}_0(A))=\mathrm{dim}(\mathcal{R}(A))+\mathrm{dim}(\mathcal{B}_0(A))=m(m-1),$ we deduce that
\begin{equation}\label{oplus}
\mathcal{N}(A)=\mathcal{R}(A)\oplus\mathcal{B}_0(A).
\end{equation}
As a consequence, $\mathcal{N}(A)=\mathcal{PS}(A) +\mathcal{B}(A)$. Thus,
\[
m(m-1)=(m-1)^2+\frac{m(m-1)}{2}+1-\mathrm{dim}\, \left(\mathcal{PS}(A)\cap \mathcal{B}(A)\right),
\]
and therefore $\mathrm{dim}\, \left(\mathcal{PS}(A)\cap \mathcal{B}(A)\right)=\frac{(m-1)(m-2)}{2}+1.$

$(iii)$ The two inclusions $\mathcal{R}(A)\subseteq \mathcal{PS}(A)$ and  $\mathcal{C}(A)\subseteq \mathcal{R}(A)\cap \mathcal{B}(A)$ are immediate. In order to get $\mathcal{C}(A)=\mathcal{R}(A)\cap \mathcal{B}(A)$, we need to show that $\mathcal{R}(A)\cap \mathcal{B}(A)\subseteq \mathcal{C}(A)$. 
Let $N\in \mathcal{R}(A)\cap \mathcal{B}(A)$. Then, for every $(x,y)\in A^2_*$, we have that $c^N(x,y)=c^N(y,x)$ and $c^N(x,y)+c^N(y,x)=k$. As a consequence, for every $(x,y)\in A^2_*$, we have that $c^N(x,y)=\frac{k}{2}$ and thus $N\in \mathcal{C}(A)$. 

By \eqref{somma1} and \eqref{oplus}, we know that $\mathcal{B}(A)= \mathcal{C}(A)\oplus \mathcal{B}_0(A)$ and $\mathcal{N}(A)=\mathcal{R}(A)\oplus\mathcal{B}_0(A)$. Since $\mathcal{B}_0(A)\leq \mathcal{B}(A)$, we also have $\mathcal{N}(A)=\mathcal{R}(A)+\mathcal{B}(A)$.

Since $\delta(\mathcal{O}(A))=\mathbb{Q}\{\delta(N_x):x\in A\}$, by Proposition \ref{im-delta-1}, we deduce that
$\delta(\mathcal{O}(A))=\delta(\mathcal{N}(A))$. Thus, it follows that $\mathrm{dim\, } \delta(\mathcal{O}(A))= m-1$ and therefore
$m-1=m-\mathrm{dim\, }(\mathrm{Ker}(\delta)\cap \mathcal{O}(A))$. As a consequence, $\mathrm{dim\, }(\mathcal{PS}(A)\cap \mathcal{O}(A))=1$.
Since $N(1)=\sum_{x\in A}N_x\in \mathcal{PS}(A)\cap\mathcal{O}(A)$, we conclude that
$\mathcal{C}(A)=\mathbb{Q}N(1)=\mathcal{PS}(A)\cap \mathcal{O}(A)$.

In order to prove the equality $\mathcal{R}(A)=(\mathcal{O}(A)+\mathcal{R}(A))\cap \mathcal{PS}(A)$, 
we only need to prove the inclusion
$(\mathcal{O}(A)+\mathcal{R}(A))\cap \mathcal{PS}(A)\subseteq\mathcal{R}(A)$ because the other inclusion is trivial.
Let $N\in (\mathcal{O}(A)+\mathcal{R}(A))\cap \mathcal{PS}(A)$. Then, there exist $R\in \mathcal{R}(A)$ and $q_x\in \mathbb{Q}$ for all $x\in A$,  such that
\begin{equation}\label{formaN}
N=\sum_{x\in A}q_xN_x+R.
\end{equation}
Recalling that $\mathcal{R}(A)\subseteq \mathcal{PS}(A)=\mathrm{Ker}(\delta)$, we have that
\[
0=\delta(N)=\delta\left(\sum_{x\in A}q_xN_x\right).
\]
Thus, $\sum_{x\in A}q_xN_x\in \mathcal{O}(A)\cap\mathcal{PS}(A)=\mathcal{C}(A)\subseteq \mathcal{R}(A)$.
From \eqref{formaN}, we conclude that
$N\in \mathcal{R}(A).$

$(iv)$ Straightforward.

\section{Proof of Theorem \ref{ext-finale}}

The proof of Theorem \ref{ext-finale} is obtained through some intermediate propositions. 

\begin{proposition}\label{ext1} 
Let $\mathcal{D}\subseteq\mathcal{N}(A)$ and $\mathscr{F}$ a {\sc ns} on $\mathcal{D}$ 
 satisfying neutrality, consistency and cancellation.
Assume that, for every $N\in \mathbb{Z}\mathcal{D}$, there exists $R\in \mathcal{D}\cap \mathcal{R}(A)$ such that $N+R\in\mathcal{D}$. Then there exists $\mathscr{F}':\mathbb{Z}\mathcal{D}\to P_*(A)$ extension of $\mathscr{F}$ satisfying neutrality, consistency and cancellation. 
\end{proposition}

\begin{proof}
By assumption we know that, for every $N\in \mathbb{Z}\mathcal{D}$, there exists $R\in \mathcal{D}\cap \mathcal{R}(A)$ such that $N+R\in\mathcal{D}$.
Consider then $\mathscr{F}':\mathbb{Z}\mathcal{D}\to P_*(A)$ be defined, for every $N\in \mathbb{Z}\mathcal{D}$, by
\[
\mathscr{F}'(N)=\mathscr{F}(N+R),
\]
where $R\in\mathcal{D}\cap \mathcal{R}(A)$ is such that  $N+R\in\mathcal{D}$. We are going to prove that $\mathscr{F}'$ is a well defined extension of $\mathscr{F}$ and satisfies neutrality, consistency and cancellation.

Let us prove first that $\mathscr{F}'$ is well defined. Consider $N\in\mathbb{Z}\mathcal{D}$ and assume that $R,R'\in\mathcal{D}\cap \mathcal{R}(A)$ are such that $N+R\in\mathcal{D}$ and $N+R'\in\mathcal{D}$. Since $\mathscr{F}$ satisfies consistency and cancellation, we can apply twice Proposition \ref{sum-R} and deduce that $\mathscr{F}(N+R)=\mathscr{F}(N+R+R')=\mathscr{F}(N+R')$.


We next show that $\mathscr{F}'$ is an extension of $\mathscr{F}$. Let $N\in\mathcal{D}$ and pick $R\in\mathcal{D}\cap \mathcal{R}(A)$ such that $N+R\in \mathcal{D}.$ Then, by the definition of $\mathscr{F}'$ and Proposition \ref{sum-R}, we have $\mathscr{F}'(N)=\mathscr{F}(N+R)=\mathscr{F}(N)$.

Let us prove now that $\mathscr{F}'$ satisfies neutrality. We first show that $\mathbb{Z}\mathcal{D}$ is CVP. Let $N\in \mathbb{Z}\mathcal{D}$, $\psi\in \mathrm{Sym}(A)$ and prove that $N^\psi\in \mathbb{Z}\mathcal{D}$. By definition of $\mathbb{Z}\mathcal{D}$, we have that there exist $k\in \mathbb{N}$, $N_1,\ldots, N_k\in \mathcal{D}$ and $z_1,\ldots, z_k\in\mathbb{Z}$ such that $N=\sum_{i=1}^kz_iN_i$. Thus, we have that
\[
N^\psi=\left(\sum_{i=1}^kz_iN_i\right)^\psi=\sum_{i=1}^kz_iN_i^\psi.
\] 
Since $\mathscr{F}$ satisfies neutrality we know that $\mathcal{D}$ is CVP.
We conclude then that $N^\psi\in \mathbb{Z}\mathcal{D}$, since, for every $i\in[k]$, $N_i^\psi\in\mathcal{D}$.
Consider now $N\in \mathbb{Z}\mathcal{D}$ and $\psi\in \mathrm{Sym}(A)$ and prove that $\mathscr{F}'(N^\psi)=\psi(\mathscr{F}'(N))$.
Indeed, consider $R\in\mathcal{D}\cap \mathcal{R}(A)$ such that $N+R\in\mathcal{D}$. Since $\mathcal{D}$
and $\mathcal{R}(A)$ are CVP, we have that $R^\psi\in \mathcal{D}\cap \mathcal{R}(A)$ and $(N+R)^\psi=N^\psi+R^\psi\in \mathcal{D}$.
Thus, by the neutrality of $\mathscr{F}$ we have
\[
\mathscr{F}'(N^\psi)=\mathscr{F}(N^\psi+R^\psi)=\mathscr{F}((N+R)^\psi)=\psi(\mathscr{F}(N+R))=\psi(\mathscr{F}'(N)).
\]
That finally proves the neutrality of $\mathscr{F}'$.

Let us prove now that $\mathscr{F}'$ is consistent. It is clear that $\mathbb{Z}\mathcal{D}$ is CA. Consider now 
$N,M\in\mathbb{Z}\mathcal{D}$ be such that $\mathscr{F}'(N)\cap \mathscr{F}'(M)\neq\varnothing$. We want to show that 
$\mathscr{F}'(N+M)=\mathscr{F}'(N)\cap \mathscr{F}'(M)$. Consider $S,T\in\mathcal{D}\cap \mathcal{R}(A)$ such that $N+S,M+T\in\mathcal{D}$. Thus,
$\mathscr{F}'(N)=\mathscr{F}(N+S)$, $\mathscr{F}'(M)=\mathscr{F}(M+T)$ and hence
$\mathscr{F}(N+S)\cap \mathscr{F}(M+T)\neq\varnothing$. By consistency of $\mathscr{F}$, we have that 
\[
\mathscr{F}((N+S)+(M+T))=\mathscr{F}(N+S)\cap \mathscr{F}(M+T).
\]
Observe now that $(N+S)+(M+T)=(N+M)+(S+T)\in\mathcal{D}$, $N+M\in\mathbb{Z}\mathcal{D}$ and $S+T\in\mathcal{D}\cap \mathcal{R}(A)$ since both $\mathcal{D}$ and $\mathcal{R}(A)$ are CA. Thus, by the definition of $\mathscr{F}'$, we get
\[
\mathscr{F}'(N+M)=\mathscr{F}(N+M+S+T)=\mathscr{F}(N+S)\cap \mathscr{F}(M+T)=\mathscr{F}'(N)\cap \mathscr{F}'(M),
\]
and consistency of $\mathscr{F}'$ is proved.

Let us finally prove that $\mathscr{F}'$ satisfies cancellation. Let $N\in\mathbb{Z}\mathcal{D}\cap \mathcal{R}(A)$ and prove that $\mathscr{F}'(N)=A$. Indeed, let 
$R\in\mathcal{D}\cap \mathcal{R}(A)$ be such that $N+R\in \mathcal{D}$. Then, since $\mathcal{R}(A)$ is CA, we also have that $N+R\in \mathcal{D}\cap \mathcal{R}(A)$.
Thus, by cancellation of $\mathscr{F}$, we get
$
\mathscr{F}'(N)=\mathscr{F}(N+R)=A.
$
\end{proof}

\begin{proposition}\label{ext2}
Let $\mathcal{D}\subseteq\mathcal{N}(A)$ such that $\mathbb{Z}\mathcal{D}=\mathcal{D}$ and $\mathscr{F}$ a {\sc ns} on $\mathcal{D}$  satisfying neutrality, consistency and cancellation.
Then there exists $\mathscr{F}':\mathbb{Q}\mathcal{D}\to P_*(A)$ extension of $\mathscr{F}$ satisfying neutrality, consistency and cancellation.
\end{proposition}

\begin{proof}Recall that $\mathbb{Q}\mathcal{D}$ is a vector space.
First of all, note that, for every $N\in \mathbb{Q}\mathcal{D}$, there exists $k\in\mathbb{N}$ such that $kN\in \mathcal{D}$. Indeed, there exist $s\in\mathbb{N}$, $q_1,\ldots,q_s\in \mathbb{Q}$ and $N_1,\ldots,N_s\in\mathcal{D}$ such that $N=\sum_{i=1}^sq_iN_i$. Consider then $k\in\mathbb{N}$ such that, for every $i\in[s]$,
$kq_i\in\mathbb{Z}$. Then we have
\[
kN=\sum_{i=1}^s(kq_i)N_i\in\mathbb{Z}\mathcal{D}=\mathcal{D}.
\]
Given $N\in \mathbb{Q}\mathcal{D}$, we have then that 
\[
K(N)\coloneq \{k\in\mathbb{N}: kN\in \mathcal{D}\}\neq\varnothing.
\]
Define then $k_N\coloneq \min K(N)$ and consider $\mathscr{F}':\mathbb{Q}\mathcal{D}\to P_*(A)$ be defined, for every $N\in  \mathbb{Q}\mathcal{D}$, by 
$\mathscr{F}'(N)\coloneq \mathscr{F}(k_NN).$ We are going to prove that $\mathscr{F}'$ is an extension of $\mathscr{F}$ and satisfies neutrality, consistency, cancellation.

We have that $\mathscr{F}'$ is an extension of $\mathscr{F}$ since, for every $N\in \mathcal{D}$, $k_N=1$ and so $\mathscr{F}'(N)=\mathscr{F}(k_NN)=\mathscr{F}(N)$.

Let us prove that $\mathscr{F}'$ satisfies cancellation. Consider $N\in \mathbb{Q}\mathcal{D}\cap \mathcal{R}(A)$. Since $\mathcal{R}(A)$ is a subspace of 
$\mathcal{N}(A)$, we have that $k_NN\in \mathcal{D}\cap \mathcal{R}(A)$. By cancellation of $\mathscr{F}$ we have then that
$
\mathscr{F}'(N)=\mathscr{F}(k_N N)=A,
$
and cancellation of $\mathscr{F}'$ is proved.

We now prove that $\mathscr{F}'$ satisfies neutrality. Let $N\in \mathbb{Q}\mathcal{D}$ and $\psi\in \mathrm{Sym}(A)$. We first show that $N^\psi\in \mathbb{Q}\mathcal{D}$. We know that there are $s\in \mathbb{N}$, $N_1,\ldots, N_s\in \mathcal{D}$ and $q_1,\ldots, q_s\in\mathbb{Q}$ such that $N=\sum_{i=1}^sq_iN_i$. Thus, we have that
\[
N^\psi=\left(\sum_{i=1}^sq_iN_i\right)^\psi=\sum_{i=1}^sq_iN_i^\psi.
\] 
Since $\mathscr{F}$ satisfies neutrality, we know that $\mathcal{D}$ is CVP.
We conclude then that, for every $i\in[s]$, $N_i^\psi\in\mathcal{D}$ and thus $N^\psi\in \mathbb{Q}\mathcal{D}$.
Let us prove now that $k_N=k_{N^\psi}$. 
We know that $k_N N\in\mathcal{D}$. Since $\mathcal{D}$ is CVP, $(k_N N)^\psi=k_N N^\psi\in\mathcal{D}$. Thus, $k_N\in K(N^\psi)$ and thus $k_{N^\psi}\le k_N$. Moreover, we also know that $k_{N^\psi} N^\psi\in\mathcal{D}$. Since $\mathcal{D}$ is CVP, $(k_{N^\psi} N^\psi)^{\psi^{-1}}=k_{N^\psi} (N^\psi)^{\psi^{-1}}=k_{N^\psi} N\in\mathcal{D}$. Thus, $k_{N^\psi}\in K(N)$ and thus $k_N\le k_{N^\psi}$.
We conclude then that $k_N=k_{N^\psi}$, as desired. Now, by neutrality of $\mathscr{F}$,  we get
\[
\mathscr{F}'(N^\psi)=\mathscr{F}(k_{N^\psi}N^\psi)=\mathscr{F}(k_{N}N^\psi)=\mathscr{F}( (k_{N} N)^\psi)=\psi(\mathscr{F}(k_NN))=\psi(\mathscr{F}'(N))
\]
and neutrality of $\mathscr{F}'$ is proved.

Let us prove that $\mathscr{F}'$ satisfies consistency. Let $N,M\in \mathbb{Q}\mathcal{D}$ and assume that $\mathscr{F}'(N)\cap \mathscr{F}'(M)\neq\varnothing$. It is clear that $N+M\in \mathbb{Q}\mathcal{D}$. We know that $k_NN, k_MM,k_{N+M}(N+M)\in\mathcal{D}$. Let us define $t=k_Nk_Mk_{N+M}$. 
Since $\mathscr{F}$ satisfies consistency, by Proposition \ref{scalar-vector-q} and by the definition of $\mathscr{F}'$, we have that 
\[
\mathscr{F}'(N)=\mathscr{F}(k_NN)=\mathscr{F}(tN),\quad \mathscr{F}'(M)=\mathscr{F}(k_MM)=\mathscr{F}(tM)
\]
and therefore $\mathscr{F}(tN)\cap \mathscr{F}(tM)\neq\varnothing$.
Thus, again by consistency of $\mathscr{F}$ and Proposition \ref{scalar-vector-q}, we get
\[
\mathscr{F}'(N+M)=\mathscr{F}(k_{N+M}(N+M))=\mathscr{F}(t(N+M))=\mathscr{F}(tN+tM)=
\]
\[
\mathscr{F}(tN)\cap \mathscr{F}(tM)=\mathscr{F}'(N)\cap \mathscr{F}'(M)
\]
and consistency of $\mathscr{F}'$ is proved.
\end{proof}

\begin{proposition}\label{ext3}
Let $\mathcal{D}$ be a subspace of $\mathcal{N}(A)$ and $\mathscr{F}$ a {\sc ns} on $\mathcal{D}$ satisfying neutrality, consistency and  cancellation.
Then there exists $\mathscr{F}':\mathcal{D}+\mathcal{R}(A)\to P_*(A)$ extension of $\mathscr{F}$ satisfying neutrality, consistency and cancellation.
\end{proposition}

\begin{proof}
Let $\mathscr{F}':\mathcal{D}+\mathcal{R}(A)\to P_*(A)$ be defined, for every $N'\in \mathcal{D}+\mathcal{R}(A)$, by $\mathscr{F}'(N')=\mathscr{F}(N),$
where $N'=N+R$ with $N\in\mathcal{D}$ and $R\in\mathcal{R}(A)$. We are going to prove that $\mathscr{F}'$ is a well defined extension of $\mathscr{F}$ and satisfies neutrality, consistency and cancellation.

First we  prove that $\mathscr{F}'$ is well defined. Let $N'\in \mathcal{D}+\mathcal{R}(A)$ and assume that
$N'=N_1+R_1=N_2+R_2$ where $N_1,N_2\in\mathcal{D}$ and $R_1,R_2\in\mathcal{R}(A)$. Observe then that $N_1-N_2=R_2-R_1$. Since $\mathcal{D}$ and $\mathcal{R}(A)$ are subspaces of $\mathcal{N}(A)$ we have that $N_1-N_2\in \mathcal{D}$ and $R_2-R_2\in \mathcal{R}(A)$ and thus
$N_1-N_2\in \mathcal{D}\cap\mathcal{R}(A)$. Since $\mathscr{F}$ satisfies consistency and cancellation and $N_1=N_2+(N_1-N_2)$, by Proposition \ref{sum-R}, we obtain that $\mathscr{F}(N_1)=\mathscr{F}(N_2)$, as desired.

In order to prove that $\mathscr{F}'$ extends $\mathscr{F}$ simply note that $N(0)\in \mathcal{D}\cap\mathcal{R}(A)$ because $\mathcal{D}$ is a subspace of $\mathcal{N}(A)$ and so, given $N\in \mathcal{D}$, we have 
$N=N+N(0)\in \mathcal{D}+\mathcal{R}(A)$ and $\mathscr{F}'(N)=\mathscr{F}(N)$.

Let us prove that $\mathscr{F}'$ satisfies neutrality. Let  $N'\in \mathcal{D}+\mathcal{R}(A)$ and $\psi\in\mathrm{Sym}(A)$. Then 
$N'=N+R$ where $N\in\mathcal{D}$ and $R\in\mathcal{R}(A)$. We have that $R^\psi\in \mathcal{R}(A)$ and, since $\mathscr{F}$ is neutral, we also have that $N^\psi\in\mathcal{D}$. Finally, again using the neutrality of $\mathscr{F}$, we have that
\[
\mathscr{F}'((N')^\psi)=\mathscr{F}'((N+R)^\psi)=\mathscr{F}'(N^\psi+R^\psi)=\mathscr{F}(N^\psi)=\psi(\mathscr{F}(N))=\psi(\mathscr{F}'(N')),
\]
that proves the neutrality of $\mathscr{F}'$.

Let us prove that $\mathscr{F}'$ satisfies consistency. Let  $N', M'\in \mathcal{D}+\mathcal{R}(A)$ be such that 
$\mathscr{F}'(N')\cap \mathscr{F}'(M')\neq\varnothing$. We have that $N'=N+S$ and $M'=M+T$ where $N,M\in\mathcal{D}$ and $S,T\in\mathcal{R}(A)$.
Since $\mathscr{F}'(N')=\mathscr{F}(N)$ and $\mathscr{F}'(M')=\mathscr{F}(M)$, we deduce that $\mathscr{F}(N)\cap \mathscr{F}(M)\neq\varnothing$.
Note also that $N'+M'=(N+M)+(S+T)$ where $N+M\in\mathcal{D}$ and $S+T\in\mathcal{R}(A)$.
Then, by the consistency of $\mathscr{F}$ and Proposition \ref{sum-R}, we obtain that
\[
\mathscr{F}'(N'+M')=\mathscr{F}((N+M)+(S+T))=\mathscr{F}(N+M)=\mathscr{F}(N)\cap \mathscr{F}(M)=\mathscr{F}'(N')\cap \mathscr{F}'(M'),
\]
that proves the consistency of $\mathscr{F}'$.

Let us prove that $\mathscr{F}'$ satisfies cancellation. Consider $N'\in (\mathcal{D}+\mathcal{R}(A))\cap \mathcal{R}(A)=\mathcal{R}(A)$. We have that $N'=N+R$ for some $N\in\mathcal{D}$ and $R\in\mathcal{R}(A)$. We deduce then that $N=N'-R\in \mathcal{D}\cap\mathcal{R}(A)$. Since $\mathscr{F}$ satisfies cancellation, we conclude then that $\mathscr{F}'(N')=\mathscr{F}(N)=A$. That proves that $\mathscr{F}'$ satisfies cancellation.
\end{proof}

\begin{proof}[Proof of Theorem \ref{ext-finale}]
By Proposition \ref{ext1}, there exists $\mathscr{F}_1:\mathbb{Z}\mathcal{D}\to P_*(A)$, extension of $\mathscr{F}$ and satisfying neutrality, consistency and cancellation.
Since $\mathbb{Z}(\mathbb{Z}\mathcal{D})=\mathbb{Z}\mathcal{D}$, we can apply Proposition \ref{ext2} to $\mathscr{F}_1$ obtaining the existence of $\mathscr{F}_2:\mathbb{Q}\mathcal{D}\to P_*(A)$, extension of $\mathscr{F}_1$ satisfying neutrality, consistency and cancellation. Thus, in particular, $\mathscr{F}_2$ is an extension of $\mathscr{F}$.
Since $\mathbb{Q}\mathcal{D}$ is a subspace of $\mathcal{N}(A)$, we can apply Proposition \ref{ext3} to $\mathscr{F}_2$ obtaining the existence of $\mathscr{F}':\mathbb{Q}\mathcal{D}+\mathcal{R}(A)\to P_*(A)$, extension of $\mathscr{F}_2$ satisfying neutrality, consistency and cancellation. Thus, $\mathscr{F}'$ is also an extension of $\mathscr{F}$ having the required properties. 
\end{proof}

\section{Proof of Theorem \ref{reg-scc}}

Theorem \ref{reg-scc} follows from Theorems \ref{tutte-new}, \ref{dicotomici-all} and \ref{truncated-all}, which are respectively proved in Sections \ref{1}, \ref{2} and \ref{3}. Some preliminary results are presented in Section \ref{0}.

\subsection{Some preliminary propositions} \label{0}

When $\mathbf{D}(A)$ is closed under permutation of alternatives, the set $\mathbf{D}(A)^*$ satisfies many useful properties, as described in the following propositions. In particular, Proposition \ref{beta} turns out to be a crucial short-cut for testing condition $(\beta)$ of Definition \ref{regular}.

\begin{proposition}\label{new-prop}
Let $\mathbf{D}(A)\subseteq \mathbf{R}(A)$. The following facts hold true.
\begin{itemize}
\item [$(i)$]$\mathbf{D}(A)^*$ is {\rm CA} and {\rm CWC} and $N(\mathbf{D}(A)^*)$ is {\rm CA}. 
\item[$(ii)$] If $\mathbf{D}(A)$ is closed under permutation of alternatives, then $\mathbf{D}(A)^*$ is {\rm CPA} and {\rm CRS} and $N(\mathbf{D}(A)^*)$ is {\rm CPV}.
\item[$(iii)$] If $\mathbf{D}(A)$ is closed under permutation of alternatives and there exists $R\in \mathbf{D}(A)$ such that $R\not\subseteq A^2_d$, then, for every $k\in\mathbb{N}$, there exists $t\in \mathbb{N}$, with $t\ge k$ such that $N(t)\in N(\mathbf{D}(A)^*)\cap\mathcal{C}(A)$.
\end{itemize}
\end{proposition}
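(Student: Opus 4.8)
The plan is to prove the three closure properties by verifying CA and CWC for $\mathbf{D}(A)^*$ by hand and then invoking Proposition \ref{nuovo-lemma}, which already records the passage from profile-level closure to the corresponding network-level closure. All the assertions except CWC turn out to be either immediate membership checks or direct quotations of Proposition \ref{nuovo-lemma}, so the only genuinely new step is the CWC verification in part $(i)$; that is where I expect the real (though still modest) work.

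For part $(i)$, I would first dispatch CA: if $p,p'\in\mathbf{D}(A)^*$ are disjoint, then $p+p'$ has the finite nonempty domain $\mathrm{Dom}(p)\cup\mathrm{Dom}(p')$ and takes all its values in $\mathbf{D}(A)$, hence $p+p'\in\mathbf{D}(A)^*$. The heart of the matter is CWC. Given $p_1,\dots,p_k\in\mathbf{D}(A)^*$, I would play the infinitude of $V=\mathbb{N}$ against the finiteness of $W:=\bigcup_{j=1}^k\mathrm{Dom}(p_j)$: choose pairwise disjoint finite sets $I_1,\dots,I_k\subseteq V\setminus W$ with $|I_i|=|\mathrm{Dom}(p_i)|$, fix bijections $\varphi_i\colon\mathrm{Dom}(p_i)\to I_i$, and define $q_i$ by $q_i(\varphi_i(\ell))=p_i(\ell)$. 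Each $q_i$ is then a clone of $p_i$ valued in $\mathbf{D}(A)$, so $q_i\in\mathbf{D}(A)^*$; since the $I_i$ are pairwise disjoint and contained in $V\setminus W$, the $q_i$ are pairwise disjoint and each is disjoint from every $p_j$, which is precisely CWC. Finally, $N(\mathbf{D}(A)^*)$ being CA follows at once from Proposition \ref{nuovo-lemma}$(ii)$.

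For part $(ii)$, closedness of $\mathbf{D}(A)$ under permutation of alternatives gives CPA directly: for $p\in\mathbf{D}(A)^*$ and $\psi\in\mathrm{Sym}(A)$, the profile $p^\psi$ has domain $\mathrm{Dom}(p)$ and values $p(i)^\psi\in\mathbf{D}(A)$, so $p^\psi\in\mathbf{D}(A)^*$. Having CA, CWC (from $(i)$) and CPA, I would then read off CRS from Proposition \ref{nuovo-lemma}$(iii)$ and CPV for $N(\mathbf{D}(A)^*)$ from Proposition \ref{nuovo-lemma}$(i)$.

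For part $(iii)$, I would first translate the hypothesis $R\not\subseteq A^2_d$ into $N(R)\neq N(0)$ via the remark recorded at the definition of $N(R)$. The one-voter profile $p$ with $p(1)=R$ then lies in $\mathbf{D}(A)^*$ and satisfies $N(p)=N(R)\neq N(0)$. Since $\mathbf{D}(A)^*$ is CA, CWC and CPA by the first two parts, Proposition \ref{nuovo-lemma}$(iv)$ applies verbatim and produces, for each $k\in\mathbb{N}$, some $t\ge k$ with $N(t)\in N(\mathbf{D}(A)^*)\cap\mathcal{C}(A)$, completing the argument.
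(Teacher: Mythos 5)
Your proposal is correct and follows essentially the same route as the paper: both reduce everything to Proposition \ref{nuovo-lemma}, with the paper treating CA, CWC and CPA for $\mathbf{D}(A)^*$ as immediate while you spell out those verifications (the disjoint-clone construction in $V\setminus W$ for CWC, the membership checks for CA and CPA), and both handle $(iii)$ via the one-voter profile $p(1)=R$ with $N(p)\neq N(0)$ fed into Proposition \ref{nuovo-lemma}$(iv)$.
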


\begin{proof}
It is immediately observed that $\mathbf{D}(A)^*$ is {\rm CA} and {\rm CWC}. Thus, by Proposition \ref{nuovo-lemma}$(ii)$, we get $(i)$. If $\mathbf{D}(A)$ is closed under permutation of alternatives, it is immediate to show that $\mathbf{D}(A)^*$ is {\rm CPA}. 
Thus, by Proposition \ref{nuovo-lemma}$(i)$-$(iii)$, we get $(ii)$. If $\mathbf{D}(A)$ is closed under permutation of alternatives and there exists $R\in \mathbf{D}(A)$ such that $R\not\subseteq A^2_d$,
we can consider $p\in \mathbf{D}(A)^*$ such that $\mathrm{Dom}(p)=\{1\}$ and $p(1)=R$. Then $N(p)\neq N(0)$ and so, applying Proposition \ref{nuovo-lemma}$(iv)$, we get $(iii)$. 
\end{proof}

\begin{proposition}\label{beta}
Let $\mathbf{D}(A)\subseteq \mathbf{R}(A)$ be nonempty. If $\mathbf{D}(A)$ is closed under permutation of alternatives, then $\mathbf{D}(A)^*$ satisfies condition $(\beta)$ of Definition {\rm \ref{regular}}.
\end{proposition}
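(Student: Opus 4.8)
The plan is to check condition $(\beta)$ head-on, the whole difficulty being concentrated in building a suitable reversal symmetric network inside the domain. Write $\mathbf{D}:=\mathbf{D}(A)^*$. First I would record, via Proposition \ref{new-prop}, that since $\mathbf{D}(A)$ is closed under permutation of alternatives, $\mathbf{D}$ is CA, CWC and CPA and $N(\mathbf{D})$ is CA. Then I fix $N\in\mathbb{Z}N(\mathbf{D})$, write $N=\sum_{i=1}^n z_iN(p_i)$ with $z_i\in\mathbb{Z}$ and $p_i\in\mathbf{D}$, and split it into its positive and negative parts $N^+:=\sum_{z_i>0}z_iN(p_i)$ and $N^-:=\sum_{z_i<0}(-z_i)N(p_i)$, so that $N=N^+-N^-$ with $N^+,N^-\in\mathcal{N}_{\mathbb{N}_0}(A)$. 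The target is to exhibit a reversal symmetric $S\in N(\mathbf{D})\cap\mathcal{R}(A)$ with $S-N^-\in N(\mathbf{D})$: then $N+S=N^++(S-N^-)$ lies in $N(\mathbf{D})$, and choosing $p,p'\in\mathbf{D}$ with $N(p)=N+S$ and $N(p')=S$ settles $(\beta)$.

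The crux is the construction of $S$, and here I would stress the obstacle: one cannot use the obvious reversal symmetric network $N(r)+N(r)^r$ of \eqref{-np}, since $\mathbf{D}(A)$ is \emph{not} assumed closed under reversal and so $r^r$ may fall outside $\mathbf{D}$. Instead I symmetrize over the entire group $\mathrm{Sym}(A)$, which by CPA never leaves $\mathbf{D}$ and, by Proposition \ref{costanti-sim}, yields a constant — hence reversal symmetric — network. Concretely, assuming $N^-\neq N(0)$, I would realize $N^-$ as $N(r)$ for a single $r\in\mathbf{D}$ (using CWC to take $-z_i$ pairwise disjoint clones of each $p_i$ with $z_i<0$ and CA to concatenate them), and then invoke Proposition \ref{nuovo-lemma}$(iii)$ to find $q\in\mathbf{D}$ disjoint from $r$ with $N(r+q)\in N(\mathbf{D})\cap\mathcal{C}(A)$. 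Writing $N(r+q)=N(k)$ and setting $S:=N(k)$, I get $S\in\mathcal{C}(A)\subseteq\mathcal{R}(A)$, $S\in N(\mathbf{D})$ (witnessed by $r+q$), and, by \eqref{prop-reti-profili}, $S-N^-=N(k)-N(r)=N(q)\in N(\mathbf{D})$.

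To finish I would realize $N+S=N^++N(q)$: since $N(q)$ is the network of a genuine nonempty profile and $N^+$ is a nonnegative integer combination of elements of $N(\mathbf{D})$, closure of $N(\mathbf{D})$ under addition (CA) places $N+S$ in $N(\mathbf{D})$, giving the desired $p$, while $p':=r+q$ gives $N(p')=S\in\mathcal{R}(A)$. The degenerate case $N^-=N(0)$ (which includes $N=N(0)$) is handled by the same scheme after replacing $r$ by an arbitrary nonempty profile of $\mathbf{D}$ — available since $\mathbf{D}(A)\neq\varnothing$ — and symmetrizing it to produce a nonzero constant seed $S$; then $N+S=N^++S\in N(\mathbf{D})$ and $S-N^-=S\in N(\mathbf{D})$, and the conclusion is identical. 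The main obstacle throughout is exactly the symmetrization step: it is what lets closure under permutation of alternatives substitute for closure under reversal.
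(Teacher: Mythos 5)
Your proof is correct and is essentially the paper's own argument: both decompose $N$ into its positive and negative integer parts, realize the negative part as a profile network, and cancel it with the constant (hence reversal symmetric) network produced by the $\mathrm{Sym}(A)$-symmetrization of Proposition \ref{nuovo-lemma}\,$(iii)$ — which is exactly how the paper's CRS property is obtained — before realizing $N+S$ through closure under addition. The only differences are organizational (the paper splits into three sign cases and tracks disjointness at the profile level, whereas you treat the cases uniformly via CA of $N(\mathbf{D}(A)^*)$), and your incidental claim that the constant seed $S$ is nonzero can fail when $\mathbf{D}(A)\subseteq P(A^2_d)$, though nothing in your argument depends on it.
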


\begin{proof} Note that, since  $\mathbf{D}(A)\neq \varnothing$, we have that $\mathbf{D}(A)^*\neq \varnothing$. We need to show that, for every $N\in \mathbb{Z}N(\mathbf{D}(A)^*)$,  
\begin{equation}\label{tesi1}
\mbox{there exist $p,p'\in \mathbf{D}(A)^*$ with $N(p')\in \mathcal{R}(A)$ such that $N(p)=N+N(p')$.}
\end{equation}

Assume first that $\mathbf{D}(A)\subseteq P(A^2_d)$. Then we have $N(\mathbf{D}(A)^*)=\{N(0)\}$  and hence also  $\mathbb{Z} N(\mathbf{D})=\{N(0)\}$. Let $N\in \mathbb{Z} N(\mathbf{D})$. Picking any $p,p'\in \mathbf{D}(A)^*$, we have that \eqref{tesi1} is trivially satisfied since $N=N(p)=N(p')=N(0)\in \mathcal{R}(A)$. 

Assume next that $\mathbf{D}(A)\not\subseteq P(A^2_d)$.
Then, by Proposition \ref{new-prop}, we have that $\mathbf{D}(A)^*$ is {\rm CA}, {\rm CWC}, {\rm CPA} and {\rm CRS},
$N(\mathbf{D}(A)^*)$ is {\rm CA} and there exists $\overline{p}\in \mathbf{D}(A)^*$ such that $N(\overline{p})\in \mathcal{C}(A)\subseteq \mathcal{R}(A)$. 
Let $N\in \mathbb{Z} N(\mathbf{D})$ and prove \eqref{tesi1}. 

If $N=N(0)$, we get \eqref{tesi1} choosing $p=p'=\overline{p}$.
If $N\neq N(0)$, there exist $k\in \mathbb{N}$, $p_1,\ldots,p_k\in \mathbf{D}(A)^*$ and $m_1,\ldots,m_k\in \mathbb{Z}$, 
with  $m_{i}\neq 0$ for some $i\in[k]$,  such that
\[
N=\sum_{i\in [k]}m_iN(p_i).
\]
Assume first that $m_i\geq 0$ for all $i\in [k]$. By the fact that $N(\mathbf{D}(A)^*)$ is CA, we get that $N=N(q)$ for some $q\in \mathbf{D}(A)^*$. Since $\mathbf{D}(A)^*$ is CWC, there exists $p' \in \mathbf{D}(A)^*$ clone of $\overline{p}$ disjoint from $q$. Moreover, let $p=q+p'$. Thus, $N(p')\in \mathcal{C}(A)$ and 
$N(p)=N(q+p')=N(q)+N(p')=N+N(p')$ gives \eqref{tesi1}.

Assume now that $m_i< 0$ for some $i\in [k]$.
Then 
\begin{equation}\label{formula}
N=\sum_{i\in [k], m_i>0} m_i N(p_i)- \sum_{i\in [k], m_i<0}(-m_i)N(p_i).
\end{equation}
Since $N(\mathbf{D}(A)^*)$ is CA, there exists $q\in \mathbf{D}(A)^*$ such that $\sum_{i\in [k], m_i<0}(-m_i)N(p_i)=N(q).$ There are two cases to discuss.

Suppose first that the first sum in \eqref{formula} is empty. Then we have $N=-N(q)$. Since $\mathbf{D}(A)^*$ is CRS there exists $q'\in \mathbf{D}(A)^*$ disjoint from $q$ such that $N(q)+N(q')=N(q+q')\in \mathcal{R}(A)$.
It follows that $N=N(q')-N(q+q')$ and thus $N+N(q+q')=N(q')$, which gives \eqref{tesi1} by choosing $p=q'$ and $p'=q+q'$.

Suppose next that the first sum in \eqref{formula} is not empty. Since $N(\mathbf{D}(A)^*)$ is CA and $\mathbf{D}(A)^*$ is CWC, 
there exists $s\in \mathbf{D}(A)^*$  with $s$ disjoint from $q$
such that $N=N(s)-N(q)$. Since $\mathbf{D}(A)^*$ is CRS and CWC, there exists $q'\in \mathbf{D}(A)^*$ disjoint from $q$ and $s$ such that
$N(q)+N(q')=N(q+q')\in \mathcal{R}(A)$. Thus, we get
$N=N(q')+N(s)-N(q+q')$ and so $N+N(q+q')=N(q')+N(s)=N(q'+s)$,
which gives \eqref{tesi1} by choosing $p=q'+s$ and $p'=q+q'$.
\end{proof}

\subsection{Linear orders}\label{1}

Let us consider now sets of preference profiles of the type $\mathbf{D}(A)^*$, where $\mathbf{L}(A)\subseteq \mathbf{D}(A)\subseteq \mathbf{R}(A)$.

\begin{proposition}\label{gamma-lin}
Let $\mathbf{D}(A)\subseteq \mathbf{R}(A)$ with $\mathbf{L}(A)\subseteq \mathbf{D}(A)$. Then
$\mathbb{Q} N(\mathbf{D}(A)^*)+\mathcal{R}(A)=\mathcal{N}(A)$.
\end{proposition}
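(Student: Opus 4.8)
The plan is to reduce the claimed equality to a statement about the standard basis of $\mathcal{N}(A)$ and then exploit the construction already carried out in the proof of Proposition \ref{con-utile-lin}. One inclusion is immediate: since $N(\mathbf{D}(A)^*) \subseteq \mathcal{N}(A)$, since $\mathcal{R}(A) \subseteq \mathcal{N}(A)$, and since $\mathcal{N}(A)$ is a $\mathbb{Q}$-vector space, any $\mathbb{Q}$-linear combination of elements of $N(\mathbf{D}(A)^*)$ together with an element of $\mathcal{R}(A)$ lands in $\mathcal{N}(A)$. Hence $\mathbb{Q} N(\mathbf{D}(A)^*) + \mathcal{R}(A) \subseteq \mathcal{N}(A)$ with no hypothesis needed.

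For the reverse inclusion I would first observe that from $\mathbf{L}(A) \subseteq \mathbf{D}(A)$ we get $\mathbf{L}(A)^* \subseteq \mathbf{D}(A)^*$ and therefore $N(\mathbf{L}(A)^*) \subseteq N(\mathbf{D}(A)^*)$, so it suffices to prove the stronger-looking statement $\mathcal{N}(A) \subseteq \mathbb{Q} N(\mathbf{L}(A)^*) + \mathcal{R}(A)$. Since $\{N_{xy} : (x,y) \in A^2_*\}$ is a basis for $\mathcal{N}(A)$, and since $\mathbb{Q} N(\mathbf{L}(A)^*) + \mathcal{R}(A)$ is a $\mathbb{Q}$-vector subspace of $\mathcal{N}(A)$, it is enough to check that each basis network $N_{xy}$ belongs to it.

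The key ingredient is equation \eqref{young1} established inside the proof of Proposition \ref{con-utile-lin}: for every $x, y \in A$ with $x \neq y$, there exists $R_{xy} \in \mathcal{R}(A)$ such that $2N_{xy} + R_{xy} \in N(\mathbf{L}(A)^*)$. Solving for $N_{xy}$ gives
\[
N_{xy} = \tfrac{1}{2}\bigl(2N_{xy} + R_{xy}\bigr) - \tfrac{1}{2} R_{xy},
\]
which exhibits $N_{xy}$ as a rational multiple of an element of $N(\mathbf{L}(A)^*)$ plus an element of $\mathcal{R}(A)$; that is, $N_{xy} \in \mathbb{Q} N(\mathbf{L}(A)^*) + \mathcal{R}(A)$. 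Running over all $(x,y) \in A^2_*$ yields $\mathcal{N}(A) \subseteq \mathbb{Q} N(\mathbf{L}(A)^*) + \mathcal{R}(A) \subseteq \mathbb{Q} N(\mathbf{D}(A)^*) + \mathcal{R}(A)$, and combined with the trivial inclusion we conclude $\mathbb{Q} N(\mathbf{D}(A)^*) + \mathcal{R}(A) = \mathcal{N}(A)$.

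I do not expect any genuine obstacle here, since the only nontrivial fact—namely the explicit two-voter linear-order profile realizing $2N_{xy} + R_{xy}$—has already been produced in \eqref{young1}. The argument is structurally identical to the final paragraph of the proof of Proposition \ref{app-net-2}, where the same rewriting $N_{xy} = \tfrac12(2N_{xy}+R) - \tfrac12 R$ is used to show $\mathbb{Q}\mathcal{B}_{\mathbb{N}_0}(A) + \mathcal{R}(A) = \mathcal{N}(A)$; the only new point is the reduction via $\mathbf{L}(A) \subseteq \mathbf{D}(A)$ that lets us quote the linear-order construction.
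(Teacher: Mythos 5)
Your proof is correct and follows essentially the same route as the paper's: both reduce to showing $\mathcal{N}(A)\subseteq \mathbb{Q}N(\mathbf{L}(A)^*)+\mathcal{R}(A)$, then use \eqref{young1} and the fact that $R_{xy}\in\mathcal{R}(A)$ to place each basis network $N_{xy}$ in that subspace. The explicit rewriting $N_{xy}=\tfrac12(2N_{xy}+R_{xy})-\tfrac12 R_{xy}$ merely spells out the step the paper states as ``it follows that,'' so there is no substantive difference.
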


\begin{proof} Since $\mathbb{Q} N(\mathbf{L}(A)^*)+\mathcal{R}(A)\subseteq \mathbb{Q} N(\mathbf{D}(A)^*)+\mathcal{R}(A)\subseteq \mathcal{N}(A)$, it is sufficient to show that $\mathbb{Q} N(\mathbf{L}(A)^*)+\mathcal{R}(A)\supseteq \mathcal{N}(A)$.

We claim first that, for every $x,y\in A$ with $x\neq y$, there exists $R_{xy}\in \mathcal{R}(A)$ such that 
\begin{equation}\label{young1}
2N_{xy}+R_{xy}\in N(\mathbf{L}(A)^*).
\end{equation}
Let $x,y\in A$ with $x\neq y$. Let $A=\{x_1,\dots, x_m\}$, where $x_1,\ldots, x_m$ are distinct, $x_1=x$ and $x_2=y.$ 
Let $p_{xy}\in \mathbf{L}(A)^*$ be such that $\mathrm{Dom}(p_{xy})=\{1,2\}$, $p_{xy}(1)=\{(x_i,x_j)\in A^2: i,j\in[m], i\le j\}$ and $p_{xy}(2)=(p_{xy}(1)^r)^{(x\,y)}$.\footnote{Using the standard informal way to represent linear orders, we have that if $m=2$, then $p_{xy}(1)=p_{xy}(2)=x_1x_2$;  if $m\geq 3$, then $p_{xy}(1)=x_1x_2 \cdots x_m$ and $p_{xy}(2)=x_mx_{m-1}\cdots x_3x_1x_2$.} 
Then, it is immediately checked that 
$N(p)=2N_{xy}+R_{xy}$, where 
\begin{equation}\label{Rxyrev}
R_{xy}\coloneq \sum_{i\in[m]}\sum_{j\in [m]\setminus\{1,2\} }(N_{x_ix_j}+N_{x_jx_i})\in\mathcal{R}(A).
\end{equation}
Note that $R_{xy}=N(0)$ when $m=2$. That proves the claim.

By \eqref{young1}, we have that, for every $x,y\in A$ with $x\neq y$, the networks $2N_{xy}+R_{xy}\in N(\mathbf{L}(A)^*)$ and, by \eqref{Rxyrev}, $R_{xy}\in \mathcal{R}(A).$
It follows that $N_{xy}\in \mathbb{Q}N(\mathbf{L}(A)^*)+\mathcal{R}(A)$. Since the networks $N_{xy}$ are a basis for $\mathcal{N}(A)$, we deduce that $\mathbb{Q}N(\mathbf{L}(A)^*)+\mathcal{R}(A)\supseteq\mathcal{N}(A)$. 
\end{proof}

\begin{theorem}\label{tutte-new} Let $ \mathbf{D}(A)\subseteq \mathbf{R}(A)$ be closed under permutation of alternatives and such that $\mathbf{L}(A)\subseteq \mathbf{D}(A)$. Then $\mathbf{D}(A)^*$ is regular. 
\end{theorem}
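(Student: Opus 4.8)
The plan is to verify directly the three defining conditions $(\alpha)$, $(\beta)$, $(\gamma)$ of regularity from Definition \ref{regular}, assembling the propositions already established in this section. The hypothesis splits naturally into two uses: the closure under permutation of alternatives supplies the ``symmetric'' properties (CPA, CRS, and the short-cut for $(\beta)$), while the inclusion $\mathbf{L}(A)\subseteq\mathbf{D}(A)$ supplies the ``richness'' properties (CON and the fact that $\mathbb{Q}N(\mathbf{D}(A)^*)$ fills up $\mathcal{N}(A)$ modulo $\mathcal{R}(A)$).

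First I would establish $(\alpha)$. The properties CA and CWC hold for $\mathbf{D}(A)^*$ with no assumption on $\mathbf{D}(A)$, by Proposition \ref{new-prop}$(i)$. Since $\mathbf{D}(A)$ is closed under permutation of alternatives, Proposition \ref{new-prop}$(ii)$ gives that $\mathbf{D}(A)^*$ is CPA. Finally, since $\mathbf{L}(A)\subseteq\mathbf{D}(A)$, Proposition \ref{con-utile-lin} yields that $\mathbf{D}(A)^*$ is CON. Together these give CPA, CA, CON and CWC, which is exactly condition $(\alpha)$.

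Next I would deal with $(\beta)$. Here I note that $\mathbf{D}(A)\supseteq\mathbf{L}(A)\neq\varnothing$, so $\mathbf{D}(A)$ is nonempty, and it is closed under permutation of alternatives by hypothesis. Hence Proposition \ref{beta} applies verbatim and tells us that $\mathbf{D}(A)^*$ satisfies condition $(\beta)$.

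Finally I would verify $(\gamma)$. By Proposition \ref{gamma-lin}, the inclusion $\mathbf{L}(A)\subseteq\mathbf{D}(A)$ forces $\mathbb{Q}N(\mathbf{D}(A)^*)+\mathcal{R}(A)=\mathcal{N}(A)$. Intersecting with $\mathcal{PS}(A)$ then gives
\[
\big(\mathbb{Q}N(\mathbf{D}(A)^*)+\mathcal{R}(A)\big)\cap\mathcal{PS}(A)=\mathcal{N}(A)\cap\mathcal{PS}(A)=\mathcal{PS}(A),
\]
which lies in $\{\mathcal{R}(A),\mathcal{PS}(A)\}$, so $(\gamma)$ holds. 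Having checked $(\alpha)$, $(\beta)$ and $(\gamma)$, I conclude that $\mathbf{D}(A)^*$ is regular. There is essentially no genuine obstacle in this proof: it is a clean corollary of the preceding structural propositions, the real work having been carried out earlier in establishing CON via the explicit linear-order construction of Proposition \ref{con-utile-lin} and the spanning property of Proposition \ref{gamma-lin}, both of which crucially exploit $\mathbf{L}(A)\subseteq\mathbf{D}(A)$.
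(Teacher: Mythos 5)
Your proof is correct and follows essentially the same route as the paper: both verify $(\alpha)$ via Proposition \ref{new-prop} (CPA, CA, CWC) together with Proposition \ref{con-utile-lin} (CON), obtain $(\beta)$ from Proposition \ref{beta}, and get $(\gamma)$ by intersecting the identity $\mathbb{Q}N(\mathbf{D}(A)^*)+\mathcal{R}(A)=\mathcal{N}(A)$ of Proposition \ref{gamma-lin} with $\mathcal{PS}(A)$.
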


\begin{proof} 
We prove that $\mathbf{D}(A)^*$ satisfies the conditions $(\alpha)$, $(\beta)$, $(\gamma)$ and $(\varepsilon)$ of Definition \ref{regular}.

Since $\mathbf{D}(A)$ is closed under permutation of alternatives, by Proposition \ref{new-prop}, we have that  $\mathbf{D}(A)^*$ is CPA, CA, CWC and CRS. Thus, $(\alpha)$ in Definition \ref{regular} holds.
The fact that  $(\beta)$ in Definition \ref{regular} holds
 follows from Proposition \ref{beta}, because $\mathbf{D}(A)$ is closed under permutation of alternatives and $\mathbf{D}(A)\neq \varnothing.$
In order to prove that $\mathbf{D}(A)^*$ satisfies $(\gamma)$ in Definition \ref{regular}, observe first that by Proposition \ref{gamma-lin} we know that
\begin{equation*}\label{all}
\mathbb{Q}N(\mathbf{D}(A)^*)+\mathcal{R}(A)=\mathcal{N}(A).
\end{equation*}
As a consequence, we have
\[
\big(\mathbb{Q}N(\mathbf{D}(A)^*)+\mathcal{R}(A)\big)\cap \mathcal{PS}(A)=\mathcal{N}(A)\cap \mathcal{PS}(A)=\mathcal{PS}(A),
\]
and 
\[
\big(\mathbb{Q}N(\mathbf{D}(A)^*)+\mathcal{R}(A)\big)\cap \{N_x:x\in A\}=\{N_x:x\in A\}\neq\varnothing,
\]
 and thus $(\gamma)$ and $(\epsilon)$ hold true.
\end{proof}

\subsection{Dichotomous orders}\label{2}

\begin{definition}\label{caso-dicotomici}
Let $\varnothing\neq X\subseteq A$. Denote by $p_X$ the element of $\mathbf{Di}(A)^*$ such that $\mathrm{Dom}(p_X)=\{1\}$ and $\mathrm{Max}(p_X(1))=X$. 
\end{definition}
It is easily checked that 
\begin{equation}\label{singolo-dico}
N(p_X)=K_{A\setminus X}+\sum_{x\in X}N_x,
\end{equation}
where $K_{A\setminus X}$ is the complete network on $A\setminus X$ (see \eqref{Bcompleto}).
In particular, for every $x\in A$, since $K_{\{x\}}=N(0),$ we have
\begin{equation}\label{dico-speciale}
N(p_{\{x\}})=K_{A\setminus \{x\}}+N_x,\quad N(p_{A\setminus \{x\}})=\sum_{y\in A\setminus \{x\}}N_y.
\end{equation}

\begin{proposition}\label{dicotomici-CON}
Let $\mathbf{D}(A)\subseteq \mathbf{R}(A)$ with $\mathbf{Di}_t(A)\subseteq \mathbf{D}(A)$ for some $t\in[m-1]$. Then 
$\big(\mathbb{Q}N(\mathbf{D}(A)^*)+\mathcal{R}(A)\big)\cap \{N_x:x\in A\}\neq\varnothing$.
\end{proposition}

\begin{proof} Let $t\in [m-1]$ be such that $\mathbf{Di}_t(A)\subseteq \mathbf{D}(A)$. Since $\mathbf{Di}_t(A)^*\subseteq \mathbf{D}(A)^*$, it is sufficient to show that $\big(\mathbb{Q}N(\mathbf{Di}_t(A)^*)+\mathcal{R}(A)\big)\cap \{N_x:x\in A\}\neq\varnothing$.
 
If $m=2$, then $t=1$ and $\mathbf{Di}_t(A)^*=\mathbf{L}(A)^*$ and the desired property follows by Proposition \ref{gamma-lin}. 
Assume then that $m\geq 3$. Let $a\in A$ and let $r\coloneq \binom{m-1}{t-1}$ be the number of subsets of $A$ having size $t$ and containing $a$. Let $X_1,\dots, X_r$ be those subsets. Let  $p\in\mathbf{Di}_t(A)^*$ be such that $\mathrm{Dom}(p)=[r]$ and, for every $i\in [r]$, $\mathrm{Max}(p(i))\coloneq X_i$. If $t=1$, then we have $r=1$ and, by \eqref{singolo-dico}, we have $N(p)=N(p_{\{a\}})=K_{A\setminus \{a\}}+N_x$, as desired. 
Assume next $t\geq 2$.
By \eqref{singolo-dico}, we have 
\[
N(p)=\sum_{i\in[r]}N(p_{X_i})=\sum_{i\in[r]}\left(K_{A\setminus X_i}+\sum_{x\in X_i}N_x\right)=\sum_{i\in[r]}K_{A\setminus X_i}+ \sum_{i\in[r]}\sum_{x\in X_i}N_x
\]
\[
=\sum_{i\in[r]}K_{A\setminus X_i}+rN_a+\sum_{i\in[r]}\sum_{x\in X_i\setminus\{a\}}N_x=\sum_{i\in[r]}K_{A\setminus X_i}+rN_a+\binom{m-2}{t-2}\sum_{x\in A\setminus\{a\}}N_x
\]
\[
=\sum_{i\in[r]}K_{A\setminus X_i}+rN_a+\binom{m-2}{t-2}\left(-N_a + \sum_{x\in A}N_x \right)
\]
\[
=\left(r-\binom{m-2}{t-2}\right)N_a+\sum_{i\in[r]}K_{A\setminus X_i}+\binom{m-2}{t-2}N(1)
\]
Note that $R\coloneq \sum_{i\in[r]}K_{A\setminus X_i}+\binom{m-2}{t-2}N(1)\in \mathcal{R}(A)$ and that, since $m\geq 3$, we have
\[
k\coloneq \binom{m-1}{t-1}-\binom{m-2}{t-2}=\binom{m-2}{t-1}\in \mathbb{N}.
\]
Thus, we have proved that $N(p)=kN_a+R$. As a consequence, $N_a=\frac{1}{k}N(p)-\frac{1}{k}R\in \big(\mathbb{Q}N(\mathbf{Di}_t(A)^*)+\mathcal{R}(A)\big)\cap \{N_x:x\in A\}\neq\varnothing$. 
\end{proof}

\begin{proposition}\label{gamma-dic}
Let $\mathbf{D}(A)\subseteq \mathbf{Di}(A)$. Then $\left(\mathbb{Q} N(\mathbf{D}(A)^*)+\mathcal{R}(A)\right)\cap  \mathcal{PS}(A)=\mathcal{R}(A).$
\end{proposition}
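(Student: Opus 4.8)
The plan is to reduce the statement to Lemma \ref{Obello}, which asserts that $\left(\mathcal{O}(A)+\mathcal{R}(A)\right)\cap \mathcal{PS}(A)=\mathcal{R}(A)$. The inclusion $\supseteq$ is immediate: $\mathcal{R}(A)$ sits inside $\mathbb{Q}N(\mathbf{D}(A)^*)+\mathcal{R}(A)$ (take the null network from the first summand), and $\mathcal{R}(A)\subseteq\mathcal{PS}(A)$. For the reverse inclusion I would not argue directly, but instead establish the stronger containment
\[
\mathbb{Q}N(\mathbf{D}(A)^*)+\mathcal{R}(A)\subseteq \mathcal{O}(A)+\mathcal{R}(A),
\]
after which intersecting both sides with $\mathcal{PS}(A)$ and applying Lemma \ref{Obello} yields $\left(\mathbb{Q}N(\mathbf{D}(A)^*)+\mathcal{R}(A)\right)\cap \mathcal{PS}(A)\subseteq\mathcal{R}(A)$, as desired.

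The heart of the argument is the containment above, and here I would exploit the special structure of dichotomous orders recorded in \eqref{singolo-dico}. Since $\mathcal{O}(A)+\mathcal{R}(A)$ is a subspace of $\mathcal{N}(A)$ containing $\mathcal{R}(A)$, it suffices to check that each generator $N(p)$ with $p\in\mathbf{D}(A)^*$ lies in $\mathcal{O}(A)+\mathcal{R}(A)$. As $p$ takes values in $\mathbf{D}(A)\subseteq\mathbf{Di}(A)$, every $p(i)$ is a dichotomous order with top set $T(p(i))$, so $N(p(i))=N(p_{T(p(i))})$ and hence $N(p)=\sum_{i\in\mathrm{Dom}(p)}N(p_{T(p(i))})$ is a finite sum of networks of the form $N(p_X)$. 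By \eqref{singolo-dico} we have $N(p_X)=K_{A\setminus X}+\sum_{x\in X}N_x$, where $K_{A\setminus X}\in\mathcal{R}(A)$ and $\sum_{x\in X}N_x\in\mathcal{O}(A)$. Therefore each $N(p_X)$, and consequently $N(p)$, belongs to $\mathcal{O}(A)+\mathcal{R}(A)$, which gives the containment.

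I do not expect any genuine obstacle here: the entire content has been pushed into Lemma \ref{Obello}, where the dimension count through the map $\delta$ does the real work. The only point needing care is the bookkeeping that the ``complete block'' contribution $K_{A\setminus X}$ of a dichotomous order is reversal symmetric and hence absorbed into $\mathcal{R}(A)$, while the remaining part is an outstar combination lying in $\mathcal{O}(A)$ — and this is exactly what formula \eqref{singolo-dico} delivers. It is worth noting that this is precisely where dichotomous orders differ from the linear-order case treated in Proposition \ref{gamma-lin}: there one obtains the full space $\mathcal{N}(A)$, whereas here the networks stay confined to $\mathcal{O}(A)+\mathcal{R}(A)$, forcing the intersection with $\mathcal{PS}(A)$ to collapse to $\mathcal{R}(A)$ rather than to all of $\mathcal{PS}(A)$.
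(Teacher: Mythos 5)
Your proposal is correct and follows essentially the same route as the paper: both use \eqref{singolo-dico} to split each dichotomous network into a complete (reversal-symmetric) block plus an outstar combination, conclude that $\mathbb{Q}N(\mathbf{D}(A)^*)+\mathcal{R}(A)\subseteq \mathcal{O}(A)+\mathcal{R}(A)$, and then invoke Lemma \ref{Obello}. The only (harmless) difference is that the paper proves the full equality $\mathbb{Q}N(\mathbf{Di}(A)^*)+\mathcal{R}(A)=\mathcal{O}(A)+\mathcal{R}(A)$, whereas you prove only the inclusion actually needed.
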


\begin{proof} We first show that 
\begin{equation}\label{spazio-dico}
\mathbb{Q}N(\mathbf{Di}(A)^*)+\mathcal{R}(A)=\mathcal{O}(A)+\mathcal{R}(A).
\end{equation}
Recall that $\mathcal{O}(A)=\mathbb{Q}\{N_x:x\in A\}$. By \eqref{dico-speciale}, for every $x\in A$, we have $N_x+K_{A\setminus\{x\}}=N(p_{\{x\}})\in N(\mathbf{Di}(A)^*)$. Since $K_{A\setminus\{x\}}\in \mathcal{R}(A)$, we deduce that $N_x\in \mathbb{Q}N(\mathbf{Di}(A)^*)+\mathcal{R}(A)$.  As a consequence, we have that $\mathcal{O}(A)+\mathcal{R}(A)\subseteq \mathbb{Q}N(\mathbf{Di}(A)^*)+\mathcal{R}(A).$

Let $p\in \mathbf{Di}(A)^*$ and, for every $i\in\mathrm{Dom}(p)$, let $X_i=\mathrm{Max}(p(i))$. Using \eqref{singolo-dico}, we have that
\[
N(p)=\sum_{i\in\mathrm{Dom}(p)} N(p_{X_i})=\sum_{i\in\mathrm{Dom}(p)} \left(K_{A\setminus X_i}+\sum_{x\in X_i}N_x\right)=\sum_{i\in\mathrm{Dom}(p)} K_{A\setminus X_i}+\sum_{i\in\mathrm{Dom}(p)}\sum_{x\in X_i}N_x.
\]
It follows that $N(\mathbf{Di}(A)^*)\subseteq \mathcal{O}(A)+\mathcal{R}(A)$ and thus $\mathbb{Q}N(\mathbf{Di}(A)^*)+\mathcal{R}(A)\subseteq\mathcal{O}(A)+\mathcal{R}(A).$ That shows \eqref{spazio-dico}.

Let now $\mathbf{D}(A)\subseteq \mathbf{Di}(A)$. Then, by \eqref{spazio-dico}, we have
\[
\mathbb{Q}\mathcal{N}(\mathbf{D}(A)^*)+\mathcal{R}(A)\subseteq \mathbb{Q}\mathcal{N}(\mathbf{Di}(A)^*)+\mathcal{R}(A)=\mathcal{O}(A)+\mathcal{R}(A),
\]
and hence, by Proposition  \ref{generale-reti}$(iii.g)$, we get
\[
(\mathbb{Q}\mathcal{N}(\mathbf{D}(A)^*)+\mathcal{R}(A))\cap \mathcal{PS}(A)\subseteq (\mathcal{O}(A)+\mathcal{R}(A))\cap  \mathcal{PS}(A)=\mathcal{R}(A).
\]
Since the other inclusion trivially holds we deduce that $(\mathbb{Q}\mathcal{N}(\mathbf{D}(A)^*)+\mathcal{R}(A))\cap \mathcal{PS}(A)=\mathcal{R}(A).$
\end{proof}

\begin{theorem}\label{dicotomici-all} Let $X\subseteq [m]$ with $X\cap [m-1]\neq\varnothing$. Then
$\mathbf{Di}_X(A)^*$ is regular. 
\end{theorem}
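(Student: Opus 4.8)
The plan is to verify directly that $\mathbf{Di}_X(A)^*$ meets the three defining conditions $(\alpha)$, $(\beta)$, $(\gamma)$ of regularity (Definition \ref{regular}), exactly in the style of the proof of Theorem \ref{tutte-new} for the linear-order case. All the substantive work has already been isolated into the preparatory propositions of this section, so the argument will consist in checking the hypotheses of those propositions for the domain $\mathbf{Di}_X(A)$. The two facts I will use repeatedly are that $\mathbf{Di}_X(A)$ is closed under permutation of alternatives (recorded just after the relevant definition) and that $\mathbf{Di}_X(A)\neq\varnothing$, which follows from $X\cap[m-1]\neq\varnothing$ together with the noted fact that $\mathbf{Di}_t(A)\neq\varnothing$ for every $t\in[m]$.

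For condition $(\alpha)$ I would argue as follows. Since $\mathbf{Di}_X(A)$ is closed under permutation of alternatives, Proposition \ref{new-prop}\,$(i)$-$(ii)$ immediately gives that $\mathbf{Di}_X(A)^*$ is CPA, CA and CWC (and that $N(\mathbf{Di}_X(A)^*)$ is CA and CPV, which I will also want later). The remaining property, CON, is the one genuine pressure point, and it is precisely here that the hypothesis $X\cap[m-1]\neq\varnothing$ is consumed: choosing $t\in X\cap[m-1]$ we have $\mathbf{Di}_t(A)\subseteq\mathbf{Di}_X(A)$ with $t\in[m-1]$, so Proposition \ref{dicotomici-CON} applies and yields that $\mathbf{Di}_X(A)^*$ is CON. (Without a value of $t$ in $[m-1]$ the only admissible case would be $X=\{m\}$, where $\mathbf{Di}_m(A)=\{A^2\}$ produces only reversal-symmetric networks and CON fails, so this restriction is essential rather than cosmetic.) This completes $(\alpha)$.

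For $(\beta)$ I would invoke Proposition \ref{beta}: $\mathbf{Di}_X(A)$ is nonempty and closed under permutation of alternatives, hence $\mathbf{Di}_X(A)^*$ satisfies condition $(\beta)$ directly. For $(\gamma)$ I would use the inclusion $\mathbf{Di}_X(A)\subseteq\mathbf{Di}(A)$ together with Proposition \ref{gamma-dic}, which gives
\[
\big(\mathbb{Q}N(\mathbf{Di}_X(A)^*)+\mathcal{R}(A)\big)\cap\mathcal{PS}(A)=\mathcal{R}(A)\in\{\mathcal{R}(A),\mathcal{PS}(A)\},
\]
so $(\gamma)$ holds. Having checked $(\alpha)$, $(\beta)$ and $(\gamma)$, the conclusion that $\mathbf{Di}_X(A)^*$ is regular follows. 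I expect no real difficulty beyond bookkeeping; the only step requiring care is the verification of CON, where the hypothesis on $X$ must be used to select a suitable $t\in[m-1]$ before applying Proposition \ref{dicotomici-CON}.
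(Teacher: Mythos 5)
Your proposal is correct and follows essentially the same route as the paper's own proof: conditions $(\alpha)$, $(\beta)$, $(\gamma)$ are verified via Proposition \ref{new-prop}, Proposition \ref{dicotomici-CON} (using a chosen $t\in X\cap[m-1]$ for CON), Proposition \ref{beta}, and Proposition \ref{gamma-dic}, respectively. Your added remarks on nonemptiness and on why the hypothesis $X\cap[m-1]\neq\varnothing$ is indispensable are sound but not needed beyond what the cited propositions already require.
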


\begin{proof} We prove that $\mathbf{Di}_X(A)^*$ satisfies the conditions $(\alpha)$, $(\beta)$, $(\gamma)$ and $(\epsilon)$ of Definition \ref{regular}.
Since $\mathbf{Di}_X(A)$ is closed under permutation of alternatives, by Proposition \ref{new-prop}, we have that 
$\mathbf{Di}_X(A)^*$ is CPA, CA and CRS. Thus, condition $(\alpha)$  is satisfied. 
By Proposition \ref{beta} we deduce that condition $(\beta)$ is satisfied. Since clearly $\mathbf{Di}_X(A)\subseteq \mathbf{Di}(A)$, by Proposition \ref{gamma-dic} we deduce that condition $(\gamma)$ is satisfied.
Since $X\cap [m-1]\neq\varnothing$, we have that $\mathbf{Di}_X(A)\supseteq \mathbf{Di}_t(A)$ for some $t\in [m-1]$. Thus, by Proposition \ref{dicotomici-CON} we deduce that condition $(\epsilon)$ is satisfied. 
\end{proof}

\subsection{Top-truncated orders}\label{3}

\begin{proposition}\label{s-truncated-CON}
Let $\mathbf{D}(A)\subseteq \mathbf{R}(A)$ with $\mathbf{T}_s(A)\subseteq \mathbf{D}(A)$ for some $s\in[m-1]$. Then $\big(\mathbb{Q}N(\mathbf{D}(A)^*)+\mathcal{R}(A)\big)\cap \{N_x:x\in A\}\neq\varnothing$.
\end{proposition}

\begin{proof}  
Let $x\in A$. Let us prove that there exist $p\in\mathbf{T}_s(A)^*$, $k\in \mathbb{N}$  and $R\in \mathcal{R}(A)$ such that $N(p)=kN_x+R$. That allows to complete the proof since $N_x=\frac{1}{k}N(p)-\frac{1}{k}R\in\big(\mathbb{Q}N(\mathbf{D}(A)^*)+\mathcal{R}(A)\big)\cap \{N_x:x\in A\}\neq\varnothing$.

Let $Q\in \mathbf{T}_s(A)$ be such that $\mathrm{Max}(Q)=\{x\}$.
Let $\psi_1,\dots, \psi_{(m-1)!}$ be the distinct  permutations in the set $\Omega=\{\psi \in \mathrm{Sym}(A): \psi(x)=x\}$.
Consider $p\in \mathbf{T}_s(A)^*$ such that $\mathrm{Dom}(p)=[(m-1)!]$ and, for every $i\in \mathrm{Dom}(p)$,
$p(i)=Q^{\psi_i}$. Note that, for every $i\in \mathrm{Dom}(p)$, $\mathrm{Max}(p(i))=\{x\}$.
It is immediately checked that $N(p)=(m-1)!N_x+R$,
where 
\[
R=\sum_{\psi \in \Omega} N\left(Q^{\psi}\cap (A\setminus\{x\})^2\right)=\sum_{\psi \in \Omega} N\left(Q\cap (A\setminus\{x\})^2\right)^{\psi}.
\]
We complete the proof showing that $R\in \mathcal{R}(A)$. Indeed, note that $c^R(x',y')=0$ if $x'=x$ or $y'=x$. Moreover, for every $\sigma\in\Omega$, $R^\sigma=R$. Then, using an argument analogous to the one used in the proof of Proposition \ref{costanti-sim}, 
we see that there exists $\alpha\in\mathbb{N}$ such that $R=\alpha K_{A\setminus \{x\}}\in \mathcal{R}(A)$. 
\end{proof}

\begin{proposition}\label{gamma-trunc}
Let $\mathbf{D}(A)\subseteq \mathbf{R}(A)$ with $\mathbf{D}(A)\supseteq \mathbf{T}_s(A)$, for some $s\in [m-1]$ with $s\geq 2$. Then
\[
\left(\mathbb{Q} N(\mathbf{D}(A)^*)+\mathcal{R}(A)\right)\cap  \mathcal{PS}(A)=\mathcal{PS}(A).
\]
\end{proposition}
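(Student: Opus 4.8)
The plan is to prove the only nontrivial inclusion, namely $\mathcal{PS}(A)\subseteq \mathbb{Q}N(\mathbf{D}(A)^*)+\mathcal{R}(A)$; the inclusion of the left-hand side of the statement into $\mathcal{PS}(A)$ is automatic, and once the displayed inclusion is in hand, intersecting it with $\mathcal{PS}(A)$ yields the asserted equality. Since the hypothesis $\mathbf{T}_s(A)\subseteq\mathbf{D}(A)$ gives $N(\mathbf{T}_s(A)^*)\subseteq N(\mathbf{D}(A)^*)$, it is enough to show $\mathcal{PS}(A)\subseteq \mathbb{Q}N(\mathbf{T}_s(A)^*)+\mathcal{R}(A)$. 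By Proposition \ref{ps-3}, $\mathcal{PS}(A)=\mathbb{Q}\{C\in\mathcal{N}(A):C\text{ is an }m\text{-cycle}\}+\mathcal{R}(A)$, and as $\mathcal{R}(A)$ is trivially contained in the target space, the whole problem reduces to checking that \emph{every} $m$-cycle network lies in $\mathbb{Q}N(\mathbf{T}_s(A)^*)+\mathcal{R}(A)$.

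The heart of the argument is an observation for which the hypothesis $s\ge 2$ is essential. Fix distinct $x,y\in A$. Because $2\le s\le m-1$, I would choose a top-truncated order $R_1\in\mathbf{T}_s(A)$ whose strictly ordered top chain of length $s$ begins $x\succ y\succ\cdots$ (the remaining $s-2$ chain positions and the nonempty bottom block $B$ being filled arbitrarily with the other alternatives), and let $R_2\in\mathbf{T}_s(A)$ be obtained from $R_1$ by swapping only the first two chain entries, so that its chain begins $y\succ x\succ\cdots$ with everything else unchanged. Taking single-voter profiles $p_1,p_2\in\mathbf{T}_s(A)^*$ with $N(p_i)=N(R_i)$, a direct comparison shows that every arc cancels except the one between $x$ and $y$: all the arcs from top-chain vertices to the common bottom block coincide (the set of chain vertices is the same), the symmetric block $K_B$ is common, and all internal chain arcs except the reversed pair agree. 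Hence $N(p_1)-N(p_2)=N_{xy}-N_{yx}$, which yields $N_{xy}-N_{yx}\in\mathbb{Z}N(\mathbf{T}_s(A)^*)\subseteq\mathbb{Q}N(\mathbf{T}_s(A)^*)$ for every ordered pair of distinct alternatives.

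To assemble a full cycle out of these antisymmetric arcs I would use a reversal trick. Let $C=C_{x_1\,x_2\cdots x_m\,x_1}=\sum_{i}N_{x_i\,x_{i+1}}$ (indices cyclic). Its reversal satisfies $C+C^r=\sum_i(N_{x_i\,x_{i+1}}+N_{x_{i+1}\,x_i})\in\mathcal{R}(A)$, since $(C+C^r)^r=C^r+C$. Consequently $2C=(C+C^r)+(C-C^r)\equiv C-C^r \pmod{\mathcal{R}(A)}$, where $C-C^r=\sum_i\big(N_{x_i\,x_{i+1}}-N_{x_{i+1}\,x_i}\big)$ is a sum of the antisymmetric arcs produced in the previous paragraph and therefore lies in $\mathbb{Q}N(\mathbf{T}_s(A)^*)$. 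Thus $2C\in\mathbb{Q}N(\mathbf{T}_s(A)^*)+\mathcal{R}(A)$, and since this set is a $\mathbb{Q}$-vector space it also contains $C$. As $C$ was an arbitrary $m$-cycle, the reduction of the first paragraph is complete and the proposition follows.

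I expect the only delicate point to be the exact bookkeeping in the network comparison of the second paragraph: one must verify carefully that swapping the top two chain entries leaves all the chain-to-bottom arcs and all remaining internal chain arcs untouched, so that $N(R_1)-N(R_2)$ collapses to the single antisymmetric term $N_{xy}-N_{yx}$; everything else is formal. It is worth stressing that for $s=1$ this construction is unavailable, since there is no pair to swap at the top, consistent with the fact that for dichotomous orders with a single top alternative the analogous intersection is only $\mathcal{R}(A)$ (Proposition \ref{gamma-dic}).
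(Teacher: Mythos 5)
Your proof is correct, and it takes a genuinely different route from the paper's. Both arguments hinge on the same combinatorial seed --- swapping the top two entries of the $s$-chain (possible precisely because $s\ge 2$) changes the associated network only on the arcs between the swapped pair --- but you exploit the \emph{difference} of the two single-voter networks, obtaining $N(R_1)-N(R_2)=N_{xy}-N_{yx}$ at once, and then reach $\mathcal{PS}(A)$ via Proposition \ref{ps-3}: for any $m$-cycle $C$ you write $2C=(C+C^r)+(C-C^r)$, with $C+C^r\in\mathcal{R}(A)$ and $C-C^r$ a sum of your antisymmetric differences. The paper instead works with the \emph{sum} structure of each profile's network in full: it expands $N(p)$ for the chain $1\succ 2\succ\cdots\succ s$ as $\sum_{i\in[s]}N_i-M-N_{21}+K_{A\setminus[s]}$, invokes Proposition \ref{s-truncated-CON} to place the outstar networks $N_x$ (hence $\sum_{i\in[s]}N_i$) inside $W:=\mathbb{Q}N(\mathbf{T}_s(A)^*)+\mathcal{R}(A)$, deduces $M+N_{21}\in W$ and, after the same swap, $M+N_{12}\in W$, then cancels using $N_{12}+N_{21}\in\mathcal{R}(A)$ to conclude $N_{12}\in W$. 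Consequently the paper proves the stronger equality $W=\mathcal{N}(A)$ (the analogue of Proposition \ref{gamma-lin}), while you prove exactly the containment $\mathcal{PS}(A)\subseteq W$ that the proposition asserts; in exchange, your argument is lighter, needing neither Proposition \ref{s-truncated-CON} nor the outstar networks, only Proposition \ref{ps-3}. The bookkeeping you flag as the delicate point is in fact immediate: in $N(R_1)$ and $N(R_2)$ the chain-to-bottom arcs, the arcs from $\{x,y\}$ to the lower chain, the internal lower-chain arcs, and $K_B$ visibly coincide, leaving only $N_{xy}$ versus $N_{yx}$.
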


\begin{proof} Let $s\in [m-1]$, with $s\geq 2$, be such that $\mathbf{D}(A)\supseteq \mathbf{T}_s(A)$ and let $W\coloneq \mathbb{Q} N(\mathbf{T}_s(A)^*)+\mathcal{R}(A)$.
In order to prove the desired equality, it is enough to show that $W=\mathcal{N}(A)$. Since $W$ is subspace of $\mathcal{N}(A)$, in order to get the equality
$W=\mathcal{N}(A)$ we show that, for every $(x,y)\in A^2_*$, $N_{xy}\in W$.

In order to simplify the notation, assume without loss of generality that $A=[m]$. By standard considerations about permutations of alternatives, it is enough to show that $N_{12}\in W$.

Consider $Q\in \mathbf{T}_s(A)$ be such that $\mathrm{Min}(Q)=A\setminus [s]$ and $1\succ_Q 2 \succ_Q\ldots\succ_Q s$.
Let $p\in \mathbf{T}_s(A)^*$ be such that $\mathrm{Dom}(p)=\{1\}$ and $p(1)=Q$. 
Then, we have 
\begin{equation}\label{ecco}
N(p)=\sum_{i\in[s]}N_i-\sum_{\substack{3\leq u\leq s\\ j\in [u-1]}}N_{uj}-N_{21}+K_{A\setminus [s]}\in W.
\end{equation}
By Proposition \ref{s-truncated-CON}, for every $x\in A$, there exist $p\in\mathbf{T}_s(A)^*$, $k\in \mathbb{N}$  and $R\in \mathcal{R}(A)$ such that $N(p)=kN_x+R.$ Thus, for every $x\in A$, we have that $N_x\in W$. As a consequence, we have that $\sum_{i\in[s]}N_i\in W$. Define now $M\coloneq \sum_{\substack{3\leq u\leq s\\ j\in [u-1]}}N_{u j}$. Then, by \eqref{ecco} and recalling that $K_{A\setminus [s]}\in \mathcal{R}(A)\subseteq W$, we deduce that $M+N_{21}\in W$.

Consider $(1\, 2)\in\mathrm{Sym}(A)$. 
Let $q\in \mathbf{T}_s(A)^*$ be such that $\mathrm{Dom}(q)=\{1\}$ and 
$q(1)=Q^{(1\,2)}$. 
Then we have 
\begin{equation*}
N(q)=\sum_{i\in[s]}N_i-\sum_{\substack{3\leq u\leq s\\ j\in [u-1]}}N_{uj}-N_{12}+K_{A\setminus [s]}\in W,
\end{equation*}
which implies $M+N_{12}\in W$. As a consequence, we have $M+N_{21}+M+N_{12}\in W$. Since $N_{21}+N_{12}\in \mathcal{R}(A)\subseteq W$, we get $M\in V$. Hence also $N_{12}\in W$, as desired.
\end{proof}

\begin{theorem}\label{truncated-all} 
Let $Y\subseteq [m-1]_0$ with $Y\cap [m-1]\neq\varnothing$. Then
$\mathbf{T}_Y(A)^*$ is regular. 
\end{theorem}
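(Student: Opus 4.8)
The plan is to verify the three conditions $(\alpha)$, $(\beta)$, $(\gamma)$ of Definition \ref{regular} for $\mathbf{D}=\mathbf{T}_Y(A)^*$. The hypothesis $Y\cap[m-1]\neq\varnothing$ guarantees a nonempty piece $\mathbf{T}_s(A)$ sitting inside $\mathbf{T}_Y(A)$ for some $s\in[m-1]$, which will be exactly what is needed to feed the structural propositions of this section. Since $\mathbf{T}_Y(A)=\bigcup_{s\in Y}\mathbf{T}_s(A)$ is closed under permutation of alternatives (each $\mathbf{T}_s(A)$ is), the propositions requiring this closure apply directly.

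For $(\alpha)$, closure under permutation of alternatives together with Proposition \ref{new-prop} immediately gives that $\mathbf{T}_Y(A)^*$ is CPA, CA, CWC and CRS. For CON I would fix an $s\in Y\cap[m-1]$, so that $\mathbf{T}_s(A)\subseteq\mathbf{T}_Y(A)$, and invoke Proposition \ref{s-truncated-CON} to conclude that $\mathbf{T}_Y(A)^*$ is CON; note that this step already covers the borderline value $s=1$. Together these yield $(\alpha)$. Condition $(\beta)$ then follows at once from Proposition \ref{beta}, since $\mathbf{T}_Y(A)$ is nonempty and closed under permutation of alternatives.

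The only delicate point is $(\gamma)$: the value of the intersection $\big(\mathbb{Q}N(\mathbf{T}_Y(A)^*)+\mathcal{R}(A)\big)\cap\mathcal{PS}(A)$ genuinely depends on $Y$, so a case split is needed, though fortunately both possible outcomes lie in $\{\mathcal{R}(A),\mathcal{PS}(A)\}$. If $Y$ contains some $s$ with $2\le s\le m-1$, then $\mathbf{T}_s(A)\subseteq\mathbf{T}_Y(A)$ and Proposition \ref{gamma-trunc} yields that the intersection equals $\mathcal{PS}(A)$. Otherwise $Y\cap[m-1]\subseteq\{1\}$, and since $Y\cap[m-1]\neq\varnothing$ we must have $1\in Y$ with $Y\subseteq\{0,1\}$; recalling that $\mathbf{T}_0(A)=\{A^2\}=\mathbf{Di}_m(A)$ and $\mathbf{T}_1(A)=\mathbf{Di}_1(A)$, in this case $\mathbf{T}_Y(A)\subseteq\mathbf{Di}(A)$, so Proposition \ref{gamma-dic} gives that the intersection equals $\mathcal{R}(A)$. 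In either case $(\gamma)$ holds, and the three conditions together establish that $\mathbf{T}_Y(A)^*$ is regular. I expect this case distinction in $(\gamma)$ to be the only real obstacle, precisely because Proposition \ref{gamma-trunc} excludes $s=1$ and one must reroute that borderline situation through the dichotomous-orders machinery of Proposition \ref{gamma-dic}.
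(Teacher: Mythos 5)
Your proof is correct, and every proposition you invoke is applied within its stated hypotheses. The route differs mildly but genuinely from the paper's in where the case split is placed: the paper splits at the top level, observing that when $Y\cap[m-1]=\{1\}$ one has $Y=\{1\}$ or $Y=\{0,1\}$, so that $\mathbf{T}_Y(A)^*$ equals $\mathbf{Di}_1(A)^*$ or $\mathbf{Di}_{\{1,m\}}(A)^*$, and then cites Theorem \ref{dicotomici-all} wholesale for regularity in that case; only when $Y$ contains some $s\ge 2$ does the paper verify $(\alpha)$, $(\beta)$, $(\gamma)$ directly (via Propositions \ref{new-prop}, \ref{s-truncated-CON}, \ref{beta}, \ref{gamma-trunc}). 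You instead verify $(\alpha)$ and $(\beta)$ uniformly for every admissible $Y$ — correctly exploiting that Proposition \ref{s-truncated-CON} covers $s=1$ and that Proposition \ref{beta} needs only nonemptiness and closure under permutation of alternatives — and confine the case distinction to $(\gamma)$, handling the borderline case $Y\subseteq\{0,1\}$ through the inclusion $\mathbf{T}_Y(A)\subseteq\mathbf{Di}(A)$ and Proposition \ref{gamma-dic} rather than through the full dichotomous theorem. Both arguments rest on the same machinery; yours buys uniformity and isolates precisely which condition is sensitive to $Y$ (namely $(\gamma)$, whose value switches between $\mathcal{PS}(A)$ and $\mathcal{R}(A)$ according to whether $Y$ contains an $s\ge 2$), while the paper's wholesale reroute is textually shorter because regularity in the borderline case comes prepackaged from Theorem \ref{dicotomici-all}.
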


\begin{proof} If  $Y\cap [m-1]=\{1\}$, then we have $Y=\{1\}$ and hence $\mathbf{T}_Y(A)^*=\mathbf{T}_1(A)^*=\mathbf{Di}_1(A)^*$, or $Y=\{0,1\}$ and $\mathbf{T}_Y(A)^*=\mathbf{T}_{\{0,1\}}(A)^*=\mathbf{Di}_{\{1,m\}}(A)^*$. In both cases,
 by Theorem \ref{dicotomici-all}, we have that 
$\mathbf{T}_Y(A)^*$ is regular.
Assume now that $Y\cap [m-1]\neq \{1\}$. Thus, there exists $s\in Y\cap [m-1]$ with $s\ge 2$ and $\mathbf{T}_Y(A)^*\supseteq \mathbf{T}_s(A)^*$.
We prove that $\mathbf{T}_Y(A)^*$ satisfies the conditions $(\alpha)$, $(\beta)$ and $(\gamma)$ of Definition \ref{regular}.
Since $\mathbf{T}_Y(A)$ is closed under permutation of alternatives, by Proposition \ref{new-prop}, we have that 
$\mathbf{T}_Y(A)^*$ is CPA, CA and CRS. Thus, condition $(\alpha)$  is satisfied. 
By Proposition \ref{beta} we deduce that condition $(\beta)$ is satisfied.
By Proposition \ref{gamma-trunc} we deduce that condition $(\gamma)$ is satisfied.
By Proposition \ref{s-truncated-CON} we deduce that condition $(\epsilon)$ is satisfied. 
\end{proof}

\end{document}